
\documentclass[12pt]{article}
\usepackage{amsmath}
\usepackage{amssymb}
\usepackage{amsthm}
\usepackage{fullpage}
\usepackage[utf8]{inputenc}
\usepackage{mathtools}
\usepackage{url}
\usepackage[usenames,svgnames]{xcolor}
\usepackage{cite}
\usepackage{enumitem}
\usepackage{datetime}
\usepackage[titletoc,title]{appendix}
\usepackage{comment}
\usepackage{mathdots}
\usepackage{graphicx}

\usepackage[pagebackref=false]{hyperref} 
\usepackage[all]{hypcap} 

\theoremstyle{plain}
\newtheorem{theorem}{Theorem}[section]
\newtheorem{lemma}[theorem]{Lemma}

\newtheorem{proposition}[theorem]{Proposition}

\newtheorem{example}{Example}[section]
\newtheorem{examples}{Example}[subsection]

\newtheorem{remark}{Remark}[section]

\theoremstyle{definition}
\newtheorem{definition}{Definition}[section]

\numberwithin{equation}{section} 

\hypersetup{
breaklinks,
colorlinks,
citecolor=Green,
linkcolor=BlueViolet,
urlcolor=DarkBlue,
pdftitle={Weighted Hurwitz numbers , constellations and topological recursion},
pdfauthor={A. Alexandrov, G. Chapuis, B. Eynard, J. Harnad } }



\newcommand{\mat}[1]{\mathbf{#1}}


\DeclareMathOperator{\cyc}{cyc}

\DeclareMathOperator{\aut}{aut}
\DeclareMathOperator{\Res}{Res}


\def\ra{{\rightarrow}}

\def\Tr{\mathrm {Tr}}

\def\det{\mathrm {det}}

\def\span{\mathrm {span}}

\def\ln{\mathrm {ln}}

\def\span{\mathrm {span}}


\DeclarePairedDelimiter{\no}{:}{:}

\def\be{\begin{equation}}
\def\ee{\end{equation}}

\def\bea{\begin{eqnarray}}
\def\eea{\end{eqnarray}}

\def\bt{\begin{theorem}}
\def\et{\end{theorem}}

\def\bex{\begin{example}\small \rm}
\def\eex{\end{example}}

\def\bexs{\begin{examples}\small \rm}
\def\eexs{\end{examples}}

\def\ra{\rightarrow}
\def\ss{\subset}
\def\deq{\coloneqq}

\def\br{\begin{remark}\small \rm}
\def\er{\end{remark}}


\def\&{&{\hskip -20pt}}


\def\OO{\mathcal{O}}
\def\PP{\mathcal{P}}
\def\QQ{\mathcal{Q}}

\def\WW{\mathcal{W}}

\def\Cb{\mathbf{C}}

\def\Ib{\mathbf{I}}

\def\Nb{\mathbf{N}}

\def\Nb{\mathbf{N}}
\def\Pb{\mathbf{P}}

\def\Rb{\mathbf{R}}

\def\Vb{\mathbf{V}}

\def\Zb{\mathbf{Z}}




\hyphenation{applying}
\hyphenation{number}
\hyphenation{coverings}
\hyphenation{content}
\hyphenation{matrix}
\hyphenation{product}
\hyphenation{proposition}
\hyphenation{without}
\hyphenation{second}
\hyphenation{Riemann}
\hyphenation{Hurwitz}
\hyphenation{assoc-iated}
\hyphenation{multi-current}

\begin{document}
\baselineskip 16pt
\medskip
\begin{center}
\begin{Large}\fontfamily{cmss}
\fontsize{17pt}{27pt}
\selectfont
\textbf{Weighted Hurwitz numbers and topological recursion: an overview}
\end{Large}
\\
\bigskip \bigskip
\begin{large}  A. Alexandrov$^{1, 2, 3}$\footnote{e-mail: alexandrovsash@gmail.com},   G. Chapuy$^{4}$\footnote{e-mail: guillaume.chapuy@liafa.univ-paris-diderot.fr}, 
B. Eynard$^{2, 5}$\footnote{e-mail:  bertrand.eynard@cea.fr}  and J. Harnad$^{2, 6}$\footnote {e-mail: harnad@crm.umontreal.ca  }
 \end{large}\\
\bigskip
\begin{small}
$^{1}${\em Center for Geometry and Physics, Institute for Basic Science (IBS), Pohang 37673, Korea}\\
 \smallskip
$^{2}${\em Centre de recherches math\'ematiques,
Universit\'e de Montr\'eal\\ C.~P.~6128, succ. centre ville, Montr\'eal,
QC H3C 3J7,  Canada}\\
 \smallskip
 $^{3}${\em ITEP, Bolshaya Cheremushkinskaya 25, 117218 Moscow, Russia} \\
 \smallskip
$^{4}${\em CNRS UMR 7089, Universit\'e Paris Diderot \\
 Paris 7, Case 7014, 75205 Paris Cedex 13, France}\\
 \smallskip
 $^{5}${\em Institut de Physique Th\'eorique, CEA, IPhT \\
 CNRS   URA 2306, F-91191 Gif-sur-Yvette, France} \\
 \smallskip
$^{6}${\em Department of Mathematics and Statistics, Concordia University\\ 1455 de Maisonneuve Blvd.~W.~Montreal, QC H3G 1M8,  Canada}\\
\smallskip

\end{small}
\end{center}

\begin{abstract}
 Multiparametric families of hypergeometric  $\tau$-functions  of KP or Toda type  serve as generating functions 
 for weighted Hurwitz numbers, providing weighted enumerations of branched covers of the Riemann sphere. 
  A graphical  interpretation of the weighting  is given in terms of constellations mapped onto the covering surface.
  The theory is placed within the framework of topological recursion, with the Baker function at ${\bf t} ={\bf 0}$ 
 shown to satisfy the quantum spectral curve equation, whose classical limit is rational.  A basis 
 for the space of formal power series  in the spectral variable is generated that is adapted to the Grassmannian 
 element associated to the $\tau$-function. Multicurrent correlators are defined in terms of the $\tau$-function and 
 shown to  provide an alternative generating function for weighted Hurwitz numbers.  Fermionic VEV representations 
 are provided for the adapted  bases, pair correlators and multicurrent correlators.   
 Choosing the weight generating function as a polynomial,  and  restricting the number of  nonzero ``second'' KP flow 
 parameters  in the Toda $\tau$-function to be finite implies a finite rank covariant derivative equation with rational coefficients
satisfied by a finite ``window'' of adapted basis elements.  The pair correlator is shown to
provide a Christoffel-Darboux type finite rank integrable kernel, and the WKB series coefficients
of the associated adjoint system are computed recursively, leading to topological recursion relations
for the generators of the weighted Hurwitz numbers.

     \end{abstract}

\tableofcontents

\break

\section{Weighted Hurwitz numbers and  generating functions }

The study of weighted enumerations of branched coverings of the Riemann sphere
was initiated by Hurwitz \cite{Hur}. It was subsequently related to the character theory of the symmetric group
by Frobenius \cite{Frob}, and  computations of these enumerative invariants were
obtained for low genus and simple branching structures. Interest was more recently revived by the work of Okounkov and Pandharipande \cite{Ok,Pa}, who showed that certain special forms of  the Sato-Segal-Wilson \cite{SS, Sa, SW} KP $\tau$-function 
and 2D-Toda $\tau$-functions, central to the modern theory of integrable systems,  serve as generating functions for ``simple'' Hurwitz numbers, which enumerate branched covers in which all but one or two of the specified branching profiles are $2$-cycles. 

Further special cases, relating to the enumeration of  monotonic paths in the Cayley graph of the symmetric group \cite{GGN1, GH1},
three point branched covers \cite{AC1, AC2, KZ, Z,AMMN2}, coverings of $\Rb\Pb^2$  \cite{NaOr} and certain cases of matrix integrals \cite{BEMS,AC1,MSh,AM} were subsequently considered, and their generating functions similarly shown  to be special cases of $\tau$-functions of hypergeometric type \cite{OrSc1, OrSc2, Or}. The most general case of weighted Hurwitz numbers, with weight generating functions depending on an infinite number of weighting parameters, was developed in  \cite{GH1} - \cite{HO}. 

The enumeration of branched coverings is classically equivalent to the enumeration of {\it constellations}, which are special types of bipartite graphs mapped onto Riemann surfaces and ``decorated'' in several possible ways \cite{LZ}. The enumerative study of maps with respect to their genus invariant has been considerably developed by combinatorists since the pioneering work of Tutte in the planar case~\cite{Tutte} and Bender and Canfield~\cite{BC} for higher genus. (The reader is referred to~\cite{Schaeffer} for an accessible introduction, historical references, and a modern entry to the rapidly growing literature on this topic.)

In recent years, the topological recursion program \cite{EO1, EO2, EO3}  has been a significant development  in the study of enumerative, combinatorial, geometric and topological invariants associated to maps, Riemann surfaces, moduli spaces
and knots. It has led to an explicit  finite recursive algebraic scheme for computing such invariants.
This begins with the construction of an associated ``quantum curve'' which, in general, is an ordinary differential or difference
equation in the spectral parameter satisfied by the associated Baker function, whose WKB formal series solution,
expanded in a suitable ``small parameter'' is studied recursively. Partial results on the inclusion of Hurwitz numbers in this
scheme have been obtained in  \cite{ALS, BEMS, KZ, MSS, MS, DLN, DDM}, mainly for the special cases mentioned above. In the 
present work, it is shown how a broad  class of multiparametrically weighted Hurwitz numbers can be systematically placed  into the topological recursion program,  and the associated spectral curve, both quantum and classical, determined explicitly in terms of the
weight generating function. 

The aim of this work is to provide an accessible introduction that summarizes the main results in this direction.
All relevant constructions are indicated, and the principal results stated. Some  proofs are provided in detail; others are only
sketched in outline, with full details provided in one of two companion publications \cite{ACEH1, ACEH2}. 
 The results include: explicit parametrization of both the classical and quantum spectral curves;  the system of  recursion relations satisfied  by dual pairs of adapted bases for the underlying space of formal power series  in a spectral parameter; construction of  suitably defined multipair correlators and multicurrent correlators, which serve as generating functions for weighted Hurwitz numbers; expressions for all relevant quantities as fermionic vacuum state expectation values; a finite rank expression for the Fredholm integral kernel associated to the pair correlator, analogous to the Christoffel-Darboux  kernel appearing in the theory of orthogonal polynomials and, finally, the topological recursion relations  satisfied by the multiform weighted Hurwitz number generators.

\subsection{Geometric and combinatorial definitions of  Hurwitz numbers}
\label{pure_hurwitz_def}

Given a set of $k$ partitions  of $N\in \Nb^+$, $\{\mu^{(i)}\}_{i=1, \dots, k}$  , $|\mu^{(i)}|=N$, we define the {\it pure Hurwitz number}
$H(\mu^{(1)}, \dots , \mu^{(k)})$ geometrically as  the number of inequivalent $N$-sheeted branched coverings of the Riemann sphere with $k$ branch points having ramification profiles $\{\mu^{(i)}\}_{i=1, \dots, k}$, normalized by the inverse of the order of the automorphism group of the cover. 
The Riemann-Hurwitz theorem expresses the Euler characteristic of the covering surface as
\be
\chi = 2-2g = 2N - \sum_{i=1}^k \ell^*(\mu^{(i)}),
\ee
where $\ell^*(\mu)$ denotes the {\it colength}  of the partition $\mu$ (i.e.  the complement 
of its length)
\be
\ell^*(\mu):= |\mu| - \ell(\mu).
\ee

Combinatorially, it may equivalently be defined as the number of distinct ways the identity element $\Ib \in S_N$ in the symmetric
group $S_N$ may be factorized
\be
\Ib = h_1 \cdots h_k
\label{Id_factoriz}
\ee
as a product of $k$ elements $h_i \in \cyc(\mu^{(i)})$ in the conjugacy classes
of cycle type $\mu^{(i)}$; i.e., whose cycle lengths are the parts of the partition $\mu^{(i)}$, multiplied by the normalization factor $1/N!$.
The Frobenius-Schur formula \cite{LZ}[Appendix A] expresses this in terms of the irreducible group characters  $\chi_\lambda(\mu)$ of $S_N$. 
\be
H(\mu^{(1)}, \dots , \mu^{(k)}) = \sum_{|\lambda|=n}h_\lambda^{k-2} \prod_{i=1}^k {\chi_\lambda(\mu^{(i)}) \over z_{\mu^{(i)}}}.
\ee
Here $h_\lambda$ is the product of the hook lengths of  the Young diagram of the partition $\lambda$  determining the irreducible representation and $z_\mu$ is the order of the stabilizer of an element of the conjugacy class of cycle type $\cyc(\mu)$. 

The equivalence of the two definitions follows from the homomorphism from the fundamental
group of the sphere punctured at the branch points to the symmetric group $S_N$ generated by
monodromy, which equates the identity element to the product of monodromies of paths consisting of  
simple loops around a suitably ordered sequence of all branch points.

\subsection{Weighted Hurwitz numbers }
\subsubsection{Weight generating function $G(z)$ }
\label{weight_gen_G}
Following refs.~\cite{GH1} -\cite{HO}, we define multiparametric weighted Hurwitz numbers by first  introducing
a generating function $G(z)$ for the weights. This  admits either a formal power
series representation
\be
G(z) = 1 + \sum_{i=1}^\infty g_i z^i
\label{G_inf_sum}
\ee
or an infinite product one
\be
G(z) = \prod_{i=1}^\infty (1+c_i z), 
\label{G_inf_prod}
\ee
(or a limiting form of the latter),  where ${\bf c}= (c_1, c_2, \dots )$ denotes an infinite sequence of 
indeterminates that may be evaluated to real or complex constants. If the two  expansions are
compared, with (\ref{G_inf_sum}) viewed as a formal series, this may be interpreted as the generating function
 for the elementary symmetric functions in the indeterminates $(c_1, c_2, \dots )$.
\be
g_i = e_i ({\bf c}).
\ee

\subsubsection{Geometrical definition: weighted coverings}
\label{geometric}

\begin{definition}
Picking a pair of partitions $(\mu, \nu)$ of $N$, and a positive integer $d\in \Nb^+$, the {\it geometrically weighted
 single and double Hurwitz numbers}  $H_{G}^d(\mu)$,  $H_{G,}^d(\mu, \nu)$ associated with the generating function 
 $G$ are defined  to be the sums
   \bea
   H^d_G(\mu) &\&\deq \sum_{k=0}^d \sideset{}{'}\sum_{\substack{\mu^{(1)}, \dots, \mu^{(k)} \\ |\mu^{(i)}| = N \\ \sum_{i=1}^k \ell^*(\mu^{(i)})= d}}
\WW_G(\mu^{(1)}, \dots, \mu^{(k)})H(\mu^{(1)}, \dots, \mu^{(k)}, \mu) ,
\label{Hd_G_mu} \\
H^d_G(\mu, \nu) &\&\deq \sum_{k=0}^d \sideset{}{'}\sum_{\substack{\mu^{(1)}, \dots, \mu^{(k)} \\ |\mu^{(i)}| = N \\ \sum_{i=1}^k \ell^*(\mu^{(i)})= d}}
\WW_G(\mu^{(1)}, \dots, \mu^{(k)})H(\mu^{(1)}, \dots, \mu^{(k)}, \mu, \nu) ,
\label{Hd_G_mu_nu}
\eea
where $\sideset{}{'}\sum$ denotes a sum over all  $k$-tuples of partitions 
$\{\mu^{(1)}, \dots, \mu^{(k)}\}$ of $N$ other than the cycle type of the identity element $(1^N)$ and
 the weights $\WW_G(\mu^{(1)}, \dots, \mu^{(k)})$ are given by
 \be
 \WW_G(\mu^{(1)}, \dots, \mu^{(k)}):=
 {1\over k!}\sum_{\sigma \in \mathfrak{S}_{k}} \sum_{1 \le b_1 < \cdots < b_{k }} 
 c_{b_{\sigma(1)}}^{\ell^*(\mu^{(1)})} \cdots c_{b_{\sigma(k)}}^{\ell^*(\mu^{(k)})} = {|\aut(\lambda)|\over k!} m_\lambda ({\bf c}).
 \label{Wg_def}
 \ee
Here  
\be
m_\lambda ({\bf c}) = {1\over |\aut(\lambda)|}\sum_{\sigma \in \mathfrak{S}_{k}} \sum_{1 \le b_1 < \cdots < b_{k }}
 c_{b_{\sigma(1)}}^{\lambda_1} \cdots c_{b_{\sigma(k)}}^{\lambda_{k}} ,
  \label{m_lambda}
\ee
is the monomial sum symmetric function \cite{Mac} of the indeterminates ${\bf c}$
corresponding to the partition $\lambda$  of length $k$ and weight $|\lambda|= d$
whose parts $\{\lambda_i\}$ are the colengths $\{\ell^*(\mu^{(i)})\}$, written in weakly descending order,
\be
\{\lambda_i\}_{i=1, \dots k} \sim \{\ell^*(\mu^{(i)})\}_{i=1, \dots k}
\ee
with normalization factor 
\be
|\aut(\lambda)| := \prod_{i} m_i(\lambda)! 
\ee
equal to the order of the automorphism group of the partition $\lambda$, 
where $m_i(\lambda)$ is the number of parts of $\lambda$ equal to~$i$. 
\end{definition}

 If the sums in (\ref{Hd_G_mu}) and (\ref{Hd_G_mu_nu})
are taken over connected coverings only  (or, equivalently, the factors $(h_1, \dots, h_k)$ are required to generate
a subgroup of $S_N$ that acts transitively), we denote the corresponding sums 
 $\tilde{H}^d_G(\mu)$ and $\tilde{H}^d_G(\mu, \nu)$, respectively.

\subsubsection{Combinatorial definition: weighted constellations}
\label{weighted_hurwitz_def}

Another  way to arrive at weighted Hurwitz numbers is through weighted 
{\it constellations} \cite{LZ, BMS, Ch}. In \cite{LZ} a constellation of length $k$ and degree $N$
is defined  as a factorization (\ref{Id_factoriz}) of the identity element in $\Ib \in S_N$ into a product $h_1\cdots h_k$
such that the group $\langle h_1, \dots, h_k\rangle$ they generate acts transitively
on the set $(1, \dots, N)$. Here, we consider an enhanced version,  in which a weighting
as assigned to all vertices and edges, and the total weight of the constellation is the product of these.
We further consider $k+2$ constellations of  degree (or weight) $N$,  since these correspond
to  $N$-sheeted branched covers of the Riemann sphere in  ramification points over
$k$ of the branch points are assigned a distinct weight in the enumeration, while two of them 
(say, over $0$ and $\infty$) are weighted differently in the interpretation of the $\tau$-function (\ref{tau_G_beta_gamma}),
(\ref{tau_G_tilde_H}) below as a generating function.
   The graph theoretic equivalent of a $k+2$-constellation of degree (or weight)  $N$ is a bipartite graph on a Riemann surface 
whose two types of vertices are either  {\it coloured} or  {\it star} vertices, with all faces homeomorphic to a disc.
(Thus, it is a  {\it map} with certain additional special structure.) There are $N$ star  vertices, 
numbered consecutively from $1$ to $N$ and at most $(N-1)(k+2)$ coloured vertices carrying
one of $k$ colours, labelled by distinct positive integers $b_i$ with $i$  ranging from from $1$ to $k$, 
plus a further two, labelled $0$ and $\infty \equiv k+1$, denoting, in our adaptation, the special case
of ``white'' and ``black'' vertices respectively. Each star vertex has an edge connecting it to all $k+2$ coloured vertices, consecutively
labelled counter-clockwise. The number of coloured vertices of given colour $i$ is equal to the number 
of parts $\ell(\mu^{(i)})$ of the partition $\mu^{(i)} $ (i.e., its {\it length}), or $\ell(\mu)$ for the ``white'' vertices
and $\ell(\nu)$ for the ``black'' ones.  As an illustration, see the example of Figure~\ref{fig:exampleConstellationBis} 
where $N=5, k=3$.

 \begin{figure}
\begin{center}
\includegraphics{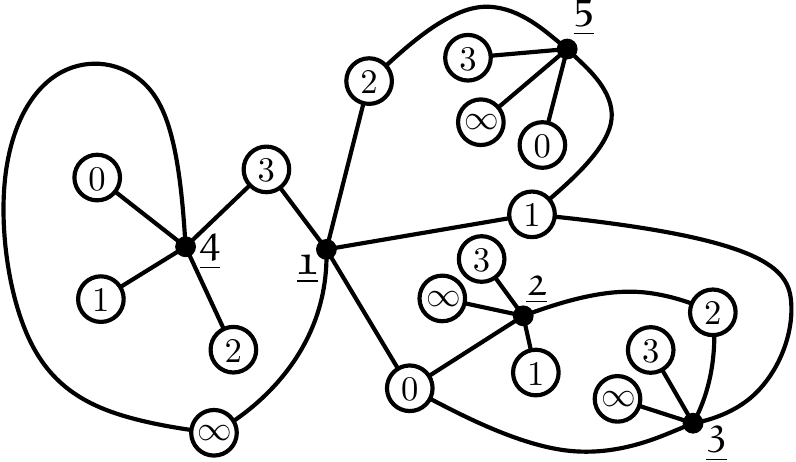}
\end{center}
\caption{\footnotesize  An example of a  constellation with $N=5$, $k=3$\label{fig:exampleConstellationBis}, corresponding to the values $ N =5,  \quad k=3$ and to the factorization $h_0h_1h_2h_3h_4=1$ with 
$h_1  = (135), \quad h_2 =  (15) (23), \quad h_3 = (14),  
   h_0 = (321),  \quad   h_4 = h_\infty = (14)$, with corresponding partitions 
 $\mu^{(1)} = (3, 1, 1)),  \quad \mu^{(2)} = (2, 2, 1), \quad  \mu^{(3)} = (2, 1,1,1),\ 
\mu:= \mu^{(0)}  = (3, 1, 1) ), \quad   \nu := \mu^{(4)}= (2, 1, 1, 1)
	 $. }
\end{figure}

Associated to each coloured vertex  is a cycle in $S_N$ obtained by listing the star vertices
connected to it by an edge in counterclockwise order. Taking the product of the distinct cyclic permutations 
associated to each colour $i = 0, \dots, k$, we obtain an element $h_i\in S_N$,  with cycle type $\cyc(\mu^{(i)})$, where $\mu^{(i)}$ 
is the partition of $N$ whose parts are the cycle lengths of $h_i$.  The element associated to the white vertices is denoted $h_0=h$
and its cycle type denoted $\mu:= \mu^{(0)}$, the one associated to the black vertices whose cycle type is denoted $\nu:= \mu^{(k+1)}$
 is $h_{k+1}$ and the other coloured vertices are associated to the elements $\{h_i\}_{i=1, \dots, k}$,  with cycle types $\{\mu^{(i)}\}_{i=1, \dots, k}$
respectively, and  satisfy
\be
\Ib = h_0 h_1 \cdots h_{k+1}
\label{h_factorization}
\ee

To obtain the constellation associated with a particular branched cover of $\Cb\Pb^1$, we place a coloured vertex
of colour $i$ at each of the ramification points $\{Q^{(i)}_j\}_{j=1 , \dots \ell(\mu^{(i)})}$ over a given branch point 
$\{Q^{(i)}\}_{i=0, \dots , k+1}$, and choose a single generic (non-branched) 
point $P\in \Cb\Pb^1$ over which we have, in some specified order, the $N$ points $(P_1, \dots, P_N)$, which are identified as the
star vertices. For every simple loop starting and ending at $P_i$, going round one of the branch points, we have the
monodromy lift on the covering surface which permutes the points $\{P_i\}$   in such a way that each ramification point of colour $i$ defines
a disjoint cycle, with the product of all cycles over a given branch point $Q^{(i)}$ equal to the monodromy element $h_i$. 
The edges connect the ramification points to those $P_i$'s that belong to the given cycle.
(See Figure~\ref{fig:pathABCBis}.)


 \begin{figure}
\begin{center}
\includegraphics{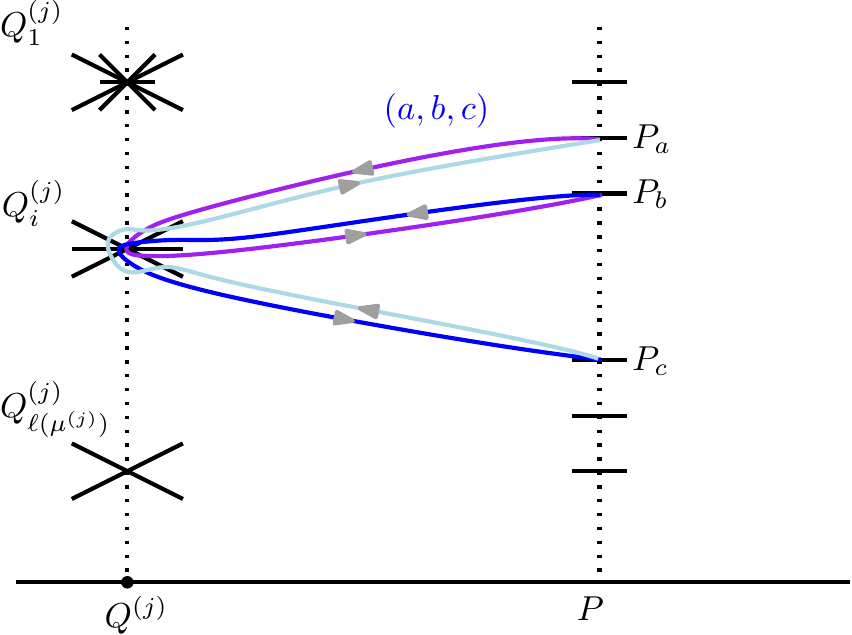}
\end{center}

\caption{\footnotesize Lifted loops and edges }\label{fig:pathABCBis}

\end{figure}

In order to define the weighting, we add a further pair of indeterminates $(\beta, \gamma)$, 
 that serve as expansion parameters, and may, for weighting purposes, be evaluated to real or complex numbers.
 We then assign, to each (non-white, non-black) coloured  vertex $Q^{(i)}_j$ of colour  $b_i$,  a weight $\beta^{-1} c_{b_i}^{-1}$; 
 to each edge connecting  non-white, non-black vertex  $Q^{(i)}_j$ to a star vertex, the weight  $\beta c_{b_i}$
 to each white vertex, a weight 
\be
p_{\mu_j} = \mu_j t_{\mu_j} , \quad j=1, \dots, \ell(\mu)
\label{com1}
\ee
equal to the power sum symmetric function in an arbitrary number of auxiliary variables, where 
$\mu_i$ is equal to the number of edges connecting that white vertex to star vertices;
 to each black vertex, a weight 
\be
p_{\nu_j} = \nu_j s_{\nu_j} , \quad j=1, \dots, \ell(\nu)
\label{com1}
\ee
equal to the power sum symmetric function in a second set of auxiliary variables, where 
$\nu_i$ is equal to the number of edges connecting that black vertex to star vertices;
to the edges connecting the white or black ramification points $\{ Q^{(0)}_j, Q^{k+1)}_j\}$
to any  star vertex, the weight $1$, and to each of the star vertices, the weight 
\be
\gamma
\ee
The doubly infinite set of parameters ${\bf t} =(t_1, t_2, \dots)$ 
and ${\bf s} = (s_1, s_2, \dots)$  are interpreted as the two independent KP flow parameters that appear 
in the $2D$-Toda $\tau$-function \cite{SS, SW, Takeb, UTa}.

Taking the product of these  over all vertices, edges and faces gives the overall weight
\be
{\gamma^N \beta^d \over N!}  \prod_{i=1}^k c_{b_i}^{\ell^*(\mu^{(i)})} p_\mu ({\bf t}) p_\nu ({\bf s})
\ee
for the constellation, where the standard normalization factor ${1\over N!}$ is included in the
enumeration, 
\be
d := \sum_{i=1}^k \ell^*(\mu^{(i)})
\ee
is related to the Euler characteristic of the surface by the Riemann-Hurwitz relation
\be
2 - 2g = 2N - d
\ee
and
\be
p_\mu({\bf t}) := \prod_{i=1}^{\ell(\mu)} p_{\mu_i}, \quad p_\nu({\bf s}) := \prod_{i=1}^{\ell(\nu)} p'_{\nu_i}.
\ee
Summing these over all such weighted constellations, with all possible choices of $k$ distinct colours, 
chosen from amongst the indices labelling the nonzero constants $c_i$ in (\ref{G_inf_prod}), 
corresponding to a factorization (\ref{h_factorization}) with the same cycle types, permuted in all ways, 
 normalized by ${1\over |\aut (\lambda)|}$, where $\lambda$ is the partition of weight
\be
|\lambda| = d:= \sum_{i=1}^k \ell^*(\mu^{(i)})
\label{lambda_def}
\ee
whose parts are the colengths $\{\ell^*(\mu^{(i)})\}$, suitably ordered, and summing over the integer $k$,
we obtain the total weighting factor $m_\lambda (c_1, c_2, \dots )$ of (\ref{m_lambda}), multiplied
by the Hurwitz number $H(\mu, \mu^{(1)}, \dots, \mu^{(k)}, \nu)$ which enumerates the number
of products (\ref{h_factorization}) with factors in the conjugacy classes $(\mu, \mu^{(1)}, \dots, \mu^{(k)}, \nu)$, and recover
the $\tau$-function (\ref{tau_G_H}) as generating function, as explained below.

\section{Hypergeometric $\tau$-function, Baker function,  recursions and correlators}
\label{tau_baker_recursion}
\subsection{Hypergeometric $\tau$-functions as generating functions for weighted Hurwitz numbers}
\label{tau_functions}

Following refs.~{\cite{GH1, GH2, H1, HO}, we define  a family of 2D Toda $\tau$-functions
of hypergeometric type \cite{KMMM, OrSc1, OrSc2} associated to the generating function $G$ and  a pair $(\beta, \gamma)$ of complex parameters by the  double Schur function \cite{Mac} expansion:
\be
  \tau^{(G, \beta, \gamma)} ({\bf t}, {\bf s})  = \sum_{\lambda} \gamma^{|\lambda|} r^{(G,\beta)}_\lambda s_\lambda ({\bf t}) s_\lambda ({\bf s}), 
  \label{tau_G_beta_gamma}
\ee
where the Schur functions $s_\lambda({\bf t}), s_\lambda({\bf s})$ are
viewed as functions of two infinite sequences of 2-Toda flow variables ${\bf t}=(t_1, t_2, \dots)$, ${\bf s}=(s_1, s_2, \dots)$ 
and the {\em content product} coefficients $r_\lambda^{(G, \beta)} $ are defined in terms of the weight 
generating function $G$ as
\be
r_\lambda^{(G,\beta)}\deq   \prod_{(i,j)\in \lambda} r_{j-i}^{G}(\beta),
\label{r_lambda_G}
\ee
where
\be
r_j^{G}( \beta):= G(j \beta ).
\ee
Denote by $\mathbb K = \mathbb Q[g_1,\dots, g_M, \cdots] $
the algebra of polynomials in the $g_k$'s, with rational coefficients or, equivalently,
the algebra of symmetric functions of the $c_i$'s.
In the following, it will also be useful to define  the quantities $\{\rho_j, T_j\}_{j\in \Zb}$ by
\bea
 r_j^{G}(\beta)&\& = {\rho_j \over \gamma \rho_{j-1}},\\
\rho_j &\&:= \gamma^j \prod_{i=1}^j G(i\beta) =  e^{T_j} , \quad \rho_0 =1, 
\label{rho_j+}
\\
 \rho_{-j} &\&:= \gamma^{-j}\prod_{i=0}^{j-1} (G(-i \beta))^{-1} = e^{T_{-j}} , \quad j=  1, 2 \dots. \label{rho_j_gamma_G} 
 \label{rho_j-}
\eea
Moreover, it is easily seen that there exists a formal power series $T(x)$ such that
\be
e^{T(x)-T(x-1)}=\gamma G(\beta x),
\ee
with $T(0)=0$, $T(i)=T_i$. 

As shown in refs.~\cite{GH1, GH2, H1, HO}, identifying the flow parameters  as
 \bea
t_i &\&= {p_i\over i}, \quad s_i = {p'_i\over i},\cr
  p_\mu ({\bf t}) &\&:= \prod_{i=1}^{\ell(\mu)}p_{\mu_i}  , \quad p_\nu({\bf s}) := \prod_{i=1}^{\ell(\nu)} p'_{\nu_i} 
\eea
and changing to the basis of power sum symmetric functions \cite{Mac},
the $\tau$-function (\ref{tau_G_beta_gamma}) serves as a generating function for the weighted Hurwitz numbers 
defined above.
\begin{theorem}
\be
\tau^{(G, \beta, \gamma)} ({\bf t}, {\bf s})  = \sum_{d=0}^\infty \beta^d \sum_{\substack{\mu, \nu \\ |\mu|=|\nu|}} \gamma^{|\mu|}  H^d_G(\mu, \nu) p_\mu({\bf t}) p_\nu({\bf s}).
\label{tau_G_H}
\ee
\end{theorem}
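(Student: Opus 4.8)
The plan is to prove the identity by expanding both sides in the power-sum basis $\{p_\mu(\mathbf{t})p_\nu(\mathbf{s})\}$ and matching coefficients; the content-product structure of the $\tau$-function will be shown to reproduce the character-theoretic form of the weighted Hurwitz numbers through the spectrum of the Jucys--Murphy elements. First I would rewrite the left-hand side: substituting the character expansion $s_\lambda = \sum_{\mu}z_\mu^{-1}\chi_\lambda(\mu)\,p_\mu$ into the Schur-function form (\ref{tau_G_beta_gamma}) and collecting terms, the coefficient of $p_\mu(\mathbf{t})p_\nu(\mathbf{s})$ (for $|\mu|=|\nu|=:N$) becomes
\[
\frac{\gamma^{N}}{z_\mu z_\nu}\sum_{|\lambda|=N} r^{(G,\beta)}_\lambda\,\chi_\lambda(\mu)\chi_\lambda(\nu).
\]

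On the right-hand side I would insert the Frobenius--Schur formula for $H(\mu^{(1)},\dots,\mu^{(k)},\mu,\nu)$, which has $k+2$ branch points and hence carries the factor $h_\lambda^{k}$, and then use $h_\lambda = N!/\dim\lambda$ to rewrite $h_\lambda\,\chi_\lambda(\mu^{(i)})/z_{\mu^{(i)}}$ as the central character $\hat\chi_\lambda(\mu^{(i)})$, i.e. the scalar by which the class sum $C_{\mu^{(i)}}$ acts on the irreducible module $V_\lambda$. After pulling out the common factor $\chi_\lambda(\mu)\chi_\lambda(\nu)/(z_\mu z_\nu)$, matching coefficients reduces the whole theorem, for each fixed $\lambda$, to the single content-product expansion
\[
r^{(G,\beta)}_\lambda = \sum_{d\ge 0}\beta^{d}\sum_{k=0}^{d}\sideset{}{'}\sum_{\substack{\mu^{(1)},\dots,\mu^{(k)}\\ \sum_i\ell^*(\mu^{(i)})=d}} \WW_G(\mu^{(1)},\dots,\mu^{(k)})\prod_{i=1}^{k}\hat\chi_\lambda(\mu^{(i)}).
\]

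Next I would evaluate this sum using the explicit form (\ref{Wg_def}) of $\WW_G$. Since $\WW_G$ is a normalized sum over injective colour assignments $i\mapsto b_{\sigma(i)}$, the symmetrization by $1/k!$ together with the free sum over the ordered tuple $(\mu^{(1)},\dots,\mu^{(k)})$ factorizes the colour sum, so that the right-hand side collapses to the infinite product $\prod_{m\ge1}\bigl(1+F_m(\lambda)\bigr)$ with $F_m(\lambda):=\sum_{\mu\vdash N,\ \mu\neq(1^N)}(\beta c_m)^{\ell^*(\mu)}\hat\chi_\lambda(\mu)$. Treating the $c_m$ as formal indeterminates, so that the monomial symmetric function $m_\lambda(\mathbf c)$ in $\WW_G$ is reproduced coefficientwise, is what makes this factorization legitimate. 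It then remains to identify each factor $1+F_m(\lambda)$ with $\prod_{(i,j)\in\lambda}(1+c_m\beta(j-i))$, after which multiplying over $m$ and using $G(z)=\prod_m(1+c_m z)$ yields $\prod_{(i,j)\in\lambda}G((j-i)\beta)=r^{(G,\beta)}_\lambda$, as required.

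The crux, and the step I expect to be the main obstacle, is the core Jucys--Murphy identity
\[
\prod_{(i,j)\in\lambda}\bigl(1+y(j-i)\bigr)=\sum_{\mu\vdash N}y^{\ell^*(\mu)}\hat\chi_\lambda(\mu),\qquad y=c_m\beta,
\]
which is exactly the eigenvalue on $V_\lambda$ of the central-algebra identity $\prod_{a=1}^{N}(1+yJ_a)=\sum_{\mu\vdash N}y^{\ell^*(\mu)}C_\mu$, where $J_a=\sum_{b<a}(b,a)$. This rests on two facts: that the contents $\{j-i:(i,j)\in\lambda\}$ are precisely the eigenvalues of the $J_a$ on the Gelfand--Tsetlin basis of $V_\lambda$, so that $\prod_a(1+yJ_a)$ acts by $\prod_{(i,j)\in\lambda}(1+y(j-i))$; and that $e_k(J_1,\dots,J_N)=\sum_{\ell^*(\mu)=k}C_\mu$, i.e. every permutation of cycle type $\mu$ admits a unique strictly monotone factorization into $\ell^*(\mu)$ transpositions, and no class of smaller colength occurs. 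Establishing the absence of lower-colength classes together with the unit multiplicity is the genuinely combinatorial input; once it is secured, the remaining steps are bookkeeping in the algebra of symmetric functions in $\mathbf c$.
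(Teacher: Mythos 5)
Your proof is correct and follows essentially the same route as the paper's (which defers the details to its references [GH1]--[HO]): expand both sides in the power-sum basis via the Frobenius--Schur formula, reduce to a content-product identity for each $\lambda$, and establish that identity by recognizing $r_\lambda^{(G,\beta)} = \prod_{(i,j)\in\lambda} G((j-i)\beta)$ as the eigenvalue on $V_\lambda$ of the central element $\prod_{a=1}^N G(\beta J_a)$ built from Jucys--Murphy elements. The colour-factorization bookkeeping and the key input $e_k(J_1,\dots,J_N)=\sum_{\ell^*(\mu)=k}C_\mu$ (strictly monotone factorizations) are precisely the ingredients of the cited proofs, so there is nothing to flag beyond noting that your reduction "for each fixed $\lambda$" is used only in the sufficient direction, which is all that is needed.
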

By standard combinatorial arguments, taking its logarithm gives the generating function for connected weighted Hurwitz numbers:
\be
\ln(\tau^{(G, \beta, \gamma)} ({\bf t}, {\bf s}) ) = \sum_{d=0}^\infty \beta^d \sum_{\substack{\mu, \nu \\ |\mu|=|\nu|}} \gamma^{|\mu|}  \tilde{H}^d_G(\mu, \nu) p_\mu({\bf t}) p_\nu({\bf s}).
\label{tau_G_tilde_H}
\ee
All instances of weighted Hurwitz numbers to have appeared in the literature   \cite{AEG, AC1, AC2, AMMN1,AMMN2, BEMS, GH1, GH2, GGN1, H1, H2, HO, KZ, LZ,  Ok, Pa, Z} have $\tau$-functions of hypergeometric type  as generating functions and are either special cases of the above or are similarly derived from a second class of ``dual'' weight generating functions
introduced in \cite{GH2,H1}, or a more general quantum weighting introduced  in  \cite{H2}.

\subsection{Baker function, adapted bases and recursions}
 \label{adapted_basis}
 
  Viewed purely as a KP $\tau$-function with respect to the first set of flow variables  ${\bf t} = (t_1, t_2, \dots )$, with the remaining
variables ${\bf s} = (s_1, s_2, \dots )$ interpreted as auxiliary parameters, the corresponding parametric family of Baker functions 
 $\Psi_{(G, \beta, \gamma)}^-(\zeta, {\bf t}, {\bf s})$ and their duals $\Psi_{(G, \beta, \gamma)}^+(\zeta, {\bf t}, {\bf s})$
are given by the Sato formula
 \be
 \Psi_{(G, \beta, \gamma)}^\mp(\zeta, {\bf t}, {\bf s}) = e^{\pm \sum_{i=1}^\infty t_i \zeta^i }{\tau^{(G, \beta, \gamma)}({\bf t} \mp [x] , {\bf s}) \over  \tau^{(G, \beta, \gamma)}({\bf t}, {\bf s}) }
 = e^{\pm \sum_{i=1}^\infty t_i \zeta^i } \left(1 + \OO({1/ \zeta})\right),
 \ee
 where 
 \be
 \zeta := x^{-1}
 \ee
  is the spectral parameter and $[x]$ is the infinite vector whose components are equal to the terms
  of the Taylor expansion of $-\ln(1-x)$ at $x=0$
 \be
 [x]:= (x, {x^2\over 2}, \dots, {x^n \over n}, \dots ).
 \ee
 
 In the following, instead of considering the values of the Baker function and its dual at arbitrary KP flow parameter values, we shall be 
evaluating at ${\bf t}= {\bf 0}$,  and $\beta^{-1}{\bf s}$
\bea
\Psi_{(G, \beta, \gamma)}^+(1/x, {\bf 0}, \beta^{-1}{\bf s}) &\&= \gamma \sum_{j=0}^\infty  \rho_{j-1}h_j(\beta^{-1}{\bf s})  x^j,
\label{Psi_zero}
 \\
\Psi_{(G, \beta, \gamma)}^-(1/x, { \bf  0}, \beta^{-1}{\bf s}) &\&=   \sum_{j=0}^\infty \rho^{-1}_{-j}  h_j(-\beta^{-1}{\bf s}) x^j,
\label{Phi_zero}
\eea
where $\{h_j\}_{j\in \Nb}$ are the complete symmetric functions, defined by
\be
e^{\sum_{i=0}^\infty t_i \zeta^i} = \sum_{j=0}^\infty h_j({\bf t}) \zeta^j.
\ee
More generally, we define the adapted basis $\{\Psi^+_k(x)\}_{k\in \Zb}$ and its dual  $\{\Psi^-_k(x)\}_{k\in \Zb}$ by
\bea
\Psi^+_k(x) :=\gamma  \sum_{j=0}^\infty  \rho_{j+k-1}  h_j(\beta^{-1}{\bf s}) x^{j+k}, 
\label{Psi_plus_k_ser}
\\
\Psi^-_k(x) := \sum_{j=0}^\infty \rho^{-1}_{-j-k}  h_j(-\beta^{-1}{\bf s}) x^{j+k}
\label{Psi_minus_k_ser}
\eea
for all $k\in \Zb $ so that, in particular
\be
\Psi^-_0(x)= \Psi^-_{(G, \beta, \gamma)}(1/x, {\bf 0},  \beta^{-1}{\bf s}), \quad \Psi^+_0(x) = \Psi^+_{(G, \beta, \gamma)}(1/x, {\bf 0}, \beta^{-1}{\bf s}).
\ee
Note that we may view these as formal series in the parameters: 
 \be
 \Psi^+_k(x), \Psi^-_k(x) \in  \gamma^k \mathbb K[x, x^{-1},{\bf s},\beta,\beta^{-1}][[\gamma]].
 \ee
 \br
 The Baker function $\Psi^-_{(G, \beta, \gamma)}(\zeta= 1/x, {\bf t }, \beta^{-1} {\bf s})$
 and its dual $\Psi^+_{(G, \beta, \gamma)}(\zeta= 1/x, {\bf t }, \beta^{-1} {\bf s})$, as defined in 
 (\ref{Psi_minus_k_ser}), (\ref{Psi_plus_k_ser}),  are  elements of the subspaces
 \be
 W^\pm_{(G, \beta, \gamma, {\bf s})} := \span\{ \Psi^\pm_{-k} (\zeta=1/x)\}_{k\in \Nb} \ss
\mathbb K[x, x^{-1}, {\bf s},\beta,\beta^{-1}]((\gamma))
 \ee
 for all values of the KP flow parameters ${\bf t}$. These may be understood as elements of formal power series 
 analogs of the Sato-Segal-Wilson Grassmannian, consisting of subspaces of the Hilbert space spanned
 by powers $\{\zeta^k\}_{k\in \Zb}$ of  the spectral parameter $\zeta = 1/x$, commensurable with the subspace 
 spanned by the monomials  $\{\zeta^k\}_{k=0, 1, \dots}$.
\er

\begin{definition}
Define the recursion operators
\be
R_{\pm}(x):=\gamma x G(\pm\beta D),
\label{Rdef}
\ee
where
\be
D:= x\frac{d}{dx}
\ee
is the Euler operator.
\end{definition}

\begin{lemma} These  satisfy the commutation relations
\bea
\label{Rcom}
\left[D,R_{\pm}\right]=R_{\pm}.
\eea
\end{lemma}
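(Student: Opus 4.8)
The plan is to reduce the identity to two elementary facts about the operators involved and then combine them via the Leibniz rule for commutators. The two facts are: first, the canonical commutation relation $[D,x]=x$ between the Euler operator $D=x\,\frac{d}{dx}$ and the operator of multiplication by $x$; and second, the observation that $G(\pm\beta D)$, being by definition a formal power series in $D$ alone, commutes with $D$, so that $[D,G(\pm\beta D)]=0$.

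First I would verify $[D,x]=x$ directly: for any formal series $f$ one has $D(xf)=x\frac{d}{dx}(xf)=xf+x^2f'=xf+x\,Df$, whence $Dx-xD=x$ as operators. The second fact is immediate once $G(\pm\beta D)$ is interpreted correctly: writing $G(z)=1+\sum_i g_i z^i$ gives $G(\pm\beta D)=1+\sum_i g_i(\pm\beta)^iD^i$, and since $D\,x^n=n\,x^n$ this operator acts diagonally on the monomial basis by $G(\pm\beta D)x^n=G(\pm\beta n)x^n$; any two functions of $D$ manifestly commute.

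With these in hand, the Leibniz rule $[D,AB]=[D,A]B+A[D,B]$ yields
\[
[D,R_{\pm}]=\gamma\,[D,\,x\,G(\pm\beta D)]=\gamma\big([D,x]\,G(\pm\beta D)+x\,[D,G(\pm\beta D)]\big)=\gamma\,x\,G(\pm\beta D)=R_{\pm},
\]
which is exactly the claim. As an independent sanity check one can test both sides on a monomial $x^n$: since $R_{\pm}x^n=\gamma\,G(\pm\beta n)\,x^{n+1}$, one finds $D\,R_{\pm}x^n=(n+1)\gamma\,G(\pm\beta n)\,x^{n+1}$ and $R_{\pm}D\,x^n=n\,\gamma\,G(\pm\beta n)\,x^{n+1}$, whose difference is $\gamma\,G(\pm\beta n)\,x^{n+1}=R_{\pm}x^n$; as the monomials $x^n$ span the space of formal series, the operator identity follows.

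I do not expect a genuine obstacle here. The only point requiring mild care is the meaning of $G(\pm\beta D)$ as a formal power series in the Euler operator, which is justified by its diagonal action $G(\pm\beta D)x^n=G(\pm\beta n)x^n$ on monomials; this is where the eigenbasis property of $D$ enters and why no convergence question arises, each monomial being an eigenvector. Everything else is the one-line Leibniz computation displayed above.
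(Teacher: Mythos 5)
Your proof is correct and is exactly the argument the paper intends: the lemma is stated there without proof precisely because it reduces to the two elementary facts you isolate, namely $[D,x]=x$ and $[D,G(\pm\beta D)]=0$, combined via the Leibniz rule (equivalently, your monomial check, since $D$ acts diagonally on $x^n$). Nothing is missing.
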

Note also that the operators $R_\pm$ can  be expressed as
\bea
\label{Tcom}
R_+=e^{T\left(D-1\right)}\circ x \circ e^{-T\left(D-1\right)},\\
R_-=e^{-T\left(-D\right)} \circ x \circ e^{T\left(-D\right)}.
\eea
\noindent By application of these operators to the series expansion (\ref{Psi_plus_k_ser}) and (\ref{Psi_minus_k_ser}), 
using (\ref{rho_j+}) and (\ref{rho_j-}),  it follows that the basis elements $\{\Psi^{\pm}_k(x)\}$ satisfy the recursion relations
\begin{proposition}
\bea
\Psi^+_{k\pm 1}(x):=R^{\pm1}_+\Psi^+_k(x),
\label{recursions_Psi_plus_k}
\\
\Psi^-_{k \pm 1} (x):=R^{\pm 1}_- \Psi^-_k(x).
\label{recursions_Psi_minus_k}
\eea
\end{proposition}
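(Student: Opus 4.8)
The plan is to establish all four identities by a single termwise computation: apply $R_\pm^{\pm1}$ to the defining series \eqref{Psi_plus_k_ser}, \eqref{Psi_minus_k_ser} and collapse the result using the one-step recursion obeyed by the content-product coefficients $\rho_j$. The one structural fact driving everything is that the Euler operator $D=x\,d/dx$ is diagonalized by the monomials, $D\,x^m=m\,x^m$, so that $G(\pm\beta D)$ acts diagonally term by term,
\be
G(\pm\beta D)\,x^m = G(\pm\beta m)\,x^m,
\ee
and hence $R_\pm=\gamma x\,G(\pm\beta D)$ is the well-defined weighted shift $R_\pm\,x^m=\gamma\,G(\pm\beta m)\,x^{m+1}$ on the formal-series space $\mathbb K[x,x^{-1},{\bf s},\beta,\beta^{-1}]((\gamma))$. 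Since $G(z)=1+\OO(z)$ has unit constant term, each scalar $G(\pm\beta m)$ is invertible in the coefficient ring, so $R_\pm^{-1}$ is equally well defined (acting by $R_\pm^{-1}x^m=\gamma^{-1}G(\pm\beta(m-1))^{-1}x^{m-1}$), which is what allows us to write $R_\pm^{\pm1}$. The second ingredient is the single-index recursion for the coefficients, which follows at once from $r_j^G(\beta)=\rho_j/(\gamma\rho_{j-1})=G(j\beta)$ and holds across all integers:
\be
\rho_m = \gamma\,G(m\beta)\,\rho_{m-1},\qquad m\in\Zb .
\ee
The cases $m\ge1$, $m=0$ and $m\le-1$ each follow directly from the product formulas \eqref{rho_j+}, \eqref{rho_j-} (with $m=0$ using $\rho_0=1$, $\rho_{-1}=\gamma^{-1}$ and $G(0)=1$).

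With these two facts the verification is mechanical. Applying $R_+$ to \eqref{Psi_plus_k_ser} term by term gives
\be
R_+\Psi^+_k(x)=\gamma\sum_{j=0}^\infty \gamma\,G\big(\beta(j+k)\big)\,\rho_{j+k-1}\,h_j(\beta^{-1}{\bf s})\,x^{j+k+1},
\ee
and the recursion in the form $\gamma\,G(\beta(j+k))\,\rho_{j+k-1}=\rho_{j+k}$ identifies the right-hand side with $\Psi^+_{k+1}(x)$, which is \eqref{recursions_Psi_plus_k} for the upper sign. Applying $R_+^{-1}$ instead and using the recursion as $\rho_{j+k-1}\,G(\beta(j+k-1))^{-1}=\gamma\,\rho_{j+k-2}$ collapses the series to $\Psi^+_{k-1}(x)$, giving the lower sign. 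The two relations \eqref{recursions_Psi_minus_k} for $\Psi^-_k$ are obtained identically, now with $G(-\beta D)\,x^{j+k}=G(-\beta(j+k))\,x^{j+k}$ and the recursion at negative indices, $\rho_{-j-k}^{-1}\,G(-(j+k)\beta)=\gamma^{-1}\,\rho_{-j-k-1}^{-1}$, to pass from $\Psi^-_k$ to $\Psi^-_{k+1}$, together with its rearrangement for $R_-^{-1}$.

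A conceptually cleaner derivation uses the conjugation forms \eqref{Tcom}. Setting $H^\pm(x):=\sum_{j\ge0}h_j(\pm\beta^{-1}{\bf s})\,x^{j}$ and using $\rho_m=e^{T_m}=e^{T(m)}$ together with the diagonal action of $D$, one checks that $\Psi^+_k(x)=\gamma\,e^{T(D-1)}\big(x^kH^+(x)\big)$ and $\Psi^-_k(x)=e^{-T(-D)}\big(x^kH^-(x)\big)$; the conjugating exponentials in $R_\pm$ then cancel precisely those that build $\Psi^\pm_k$, leaving only multiplication by $x$, which sends $x^k$ to $x^{k+1}$ and produces $\Psi^\pm_{k+1}$ in one line (with $R_\pm^{-1}$ lowering the index). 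There is no real obstacle here: the only points needing care are the well-definedness and invertibility of $G(\pm\beta D)$ on the formal completion, both guaranteed by the diagonal action and the unit constant term of $G$, and the uniform validity of the $\rho$-recursion across all integer indices, which makes the index bookkeeping in the four cases consistent.
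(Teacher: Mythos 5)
Your proof is correct and follows essentially the same route the paper intends: applying $R_\pm^{\pm 1}$ termwise to the series \eqref{Psi_plus_k_ser}, \eqref{Psi_minus_k_ser} and collapsing via the relation $\rho_m=\gamma\,G(m\beta)\,\rho_{m-1}$ (valid for all $m\in\Zb$ by \eqref{rho_j+}, \eqref{rho_j-}), which is exactly the sketch given in the text. Your alternative one-line derivation via the conjugation identities \eqref{Tcom} is also consistent with the paper, which records those identities immediately before the proposition.
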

From  the identity
\be
\sum_{j=0}^k h_j({\bf t}) h_{k-j}(-{\bf t}) = \delta_{k0},
\ee
for complete symmetric functions, it also follows that the bases consisting of 
functions (\ref{Psi_plus_k_ser}) and (\ref{Psi_minus_k_ser})  are dual to each other 
\be
\left(\Psi^+_k,\Psi^-_m\right) =  \gamma\, \delta_{k+m,1}
\label{dual_Psi_pairing}
\ee
 with respect to the pairing
\be
\left(f,g\right):=\frac{1}{2\pi i} \oint_0 f(x) g(x) \frac{dx}{x^2}.
\label{scalar_product}
\ee

In particular, all $\Psi^+_k$ for $k\leq 0$ are orthogonal to all $\Psi^-_l$ for $l\leq 0$, 
and this is equivalent to the Hirota bilinear relations for  the KP hierarchy with respect to the times ${\bf t}$.

\subsection{The pair correlator and its expansion}
\label{K_tau}

We now define the {\it pair correlation function} $K(x,x')$ 
 in terms of the $\tau$-function.
\begin {definition} The single {\it pair correlator} is defined as 
\begin{eqnarray}
K(x;x') 
&{:=}& \frac{1}{x-x'}\,
\tau^{(G, \beta, \gamma)} \big([x]-[x'],\beta^{-1}{\bf s}\big), 
\label{K_x_x'}
\end{eqnarray}
while more generally,  for $n\geq 1$,  we define the $n$-pair correlator
\be
K_n(x_1,\dots,x_n;x'_1,\dots,x'_n)
:= 
\det\left(\frac{1}{x_i-x'_j}\right)_{1\leq i,j\leq n}
\times
\tau^{(G, \beta, \gamma)} \left(\sum_i [x_i]-[x'_i],\beta^{-1}{\bf s}\right).
\label{K_n_tau_def}
\ee
\end{definition}
As a formal series, for any $n\geq 1$, $K_n (x_1,\dots,x_n;x'_1,\dots,x'_n)$ may be viewed as belonging to $\mathbb K(x_1,\dots,x_n,x'_1,\dots,x'_n)[{\bf s},\beta,\beta^{-1}][[\gamma]]$.
It is a standard result, following from the Cauchy-Binet identity or, equivalently, the fermionic Wick theorem (cf. ref.~\cite{ACEH1}), 
that 
\begin{proposition}
\label{K_n_det}
\be
K_n(x_1,\dots,x_n;x'_1,\dots,x'_n) = \det \left(K(x_i;x'_j)\right).
\ee
\end{proposition}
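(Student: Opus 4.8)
The plan is to pass to the free-fermion picture, where both correlators become vacuum expectation values of products of charged fermion fields, so that the proposition reduces to the fermionic Wick theorem. First I would invoke the standard fermionic representation of the hypergeometric $\tau$-function. Writing $\psi(x)=\sum_{n\in\Zb}\psi_n x^n$ and $\psi^*(x)=\sum_{n\in\Zb}\psi^*_n x^{-n}$ for the charged free-fermion fields with $\{\psi_m,\psi^*_n\}=\delta_{mn}$, the Schur expansion (\ref{tau_G_beta_gamma}) shows that $\tau^{(G,\beta,\gamma)}(\tb,\beta^{-1}\sb)=\langle 0|\Gamma_+(\tb)\,g\,|0\rangle$, where $\Gamma_+(\tb)=\exp(\sum_{i\ge 1}t_i J_i)$ is the current vertex operator and $g=\mathcal{C}\,\Gamma_-(\beta^{-1}\sb)$ lies in the Clifford (i.e.\ $GL(\infty)$) group: the dressing operator $\mathcal{C}=\exp(\sum_j T_j\,{:}\psi_j\psi^*_j{:})$ is diagonal in the fermion-number basis and reproduces the content products, its eigenvalues being the $\rho_j=e^{T_j}$ of (\ref{rho_j+}), while $\Gamma_-(\beta^{-1}\sb)$ is likewise an exponential of fermion bilinears. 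In particular $g|0\rangle$ is a Gaussian fermionic state, and the normalization $\tau^{(G,\beta,\gamma)}(\mathbf{0},\beta^{-1}\sb)=1$ is visible from (\ref{tau_G_beta_gamma}) since $s_\lambda(\mathbf{0})=\delta_{\lambda,\emptyset}$.

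The second ingredient is the Date--Jimbo--Kashiwara--Miwa vertex-operator identity, which translates insertions of fermion fields into Miwa shifts of the $\tau$-function. Concretely, I would establish
\[
\left\langle 0\left|\, \prod_{i=1}^n \psi(x_i)\psi^*(x'_i)\, g\,\right|0\right\rangle
= \det\!\left(\frac{1}{x_i-x'_j}\right)_{1\le i,j\le n}\; \tau^{(G,\beta,\gamma)}\!\Big(\textstyle\sum_i([x_i]-[x'_i]),\beta^{-1}\sb\Big),
\]
by expressing each $\psi(x_i)$ and $\psi^*(x'_i)$ through the boson-fermion correspondence as a product of an exponential factor and a time-shift operator; commuting all shift operators to the right onto the vacuum produces the $n$-fold Miwa shift $\tb\mapsto\sum_i([x_i]-[x'_i])$, while the repeated normal-orderings assemble precisely into the Cauchy determinant $\det(1/(x_i-x'_j))$. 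Comparing with (\ref{K_x_x'}) and (\ref{K_n_tau_def}), this identifies $K(x_i;x'_j)=\langle 0|\psi(x_i)\psi^*(x'_j)g|0\rangle$ (the case $n=1$) and $K_n=\langle 0|\prod_i\psi(x_i)\psi^*(x'_i)g|0\rangle$.

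With both sides realized as fermionic correlators, the proposition becomes a single application of the fermionic Wick theorem. Since $g|0\rangle$ is Gaussian, the $2n$-point function of the fields $\psi(x_i),\psi^*(x'_i)$ factorizes into the determinant of its two-point contractions,
\[
\Big\langle 0\Big|\prod_{i=1}^n\psi(x_i)\psi^*(x'_i)\,g\Big|0\Big\rangle
= \det\Big(\langle 0|\psi(x_i)\psi^*(x'_j)g|0\rangle\Big)_{1\le i,j\le n},
\]
where the ordering of the $n$ bilinears $\psi(x_i)\psi^*(x'_i)$ is chosen so that all signs are positive. Substituting the $n=1$ identification $\langle 0|\psi(x_i)\psi^*(x'_j)g|0\rangle=K(x_i;x'_j)$ on the right and $K_n$ on the left yields $K_n=\det(K(x_i;x'_j))$.

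I expect the main obstacle to be the bookkeeping in the vertex-operator identity of the second step: matching conventions so that the repeated normal-ordering produces exactly the Cauchy determinant $\det(1/(x_i-x'_j))$ with no spurious Vandermonde factors, and verifying that the chosen ordering of the pairs makes the Wick determinant carry a uniform sign. An equivalent purely combinatorial route avoids the operator formalism: expand both sides in the Schur basis (\ref{tau_G_beta_gamma}), evaluate $\tau$ at the difference of Miwa alphabets via the dual Cauchy identity, and apply the Cauchy--Binet formula to the resulting minors; there the analogous difficulty is controlling the content-product weights $r^{(G,\beta)}_\lambda$ of (\ref{r_lambda_G}) under Cauchy--Binet, which the fermionic diagonalization by $\mathcal{C}$ handles automatically.
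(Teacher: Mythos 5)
Your proposal is correct and follows essentially the same route the paper takes: the paper derives this proposition from the fermionic Wick theorem (equivalently, Cauchy--Binet), citing the companion paper \cite{ACEH1}, and its Section~\ref{pair_correl_fermionic} records precisely the fermionic representation of $K_n$ that your vertex-operator step establishes. Apart from convention bookkeeping (the paper inserts $\psi(1/x'_i)\psi^*(1/x_i)$ with a prefactor $1/\prod_i x_i x'_i$, reflecting its choice $\psi^*(z)=\sum_i \psi^*_i z^{-i-1}$), your argument matches the paper's intended proof.
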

This should be interpreted as holding in  $\mathbb K(x_1,\dots,x_n,x'_1,\dots,x'_n)[{\bf s},\beta,\beta^{-1}][[\gamma]]$.
We may also express $K(x;x')$ as a formal double Taylor series with coefficients
given by the complete symmetric functions and the weighting parameters $\rho_i$.
\begin{proposition}
\label{K_expansion}
 The function $K(x,x')$, can be expressed as:
\bea
K(x,x') &\&= {1 \over x - x'} +\sum_{a,b \geq 0}  \rho_a \rho^{-1}_{-b-1} x^a (-x')^b s_{(a|b)}(\beta^{-1}{\bf s}) \cr
&\& ={1 \over x - x'} +  \sum_{a=0}^\infty \sum_{b=0}^\infty \sum_{j=1}^{b+1} \
\rho_a h_{a+j} (\beta^{-1}{\bf s}) x^a \rho^{-1}_{-b-1} h_{b-j+1}(-\beta^{-1} {\bf s}) (x')^b.
\label{Khook} 
\eea
\end{proposition}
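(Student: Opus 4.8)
The plan is to start from the double Schur expansion (\ref{tau_G_beta_gamma}) and evaluate its first argument at $[x]-[x']$, so that $K(x,x')=(x-x')^{-1}\sum_\lambda \gamma^{|\lambda|}r^{(G,\beta)}_\lambda\, s_\lambda([x]-[x'])\,s_\lambda(\beta^{-1}{\bf s})$. The first ingredient is the specialization of the complete and elementary symmetric functions at $[x]-[x']$. Since the power sums are $p_i([x]-[x'])=x^i-(x')^i$, the generating functions collapse to
\be
\sum_{k\ge 0}h_k([x]-[x'])\,\zeta^k=\frac{1-x'\zeta}{1-x\zeta},\qquad \sum_{k\ge 0}e_k([x]-[x'])\,\zeta^k=\frac{1+x\zeta}{1+x'\zeta},
\ee
whence $h_0=e_0=1$ and $h_k([x]-[x'])=x^{k-1}(x-x')$, $e_k([x]-[x'])=(-x')^{k-1}(x-x')$ for $k\ge 1$ (and both vanish for $k<0$).

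Next I would show that $s_\lambda([x]-[x'])=0$ unless $\lambda$ is a hook. Writing $s_\lambda=\det(h_{\lambda_i-i+j})_{1\le i,j\le\ell}$ by Jacobi--Trudi, the factorization above shows that any row $i$ all of whose indices $\lambda_i-i+j$ are positive equals $(x-x')x^{\lambda_i-i}(1,x,\dots,x^{\ell-1})$. If $\lambda_2\ge 2$ then both row $1$ (indices $\ge\lambda_1\ge 2$) and row $2$ (indices $\ge\lambda_2-1\ge 1$) are of this form, hence proportional, so the determinant vanishes; this leaves only the hooks $\lambda=(a|b)$ (Frobenius notation, i.e. $\lambda=(a+1,1^b)$). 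For such a hook a short evaluation of the hook expansion $s_{(a|b)}=\sum_{i=0}^{b}(-1)^i h_{a+1+i}e_{b-i}$ on the specialized values telescopes to
\be
s_{(a|b)}([x]-[x'])=(x-x')\,x^a(-x')^b.
\ee

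The second ingredient is the content product for a hook. The contents $j-i$ of the boxes of $(a|b)$ run through $\{-b,-b+1,\dots,a\}$, each exactly once, so using (\ref{r_lambda_G}) and $r^{G}_c(\beta)=\rho_c/(\gamma\rho_{c-1})$ from (\ref{rho_j+})--(\ref{rho_j-}) the product telescopes to $\gamma^{|\lambda|}r^{(G,\beta)}_{(a|b)}=\gamma^{a+b+1}\prod_{c=-b}^{a}\rho_c/(\gamma\rho_{c-1})=\rho_a\rho_{-b-1}^{-1}$. Inserting this together with the hook evaluation into the Schur expansion, with the empty partition supplying the leading $1$, gives $\tau^{(G,\beta,\gamma)}([x]-[x'],\beta^{-1}{\bf s})=1+(x-x')\sum_{a,b\ge 0}\rho_a\rho_{-b-1}^{-1}x^a(-x')^b s_{(a|b)}(\beta^{-1}{\bf s})$; dividing by $x-x'$ is exactly the first line of (\ref{Khook}).

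To reach the second line I would expand $s_{(a|b)}(\beta^{-1}{\bf s})$ itself by the same hook formula $s_{(a|b)}=\sum_{i=0}^{b}(-1)^i h_{a+1+i}e_{b-i}$, now combined with the identity $e_m=(-1)^m h_m(-\,\cdot\,)$, which is immediate from comparing $\sum_m e_m\zeta^m=\exp(\sum_i(-1)^{i-1}u_i\zeta^i)$ with $\sum_m(-1)^m h_m(-{\bf u})\zeta^m$. This turns the expansion into $s_{(a|b)}({\bf u})=(-1)^b\sum_{j=1}^{b+1}h_{a+j}({\bf u})h_{b-j+1}(-{\bf u})$ with ${\bf u}=\beta^{-1}{\bf s}$; substituting and absorbing $(-1)^b$ into $(-x')^b$ produces the triple-sum form. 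The only genuinely delicate point is the middle step---simultaneously pinning down the vanishing off hooks and the precise value $s_{(a|b)}([x]-[x'])=(x-x')x^a(-x')^b$---while the content-product telescoping and the final symmetric-function manipulation are routine.
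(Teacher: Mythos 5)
Your proof is correct and follows exactly the same strategy as the paper's own (very brief) proof: substitute the Schur expansion (\ref{tau_G_beta_gamma}) into the definition (\ref{K_x_x'}), observe that the evaluation at $[x]-[x']$ kills all non-hook terms, and express the hook Schur functions as finite bilinear sums in the $h_i$'s. You have simply supplied in full the details the paper leaves to the reader --- the Jacobi--Trudi vanishing argument, the evaluation $s_{(a|b)}([x]-[x'])=(x-x')x^a(-x')^b$, the telescoping content product $\gamma^{|\lambda|}r^{(G,\beta)}_{(a|b)}=\rho_a\rho_{-b-1}^{-1}$, and the identity $e_m=(-1)^m h_m(-\,\cdot\,)$ --- all of which check out.
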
 
\begin{proof} This follows from substituting  the Schur function expansion (\ref{tau_G_beta_gamma}) of the $\tau$-function 
 $\tau^{(G, \beta, \gamma)} ({\bf t}, {\bf s})$, into the definition (\ref{K_x_x'}) of  $K(x,x')$ 
 and noting that, making the special evaluation at $[x] -[x']$, this is an expansion in which 
 only  hook partitions appear, for which the Schur functions can be expressed as finite bilinear sums over
  the complete symmetric functions $\{h_i\}$ \cite{Mac}. 
  \end{proof}

\subsection{The multicurrent correlator as generating function}
\label{multicurrent_correl}

In terms of the indeterminates $(x_1, \dots, x_n)$, we define the following correlators
\bea
W_{n}({\bf s}; x_1,\dots,x_n )  := \left.\left( \Big(\prod_{i=1}^n \nabla(x_i) \Big)
 \tau^{(G, \beta, \gamma)} ({\bf t},\beta^{-1} {\bf s}) \right) \right\vert_{{\bf t}= {\bf 0}},
 \label{W_G_def}\\
 \tilde{W}_{n}({\bf s}; x_1,\dots,x_n )  := \left.\left( \Big(\prod_{i=1}^n \nabla(x_i) \Big)
\ln\,  \tau^{(G, \beta, \gamma)} ({\bf t},\beta^{-1} {\bf s}) \right) \right\vert_{{\bf t}= {\bf 0}},
\eea
where
\be
\nabla(x) := \sum_{i=1}^\infty x^{i-1} {\partial \over \partial t_i}.
\ee
\br
As will be seen in the fermionic representation in Sec.~\ref{fermionic}  below, these may be viewed as  {\em multicurrent correlators} and  {\em connected multicurrent correlators}.
\er

For $g \geq 0$ we denote the genus-$g$ component of the function $\tilde W_{n}({\bf s}; x_1,\dots,x_n )$
as follows
\bea
\tilde{W}_{g,n}({\bf s}; x_1,\dots,x_n )  := [\beta^{2g-2+n}] \tilde{W}_{n}({\bf s}; x_1,\dots,x_n ).
\eea

We now introduce another type of generating function for weighted Hurwitz numbers, 
both for  the connected and nonconnected case.
\begin{definition}
\bea
F_n({\bf s}; x_1,\dots,x_n) &\&:=  \sum_{\mu,\nu,\,\ell(\mu)=n}  \sum_d \gamma^{|\mu|} \beta^{d-\ell(\nu)} H_{G}^d(\mu,\nu)\, 
|\aut(\mu)| m_\mu(x_1,\dots,x_n) p_\nu({\bf s}) 
\label{F_n_expan}\\
\tilde F_n({\bf s};x_1,\dots,x_n) &\& :=  \sum_{\mu,\nu,\,\ell(\mu)=n}  \sum_d \gamma^{|\mu|} \beta^{d-\ell(\nu)} \tilde H^d_{G}(\mu,\nu)\,
|\aut(\mu)|  m_\mu(x_1,\dots,x_n) p_\nu({\bf s}) 
\label{tilde_F_n_expan}\\
\tilde F_{g,n}({\bf s}; x_1, \dots , x_n)  &\&:=  \sum_{\mu,\nu,\,\ell(\mu)=n}   \gamma^{|\mu|}  \tilde H^d_{G}(\mu,\nu)\, 
|\aut(\mu)| m_\mu(x_1,\dots,x_n) p_\nu({\bf s})\,,
\label{tilde_F_g_n_expan}
\eea
where $m_\mu(x_1, \dots, x_n)$ denotes the monomial sum symmetric function corresponding to the partition $\mu$,
and $(g, d)$ are related by the Riemann-Hurwitz formula
\be
2-2g = \ell(\mu) + \ell(\nu) - d.
\label{Riem_Hurw_g_d}
\ee
\end{definition}
Note that $F_n$, $\tilde F_n$ belong to $\mathbb{K}[x_1,\dots,x_n; {\bf s}; \beta, \beta^{-1}][[\gamma]]$ 
and $\tilde F_{g,n}$ to $\mathbb{K}[x_1,\dots,x_n; {\bf s}][[\gamma]]$.
These are related to the multicurrent correlators as follows
\begin{theorem}
\bea
\label{firW}
W_{n}({\bf s}; x_1,\dots,x_n ) &\&= {\partial^n\over \partial x_1 \cdots \partial x_n}F_{n}({\bf s}; x_1,\dots,x_n), 
\label{W_n_F_n}\\
\tilde W_{n}({\bf s}; x_1,\dots,x_n) &\&={\partial^n\over \partial x_1 \cdots \partial x_n}\tilde F_{n}({\bf s}; x_1,\dots,x_n), 
\label{tilde_W_n_F_n}\\
\tilde W_{g,n}({\bf s}; x_1,\dots,x_n) &\&= {\partial^n\over \partial x_1 \cdots \partial x_n}\tilde F_{g,n}({\bf s}; x_1,\dots, x_n). 
\label{tilde_W_g_n_F_n}\
\eea
\end{theorem}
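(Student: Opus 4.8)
The plan is to reduce everything to the single generating-function identity (\ref{tau_G_H}) together with the combinatorial action of the current operators $\nabla(x_i)$ on power sums. First I would rewrite the $\tau$-function with its second argument specialized to $\beta^{-1}{\bf s}$. Since each of the $\ell(\nu)$ power-sum factors scales by $\beta^{-1}$, one has $p_\nu(\beta^{-1}{\bf s}) = \beta^{-\ell(\nu)} p_\nu({\bf s})$, so (\ref{tau_G_H}) becomes
\be
\tau^{(G, \beta, \gamma)}({\bf t}, \beta^{-1}{\bf s}) = \sum_{d}\sum_{\substack{\mu,\nu\\ |\mu|=|\nu|}} \gamma^{|\mu|}\beta^{d-\ell(\nu)} H^d_G(\mu,\nu)\, p_\mu({\bf t})\, p_\nu({\bf s}),
\ee
which already exhibits the weight $\gamma^{|\mu|}\beta^{d-\ell(\nu)}$ and the factor $p_\nu({\bf s})$ appearing in the definition (\ref{F_n_expan}) of $F_n$.

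The heart of the argument is the identity
\be
\Big(\prod_{i=1}^n \nabla(x_i)\Big)\, p_\mu({\bf t})\,\Big|_{{\bf t}={\bf 0}} = \frac{\partial^n}{\partial x_1\cdots\partial x_n}\big(|\aut(\mu)|\, m_\mu(x_1,\dots,x_n)\big),
\ee
valid when $\ell(\mu)=n$, with both sides vanishing otherwise. Writing $p_\mu({\bf t}) = \prod_{j=1}^{\ell(\mu)} \mu_j t_{\mu_j}$, which is multilinear of degree $\ell(\mu)$ in the flow variables, a degree count shows that applying $n$ first-order operators $\nabla(x_i) = \sum_k x_i^{k-1}\partial_{t_k}$ and then setting ${\bf t}={\bf 0}$ gives zero unless $n=\ell(\mu)$: for $n>\ell(\mu)$ the result is identically zero, and for $n<\ell(\mu)$ it is a nonconstant polynomial in ${\bf t}$ killed by the evaluation. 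When $n=\ell(\mu)$, only the bijective distributions of the $n$ operators among the $n$ linear factors survive (two operators on one factor give a vanishing second derivative, and any unhit factor vanishes at ${\bf t}={\bf 0}$), and each contributes $\prod_j \mu_j x_{\sigma(j)}^{\mu_j-1}$, producing $\sum_{\sigma\in S_n}\prod_j \mu_j x_{\sigma(j)}^{\mu_j-1}$. On the other side, $|\aut(\mu)|\, m_\mu = \sum_{\sigma\in S_n}\prod_j x_{\sigma(j)}^{\mu_j}$, each distinct rearrangement of $\mu$ occurring exactly $|\aut(\mu)|$ times in the symmetrization over $S_n$, and differentiating once in each $x_l$ reproduces the same sum.

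With this in hand, I would apply $\prod_i \nabla(x_i)$ term by term to the specialized $\tau$-series, evaluate at ${\bf t}={\bf 0}$, and use the identity to collapse the $\mu$-sum onto partitions with $\ell(\mu)=n$; pulling the $x$-derivatives outside the (formal) sum then yields exactly $\frac{\partial^n}{\partial x_1\cdots\partial x_n} F_n$, proving (\ref{W_n_F_n}). The connected relation (\ref{tilde_W_n_F_n}) follows identically from the logarithmic generating function (\ref{tau_G_tilde_H}) with $\tilde H^d_G$ in place of $H^d_G$. Finally, for (\ref{tilde_W_g_n_F_n}) I would extract the coefficient of $\beta^{2g-2+n}$: on the surviving terms $\ell(\mu)=n$, the Riemann-Hurwitz relation (\ref{Riem_Hurw_g_d}) gives $d-\ell(\nu) = n-(2-2g) = 2g-2+n$, so $[\beta^{2g-2+n}]$ selects precisely the genus-$g$ terms defining $\tilde F_{g,n}$ in (\ref{tilde_F_g_n_expan}).

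The main obstacle I anticipate is the careful bookkeeping in the central identity, in particular matching the automorphism factor $|\aut(\mu)|$ against the overcounting in the symmetrization $\sum_{\sigma\in S_n}$ when $\mu$ has repeated parts, and confirming that the ``diagonal'' coincidences genuinely drop out under the evaluation at ${\bf t}={\bf 0}$; the remaining steps are formal manipulations of the generating series.
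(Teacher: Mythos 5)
Your proposal is correct and follows essentially the same route as the paper's own proof: both rest on the identity $\left.\left(\prod_{i=1}^n \nabla(x_i)\right) p_\mu({\bf t})\right\vert_{{\bf t}={\bf 0}} = \delta_{\ell(\mu),n}\,|\aut(\mu)|\,\frac{\partial^n m_\mu(x_1,\dots,x_n)}{\partial x_1\cdots \partial x_n}$, applied term by term to the expansion (\ref{tau_G_H}) of $\tau^{(G,\beta,\gamma)}({\bf t},\beta^{-1}{\bf s})$, with the connected case handled via (\ref{tau_G_tilde_H}) and the genus case by matching powers of $\beta$. Your added detail (why non-bijective distributions of the $\nabla(x_i)$ vanish, and the explicit Riemann--Hurwitz bookkeeping $d-\ell(\nu)=2g-2+n$) only makes explicit what the paper leaves implicit.
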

\begin{proof}
Applying the operator   $\prod_{i=1}^n \nabla(x_i) $ to the symmetric function $p_\mu({\bf t})$ gives
\bea
\left.\left(\prod_{i=1}^n \nabla(x_i)\right) p_\mu({\bf t}) \right\vert_{{\bf t}= {\bf 0}}&\& 
=\left. \prod_{i=1}^n \left(\sum_{j_i =1}^\infty x_i^{j_i -1} {\partial \over \partial t_{j_i}}\right)
\left(\prod_{k=1}^{\ell(\mu)} \mu_k t_{\mu_k}\right)\right\vert_{{\bf t}= {\bf 0}} \cr
&\& =\delta_{\ell(\mu), n} \sum_{\sigma \in S_n} \prod_{k=1}^n \mu_k x_{\sigma(k)}^{\mu_k -1} 
 =\delta_{\ell(\mu), n}|\aut(\mu)|  {\partial^n m_\mu (x_1, \dots, x_n) \over \partial x_1 \cdots \partial x_n} 
\eea
by (\ref{m_lambda}).
Hence, applying it to $\tau^{(G, \beta, \gamma)} ({\bf t},\beta^{-1} {\bf s})$, using the expansion (\ref{tau_G_H}), we obtain
from the definition (\ref{W_G_def}) of $W_n({\bf s}, x_1, \dots, x_n)$
\bea
W_n({\bf s}, x_1, \dots, x_n) &\&= \sum_{\mu,\nu,\,\ell(\mu)=n}  \sum_d \gamma^{|\mu|} \beta^{d-\ell(\nu)}  H^d_{G}(\mu,\nu)
|\aut(\mu)|  {\partial^n m_\mu (x_1, \dots, x_n) \over \partial x_1 \cdots \partial x_n}  p_\nu({\bf s})  \cr
 &\&=  {\partial^n\over \partial x_1 \cdots \partial x_n}F_{n}({\bf s}; x_1,\dots,x_n) 
\eea
by (\ref{F_n_expan}), proving  (\ref{W_n_F_n}). The same calculation proves  the connected case   (\ref{tilde_W_n_F_n}),
and (\ref{tilde_W_g_n_F_n}) follows from equating like powers of $\beta$ in the series expansion.
\end{proof}
Henceforth, we suppress the explicit ${\bf s}$ dependence in the arguments and denote these quantities 
simply as $W_{n}(x_1,\dots,x_n ) $, etc. It follows from  (\ref{W_G_def}) that $W_{n}(x_1,\dots,x_n)$ may 
equivalently be expressed directly in terms of derivatives of special evaluations of 
the $\tau$-function $\tau^{(G, \beta, \gamma)}({\bf t}, {\bf s})$.

\begin{lemma}\label{prop:Wndeterminant} 
\be 
W_n(x_1,\dots,x_n) = 
[\epsilon_1, \dots , \epsilon_n]\left( 
 K_n(x_1+\epsilon_1,\dots, x_n+\epsilon_n; x_1,\dots,x_n)
\big/\det\left(\tfrac{1}{x_i-x_j+\epsilon_i}\right)  \right),
 \label{eq:Wndeterminant}
\ee
where both sides are viewed as elements of $\mathbb K[x_1,\dots,x_n,{\bf s},\beta,\beta^{-1}][[\gamma]]$.
\end{lemma}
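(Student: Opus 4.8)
The plan is to first reduce the ratio on the right-hand side of (\ref{eq:Wndeterminant}) to a pure $\tau$-function evaluation, and then to recognize the multilinear coefficient extraction as the ordered product of current operators. The decisive first observation is that the Cauchy-type determinant $\det\!\left(1/(x_i-x_j+\epsilon_i)\right)$ in the denominator is precisely the prefactor already built into $K_n$. Indeed, in the definition (\ref{K_n_tau_def}) with first arguments $x_i+\epsilon_i$ and second arguments $x_j$, the determinantal factor of $K_n(x_1+\epsilon_1,\dots,x_n+\epsilon_n;x_1,\dots,x_n)$ is $\det\!\left(1/((x_i+\epsilon_i)-x_j)\right)=\det\!\left(1/(x_i-x_j+\epsilon_i)\right)$, which cancels against the denominator. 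The ratio therefore collapses to
\[
\tau^{(G,\beta,\gamma)}\Big(\textstyle\sum_{i=1}^n\big([x_i+\epsilon_i]-[x_i]\big),\ \beta^{-1}{\bf s}\Big).
\]
Since each shift $[x_i+\epsilon_i]-[x_i]$ vanishes at $\epsilon_i=0$ and is a formal power series in $\epsilon_i$, this expression is a genuine formal power series in $(\epsilon_1,\dots,\epsilon_n)$ with no poles; hence the cancellation is legitimate and the coefficient extraction $[\epsilon_1,\dots,\epsilon_n]$ is well defined in $\mathbb K[x_1,\dots,x_n,{\bf s},\beta,\beta^{-1}][[\gamma]]$.

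The second step is to evaluate that coefficient. Extracting the coefficient of $\epsilon_1\cdots\epsilon_n$ is the same as applying $\partial^n/\partial\epsilon_1\cdots\partial\epsilon_n$ and then setting $\epsilon_1=\cdots=\epsilon_n=0$. The key simplification is that the $n$ shifts are in separated variables: $[x_i+\epsilon_i]-[x_i]$ depends only on $\epsilon_i$. Writing its $l$-th component as $\big((x_i+\epsilon_i)^l-x_i^l\big)/l$, one has
\[
\frac{\partial}{\partial\epsilon_i}\,\frac{(x_i+\epsilon_i)^l-x_i^l}{l}\bigg|_{\epsilon_i=0}=(x_i+\epsilon_i)^{l-1}\big|_{\epsilon_i=0}=x_i^{\,l-1},
\]
which is exactly the coefficient of $\partial/\partial t_l$ in $\nabla(x_i)$. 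By the chain rule, each $\partial/\partial\epsilon_i$ evaluated at the origin thus acts on $\tau$ as the directional derivative $\sum_{l\ge 1}x_i^{\,l-1}\partial/\partial t_l=\nabla(x_i)$. Evaluating at $\epsilon_1=\cdots=\epsilon_n=0$ sends the argument $\sum_i([x_i+\epsilon_i]-[x_i])$ to ${\bf t}={\bf 0}$, so the extracted coefficient equals $\big(\prod_{i=1}^n\nabla(x_i)\big)\tau^{(G,\beta,\gamma)}({\bf t},\beta^{-1}{\bf s})\big|_{{\bf t}={\bf 0}}$, which is precisely the definition (\ref{W_G_def}) of $W_n(x_1,\dots,x_n)$.

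The only point requiring care is the bookkeeping in this last step, namely confirming that the multilinear coefficient reproduces exactly the product $\prod_{i}\nabla(x_i)$ with no correction terms. This is guaranteed by the separated-variable structure. Writing $\delta_i:=[x_i+\epsilon_i]-[x_i]$, each differentiation $\partial/\partial\epsilon_i$ applied to $\tau(\sum_j\delta_j)$ brings down exactly one prefactor $\partial\delta_i^{(l)}/\partial\epsilon_i$ together with one $t$-derivative $\partial/\partial t_l$; because $\delta_i$ is independent of the other $\epsilon_{i'}$, subsequent differentiations cannot act back on this prefactor. Thus after all $n$ differentiations the $t$-derivatives have accumulated exactly one per index $i$, and any contribution of order two or higher in a single $\epsilon_i$ drops out of the coefficient of $\epsilon_1\cdots\epsilon_n$. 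Evaluating the prefactors via $\partial\delta_i^{(l)}/\partial\epsilon_i\big|_{0}=x_i^{\,l-1}$ then yields precisely $\prod_{i=1}^n\nabla(x_i)$, completing the identification with $W_n$.
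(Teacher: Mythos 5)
Your proof is correct and takes essentially the approach the paper intends (the lemma is stated there as a direct consequence of the definition (\ref{W_G_def}), with details deferred to the companion papers): the Cauchy-type determinant built into the definition (\ref{K_n_tau_def}) of $K_n$ cancels the denominator exactly, leaving $\tau^{(G,\beta,\gamma)}\big(\sum_i([x_i+\epsilon_i]-[x_i]),\beta^{-1}{\bf s}\big)$, whose coefficient of $\epsilon_1\cdots\epsilon_n$ is $\big(\prod_i\nabla(x_i)\big)\tau^{(G,\beta,\gamma)}({\bf t},\beta^{-1}{\bf s})\big|_{{\bf t}={\bf 0}}=W_n$. Your separated-variable chain-rule bookkeeping is the substance of the argument, and it is sound order by order in $\gamma$ since each coefficient of $\gamma^N$ involves only finitely many of the $t_l$.
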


For connected functions, 
we get the specially elegant relations:
\begin{proposition}\label{prop:detConnected}
\be\label{eq:detConnected1}
\tilde W_1(x) = \lim_{x'\to x} \left( K(x,x') - \frac{1}{x-x'}\right),
\ee
\be\label{eq:detConnected2}
\tilde W_2(x_1,x_2) = \left( -K(x_1,x_2)K(x_2,x_1)  - \frac{1}{(x_1-x_2)^2}\right),
\ee
and for $n\ge 3$
\be\label{eq:detConnectedn}
\tilde W_n(x_1,\dots,x_n) =
\sum_{\sigma \in \mathfrak S_n^{\rm 1-cycle}}
(-1)^\sigma \prod_{i} K(x_i,x_{\sigma(i)}),
\ee
where the last sum is over all permutations in $\mathfrak{S}_n$ having only one cycle.
These equalities hold in 
$\mathbb K(x,x_1,\dots,x_n)[{\bf s},\beta,\beta^{-1}][[\gamma]]$.
\end{proposition}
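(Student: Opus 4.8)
The plan is to deduce all three identities from two facts already at hand: the determinantal expression for the disconnected multicurrent correlators $W_n$ furnished by Lemma~\ref{prop:Wndeterminant} together with Proposition~\ref{K_n_det}, and the standard passage from disconnected to connected correlators. First I would record the latter: since $\tau^{(G,\beta,\gamma)}({\bf 0},{\bf s})=1$ (only the empty partition survives at ${\bf t}={\bf 0}$ in (\ref{tau_G_H})) and the operators $\nabla(x_i)$ are first order, writing $\tau=e^{\ln\tau}$ and applying $\prod_i\nabla(x_i)$ yields the moment--cumulant relation $W_n(x_1,\dots,x_n)=\sum_{P}\prod_{B\in P}\tilde W_{|B|}(x_B)$, the sum running over set partitions $P$ of $\{1,\dots,n\}$ into blocks $B$. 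By Möbius inversion on the partition lattice this determines each $\tilde W_n$ uniquely from $W_1,\dots,W_n$; in particular $\tilde W_1=W_1$, so the $n=1$ claim reduces to evaluating $W_1$.

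Next I would make the determinantal input explicit. Combining Lemma~\ref{prop:Wndeterminant} with Proposition~\ref{K_n_det} gives $W_n=[\epsilon_1,\dots,\epsilon_n]\bigl(\det(K(x_i+\epsilon_i,x_j))\big/\det(\tfrac{1}{x_i-x_j+\epsilon_i})\bigr)$, where $[\epsilon_1,\dots,\epsilon_n]$ extracts the coefficient of $\epsilon_1\cdots\epsilon_n$. I would then split $K(x,x')=\tfrac{1}{x-x'}+\hat K(x,x')$ with $\hat K$ the regular part exhibited in Proposition~\ref{K_expansion}. For $n=1$ the denominator is $1/\epsilon_1$, so $W_1=[\epsilon_1]\,\epsilon_1 K(x_1+\epsilon_1,x_1)=\hat K(x_1,x_1)=\lim_{x'\to x_1}(K(x_1,x')-\tfrac{1}{x_1-x'})$, which is (\ref{eq:detConnected1}). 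For $n=2$, besides its leading behaviour $1/\det(\tfrac{1}{x_i-x_j+\epsilon_i})=\epsilon_1\epsilon_2+\cdots$, the reciprocal Cauchy determinant carries a term $-(\epsilon_1\epsilon_2)^2/(x_1-x_2)^2$ which, multiplied by the polar part $1/(\epsilon_1\epsilon_2)$ of the numerator, supplies the correction $-1/(x_1-x_2)^2$; carrying out the extraction gives $W_2=\hat K(x_1,x_1)\hat K(x_2,x_2)-K(x_1,x_2)K(x_2,x_1)-\tfrac{1}{(x_1-x_2)^2}$, and subtracting $\tilde W_1(x_1)\tilde W_1(x_2)=\hat K(x_1,x_1)\hat K(x_2,x_2)$ via the moment--cumulant relation yields (\ref{eq:detConnected2}).

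For general $n$ the organizing principle is that the determinant is a sum over $\sigma\in\mathfrak S_n$, and the product $\prod_i K(x_i,x_{\sigma(i)})$ factorizes along the cycles of $\sigma$; hence $W_n$ is naturally a sum over set partitions of $\{1,\dots,n\}$ into the cycles of $\sigma$, each block $B$ contributing the single-cycle sum $\sum_\sigma (-1)^\sigma\prod_{i\in B}K(x_i,x_{\sigma(i)})$ over the single cycles $\sigma$ on $B$. This is exactly the moment--cumulant shape of the first step, with the connected building block identified as the cyclic sum. By uniqueness of the cumulants, $\tilde W_n(x_1,\dots,x_n)=\sum_{\sigma\in\mathfrak S_n^{\rm 1-cycle}}(-1)^\sigma\prod_i K(x_i,x_{\sigma(i)})$, which is (\ref{eq:detConnectedn}). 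The regularization of the singular diagonal replaces $K(x_i,x_i)$ by $\tilde W_1(x_i)$, but single cycles of length $\ge 3$ have no fixed points, so only off-diagonal kernel values occur and no diagonal correction is needed.

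The main obstacle is the bookkeeping of the $\epsilon$-regularization and of the reciprocal Cauchy determinant uniformly in $n$: one must show cleanly that the multilinear coefficient extraction converts the singular diagonal of $K$ into the finite value $\tilde W_1$, and that the subleading terms of $\det(\tfrac{1}{x_i-x_j+\epsilon_i})$ produce precisely the Bergman-type corrections $-1/(x_i-x_j)^2$. The cleanest route is the principal-minor expansion $\det(C+\hat K^\epsilon)/\det C=\sum_{S}\det\bigl((C^{-1}\hat K^\epsilon)_{S,S}\bigr)$, with $C=\bigl(\tfrac{1}{x_i-x_j+\epsilon_i}\bigr)$ and $\hat K^\epsilon=\bigl(\hat K(x_i+\epsilon_i,x_j)\bigr)$, using the explicit inverse Cauchy matrix $C^{-1}$. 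The point to verify is that each such correction is attached to a single pair $\{i,j\}$ and behaves as a genuine connected two-point contribution: it survives into the cumulant only at $n=2$ and is cancelled, against $\tilde W_2(x_i,x_j)$ times lower correlators, for every $n\ge 3$. Full details of this computation are carried out in \cite{ACEH1}.
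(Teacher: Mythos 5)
Your overall strategy is sound, and it is in fact the route the paper intends: the overview itself contains no proof of Proposition \ref{prop:detConnected} (it defers entirely to the appendix of \cite{ACEH2}), and the ingredients it supplies for that purpose --- Lemma \ref{prop:Wndeterminant}, Proposition \ref{K_n_det}, and the exponential relation between $\tau$ and $\ln\tau$ --- are exactly the ones you combine. Your moment--cumulant reduction is legitimate (it uses $\tau({\bf 0},\beta^{-1}{\bf s})=1$, which holds since only $\lambda=\emptyset$ survives at ${\bf t}={\bf 0}$), and your $n=1$ and $n=2$ extractions are correct, including the sign of the Bergman-type term produced by the reciprocal Cauchy determinant.

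The one genuine weakness is the general-$n$ step: the claim that the corrections coming from $1/\det\bigl(\tfrac{1}{x_i-x_j+\epsilon_i}\bigr)$ attach only to \emph{disjoint pairs of fixed points} of $\sigma$, each contributing exactly $-1/(x_i-x_j)^2$, is precisely where the content of the proposition lies, and you state it as ``the point to verify'' and then outsource it to a citation (to \cite{ACEH1}, moreover, whereas the computation actually appears in \cite{ACEH2}). A blind proof must close this, and it can be closed in a few lines. Writing $u_i=x_i+\epsilon_i$, $v_j=x_j$, the Cauchy determinant formula gives
\be
\frac{1}{\det\left(\frac{1}{x_i-x_j+\epsilon_i}\right)}
=\frac{\prod_{i,j}(u_i-v_j)}{\prod_{i<j}(u_i-u_j)(v_j-v_i)}
=\prod_{i}\epsilon_i\,\prod_{i<j}\left(1-\frac{\epsilon_i\epsilon_j}{(x_i-x_j)^2}+\cdots\right),
\ee
where every omitted term in each pair factor carries some $\epsilon_k$ to a power $\geq 2$ and therefore cannot contribute to the coefficient of $\epsilon_1\cdots\epsilon_n$. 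Multiplying this against the cycle expansion of $\det\bigl(K(x_i+\epsilon_i,x_j)\bigr)$, whose diagonal entries are $\epsilon_i^{-1}+\hat K(x_i+\epsilon_i,x_i)$, the multilinear extraction forces: each non-fixed point of $\sigma$ to contribute $K(x_i,x_{\sigma(i)})$ at $\epsilon_i=0$; each fixed point either to contribute $\hat K(x_i,x_i)=\tilde W_1(x_i)$ or to be absorbed, together with another fixed point $j$, into one factor $-\epsilon_i\epsilon_j/(x_i-x_j)^2$, the used pair factors thus forming a matching on the fixed-point set. Grouping the data $(\sigma,\text{matching})$ by the set partition whose blocks are unmatched fixed points, matched pairs or $2$-cycles, and longer cycles yields $W_n=\sum_{P}\prod_{B\in P}C_{|B|}(x_B)$ with $C_1=\tilde W_1$, $C_2(x_i,x_j)=-K(x_i,x_j)K(x_j,x_i)-1/(x_i-x_j)^2$ and $C_m$ the single-cycle sum for $m\geq 3$; uniqueness of cumulants then gives $\tilde W_n=C_n$, i.e.\ (\ref{eq:detConnected1})--(\ref{eq:detConnectedn}). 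With this paragraph inserted in place of the appeal to the literature, your argument is complete and self-contained.
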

A detailed proof of this result is provided in the appendix to \cite{ACEH2}.


\subsection{The Christoffel-Darboux relation }

Define the functions $\xi(x,{\bf s})$ and $S(x, {\bf s})$ by
\bea
\xi(x,{\bf s}):=\sum_{k=1}^\infty s_k x^k, \\
\label{xidef}
S(x,  {\bf s}):=x \frac{d}{d x}\xi(x,{\bf s})=\sum_{k=1}^\infty k s_k x^k.
\label{Sdef}
\eea
In the following, we assume that only a finite number of variables $s_k$ are nonzero, 
so $\xi(x, {\bf s})$ and $S(x,  {\bf s})$ are polynomials
\be
\xi(x,{\bf s})=\sum_{k=1}^L s_k x^k, \quad S(x, {\bf s})=\sum_{k=1}^L k s_kx^k
\ee 
of degree $L$. We also assume that only the first $M$ parameters $\{c_i\} $ are nonzero, 
so the weight generating function $G(z)$  is also a polynomial,  of degree $M$ 
\be
 G(z) = \prod_{i=1}^M(1+c_i z). 
\ee
For brevity, the explicit dependence on the parameters ${\bf s}$ will henceforth also be suppressed and $S(x,  {\bf s})$ 
and  $\xi(x,{\bf s})$ just denoted $S(x)$,  $\xi(x)$.

\begin{definition}
Define the operators
\be
\Delta_\pm(x):=\pm\beta e^{\mp\beta^{-1}\xi(x)} \circ D \circ e^{\pm\beta^{-1}\xi(x)}=S(x)\pm \beta D
\label{eq:defDeltapm}
\ee
and
\be
V_\pm(x):=\gamma^{-1}x^{-1}e^{\mp\beta^{-1}\xi(x)}\circ R_{\pm}\circ e^{\pm\beta^{-1}\xi(x)}=G(\Delta_\pm(x)).
\label{eq:defVpm}
\ee
\end{definition}

In the following, we  use boldface uppercase letters, such as $\mat{A}$, $\mat{F}$ or $\mat{V}(x)$ to denote finite dimensional matrices.
\begin{definition}
Define the polynomial $A(r,t)$ of degree $L M-1$ in each variable $(r,t)$   which determines the
	$LM \times LM$ nonsingular matrix $\mat{A}=(\mat{A}_{ij})_{0\leq i,j<LM}$  as follows
\be
	A(r,t):=\left(r\, V_-(t)-t\, V_+(r)\right)\left( \frac{1} {r-t}\right) = \sum_{i=0}^{LM-1} \sum_{j=0}^{LM-1} \mat{A}_{ij}r^i t^j.
\label{eq:defA}
\ee
\end{definition}

The highest total degree term is
\be
g_M \, (L s_L)^M \frac{r\, t^{LM}-t\, r^{LM}}{r-t}=-g_M \, (L s_L)^M \sum_{j=1}^{LM-1} r^j t^{LM-j}
\ee
so 
\be
\det\, \mat{A} =(-1)^\frac{LM(LM-1)}{2}\, g_M^{LM-1} \, (L s_L)^{M(LM-1)}.
\ee

The following {\it Christoffel-Darboux} type relation, expressing $K(x,x')$ as a finite sum of basis elements is proved in \cite{ACEH1, ACEH2}. 
\begin{theorem}
\label{thm:CD} 
\be
K(x,x')=\frac{1}{x-x'} A(R_+,R_-')\Psi^+_0(x)\Psi^-_0(x').
\label{eq:CD}
\ee
\end{theorem}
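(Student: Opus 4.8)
The plan is to reduce (\ref{eq:CD}) to a finite Christoffel--Darboux identity in the adapted bases, and then to prove that identity by telescoping the index shift generated by $R_\pm$. Since $A(r,t)$ is a genuine polynomial of degree $LM-1$ in each variable and $R_+$ (acting on $x$) commutes with $R_-'$ (acting on $x'$), the recursions (\ref{recursions_Psi_plus_k}), (\ref{recursions_Psi_minus_k}) give $R_+^i\Psi^+_0=\Psi^+_i$ and $(R_-')^j\Psi^-_0=\Psi^-_j$, so that
\[ A(R_+,R_-')\,\Psi^+_0(x)\,\Psi^-_0(x')=\sum_{i,j=0}^{LM-1}\mat{A}_{ij}\,\Psi^+_i(x)\,\Psi^-_j(x'). \]
Thus (\ref{eq:CD}) is equivalent to the finite bilinear relation $(x-x')K(x,x')=\sum_{i,j}\mat{A}_{ij}\Psi^+_i(x)\Psi^-_j(x')$, whose right-hand side is supported on the length-$LM$ window of basis elements.

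Next I would obtain a bilinear representation of $K$ itself from Proposition~\ref{K_expansion}. Grouping the hook expansion (\ref{Khook}) by the diagonal index $j$ and comparing with the explicit series (\ref{Psi_plus_k_ser}), (\ref{Psi_minus_k_ser}), the $x'$-factor of each diagonal is exactly $x'^{-1}\Psi^-_j(x')$, while the $x$-factor equals $\gamma^{-1}x^{-1}\Psi^+_{1-j}(x)$ up to an explicit polar remainder supported in the powers $x^{-j},\dots,x^{-1}$; here one uses $\rho_a=\gamma G(a\beta)\rho_{a-1}$ and (\ref{rho_j-}) together with $R_\pm=\gamma x G(\pm\beta D)$. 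This exhibits $K-\tfrac1{x-x'}$ as a semi-infinite bilinear sum $\tfrac1{\gamma xx'}\sum_{j\ge1}\Psi^+_{1-j}(x)\Psi^-_j(x')$ over the dual-paired indices $(1-j)+j=1$ of (\ref{dual_Psi_pairing}), corrected by explicit polar remainders in the negative powers of $x$. As a consistency check, in the free case $G\equiv 1$, ${\bf s}={\bf 0}$ the polar terms cancel the tail and return $K=\tfrac1{x-x'}$.

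The heart of the argument is to collapse the semi-infinite sum onto the finite window. On the combinatorial side, $R_+-R_-'$ acts on $\sum_{j\ge1}\Psi^+_{1-j}\Psi^-_j$ purely by shifting the summation index, so it telescopes to a single boundary contribution. On the algebraic side, the defining identity (\ref{eq:defA}), $A(r,t)(r-t)=rV_-(t)-tV_+(r)$, lifts---because $R_+$ and $R_-'$ commute---to the operator identity
\[ A(R_+,R_-')(R_+-R_-')=R_+\,V_-(R_-')-R_-'\,V_+(R_+). \]
Applying both sides to $\Psi^+_0(x)\Psi^-_0(x')$ and using the conjugation built into (\ref{eq:defVpm}), namely $V_\pm=G(\Delta_\pm)=e^{\mp\beta^{-1}\xi}G(\pm\beta D)e^{\pm\beta^{-1}\xi}$ together with $\sum_j h_j(\beta^{-1}{\bf s})x^j=e^{\beta^{-1}\xi(x)}$, the operators $V_\pm$ act on the Baker functions as the quantum spectral curve, reducing the right-hand side to the same elementary expression in $x,x'$ produced by the telescoping. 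Since $G$ has degree $M$ and $S$ degree $L$, these operators have order $LM$, which is exactly what truncates the tail beyond the window and fixes the size of $\mat{A}$.

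The main obstacle is precisely this last step: establishing the quantum-curve action---that $V_+(R_+)\Psi^+_0$ and $V_-(R_-')\Psi^-_0$ reduce to explicit polar and polynomial data in $x$ and $x'$---and then verifying that the finitely many boundary coefficients generated by the telescoping agree with those encoded in $\mat{A}$. I expect to close the argument either by induction on $LM$, applying $R_+-R_-'$ to the asserted identity and matching the residual identity of one lower degree, or by a direct comparison of the coefficient of each monomial $x^a(x')^b$: the polynomiality of $G$ and $S$ guarantees that only finitely many of these comparisons are independent, so a finite check, together with the nonsingularity $\det\mat{A}\neq0$ recorded after (\ref{eq:defA}), suffices.
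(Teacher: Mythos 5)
Your opening moves are sound, and in fact parallel the beginning of the argument in the companion papers (note that this overview paper does not prove Theorem \ref{thm:CD} itself; it defers entirely to \cite{ACEH1,ACEH2}): the reduction of (\ref{eq:CD}) to the finite bilinear identity $(x-x')K(x,x')=\sum_{i,j=0}^{LM-1}\mat{A}_{ij}\Psi^+_i(x)\Psi^-_j(x')$ via the recursions, the reorganization of the hook expansion (\ref{Khook}) into the semi-infinite ``Fermi sea'' form $K(x,x')=\frac{1}{\gamma x x'}\sum_{k\ge 1}\Psi^+_{1-k}(x)\Psi^-_k(x')$, and the telescoping observation $(R_+-R_-')\sum_{k\ge 1}\Psi^+_{1-k}\Psi^-_k=\Psi^+_1\Psi^-_1$ are all correct.

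The heart of your argument, however, breaks down. First, you misread (\ref{eq:defA}): by (\ref{eq:defVpm}) the $V_\pm$ are differential operators, $V_\pm=G(S\pm\beta D)$, and $A(r,t)$ is defined by \emph{applying} the operator $r\,V_-(t)-t\,V_+(r)$ to the function $1/(r-t)$ (the higher-order poles cancel, which is why a polynomial results); it is not the symbol identity $A(r,t)(r-t)=rV_-(t)-tV_+(r)$, and expressions such as $V_+(R_+)$ or $V_-(R_-')$ --- substituting one operator into an operator-valued function of another --- are not defined without an ordering prescription. Second, and independently of interpretation, the two quantities you propose to match are genuinely different. Take $G(z)=1+z$, $S(x)=s_1x$, so $LM=1$ and $A\equiv 1$: then $A(R_+,R_-')(R_+-R_-')\Psi^+_0\Psi^-_0=\Psi^+_1\Psi^-_0-\Psi^+_0\Psi^-_1$, whereas the telescoping boundary term is $\Psi^+_1\Psi^-_1$; since the products $\Psi^+_a(x)\Psi^-_b(x')$ are linearly independent, these cannot coincide. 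The structural reason is that your telescoping computes $(R_+-R_-')$ applied to $\gamma x x'K$, while the Christoffel--Darboux sum is supposed to equal $(x-x')K$, and the discrepancy factor $\gamma x x'/(x-x')$ does not commute with the differential operators $R_\pm$; your sketch never confronts this. The mechanism that actually truncates the sum is different: write $(x-x')K=\left(\frac{1}{x'}-\frac{1}{x}\right)x x' K$ and use the multiplicative relations (\ref{eq:Q+}), (\ref{eq:Q-}), i.e.\ $\frac{1}{\gamma x}\Psi^+_i=\sum_j Q^+_{ij}\Psi^+_j$ with the band matrices $Q^\pm_{ij}=[r^j]\,V_\pm(r)\,r^{i-1}$; the coefficients of the products $\Psi^+_a\Psi^-_b$ with $a\le 0<b$ then cancel pairwise because $G(\Delta_+)$ and $G(\Delta_-)$ are formal adjoints of one another (equivalently $Q^+_{1-b,a}=Q^-_{1-a,b}$), and the finitely many surviving coefficients are exactly those of $A(r,t)$ expanded through $1/(r-t)=\sum_{n\ge 0}t^n r^{-n-1}$. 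Without this cancellation your closing claim that ``a finite check suffices'' is unjustified --- infinitely many coefficients must be shown to vanish --- and the injectivity of $R_+-R_-'$ on the relevant space of formal series, which your induction would also require, is nowhere established.
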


\br
In the above, we  made the assumption that $G$ is a polynomial, of degree $M$. We  keep this assumption in the following; however, because any given coefficient in the $\tau$-function is a polynomial in the coefficients of $G$, it is possible to drop it in many instances by taking projective limits. Indeed if $G(z)$ is an infinite formal power series and $G_M(z)$ is its degree-$M$ truncation, then all the correlators corresponding to $G(z)$ are the limit (in the sense of convergence of individual coefficients) of those corresponding to $G_M(z)$ as $M\rightarrow \infty$. This enables us, for example, to include 
the exponential weight generating function $G(z)=e^z$ which gives the case of simple (single and double) Hurwitz numbers
studied in \cite{Ok, Pa}.
\er

\section{The spectral curve}
\label{spec_curve}

\subsection{Quantum spectral curve}
\label{quantum_sc}

Applying the operators $\beta D$ and $S(R_\pm)$ to the series expansions  (\ref{Psi_zero}), (\ref{Phi_zero})
we obtain:
\begin{theorem}
The function $\Psi^+_0(x)$  satisfies  
\be
\left(\beta D-S(R_+)\right)\Psi_0^+(x)=0,
\label{eq:quantumCurve}
\ee
and $\Psi^-_0(x)$ satisfies the dual equation
\be
\left(\beta D+S(R_-)\right)\Psi_0^-(x)=0.
\label{eq:quantumCurveDual}
\ee
\end{theorem}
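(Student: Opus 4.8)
The plan is to prove the quantum spectral curve equation by directly applying the two operators $\beta D$ and $S(R_+)$ to the explicit power-series expansion of $\Psi^+_0(x)$ and showing that their actions coincide term-by-term. The key structural facts I will exploit are already assembled in the excerpt: the series expansion
\be
\Psi^+_0(x)=\gamma\sum_{j=0}^\infty \rho_{j-1}\,h_j(\beta^{-1}{\bf s})\,x^{j},
\ee
the recursion $\Psi^+_{k\pm1}=R_+^{\pm1}\Psi^+_k$ from Proposition \ref{recursions_Psi_plus_k}, and the definition $S(x)=\sum_{k=1}^L k s_k x^k$. The idea is that the Euler operator $D$ reads off the power of $x$, while $S(R_+)$ acts as a weighted shift on the index of the basis series; matching the two amounts to a combinatorial identity for the complete symmetric functions $h_j(\beta^{-1}{\bf s})$.

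First I would compute $\beta D\,\Psi^+_0(x)$, which by $D=x\,d/dx$ simply multiplies the $j$-th term by $\beta j$, giving $\gamma\sum_j \beta j\,\rho_{j-1}h_j(\beta^{-1}{\bf s})x^{j}$. Next I would use the recursion relation to interpret $R_+$ as the raising operator on the adapted basis, so that $S(R_+)=\sum_{k=1}^L k s_k R_+^{k}$ sends $\Psi^+_0$ to a finite linear combination $\sum_{k=1}^L k s_k \Psi^+_k(x)$. Writing out each $\Psi^+_k(x)=\gamma\sum_j \rho_{j+k-1}h_j(\beta^{-1}{\bf s})x^{j+k}$ and collecting the coefficient of a fixed power $x^{n}$, the equation (\ref{eq:quantumCurve}) reduces to verifying, for each $n$, the identity
\be
\beta n\,\rho_{n-1}\,h_n(\beta^{-1}{\bf s})=\sum_{k=1}^L k\, s_k\,\rho_{n-1}\,h_{n-k}(\beta^{-1}{\bf s}).
\ee
Since $\rho_{n-1}$ is common to every term, it cancels, and what remains is the classical recurrence $n\,h_n=\sum_{k\ge1}p_k\,h_{n-k}$ for complete symmetric functions, applied to the argument $\beta^{-1}{\bf s}$ with $p_k(\beta^{-1}{\bf s})=k s_k \beta^{-1}$ under the identification $s_i=p'_i/i$ recorded in the excerpt. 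The factor of $\beta$ tracks correctly because the left-hand side carries one explicit $\beta$ while the argument scaling supplies the compensating $\beta^{-1}$.

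The step I expect to be the main obstacle is bookkeeping the index shift $\rho_{j+k-1}$ inside $S(R_+)$ against the raw Euler weight on the left: one must confirm that the recursion $R_+\Psi^+_k=\Psi^+_{k+1}$ shifts only the spectral power and the $\rho$-index uniformly, without disturbing the $h_j(\beta^{-1}{\bf s})$ factor, so that the $\rho$'s genuinely cancel and leave a pure Newton-type identity. This is exactly where the commutation relation $[D,R_+]=R_+$ and the conjugation form $R_+=e^{T(D-1)}\circ x\circ e^{-T(D-1)}$ are needed to guarantee that $R_+$ raises the $\rho$-index by precisely one unit. Once this compatibility is checked, the proof of (\ref{eq:quantumCurve}) is immediate, and the dual equation (\ref{eq:quantumCurveDual}) for $\Psi^-_0(x)$ follows by the identical argument with the substitutions $\beta\to-\beta$, $\rho_{j}\to\rho^{-1}_{-j}$, $h_j({\bf s})\to h_j(-{\bf s})$ and the sign flip in the definition of $R_-$, which produces the $+S(R_-)$ rather than $-S(R_+)$ in front.
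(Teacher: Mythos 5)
Your proof is correct and takes essentially the same route the paper indicates: applying $\beta D$ and $S(R_\pm)$ directly to the series expansions (\ref{Psi_zero}), (\ref{Phi_zero}), using the recursion $\Psi^\pm_k = R_\pm^{k}\Psi^\pm_0$ (so that the coefficient of $x^n$ carries the same $\rho$-factor in every term), and reducing the claim to the Newton identity $n\,h_n({\bf t})=\sum_{k\geq 1} k\,t_k\,h_{n-k}({\bf t})$ evaluated at ${\bf t}=\pm\beta^{-1}{\bf s}$, the minus sign in the dual argument accounting for the $+S(R_-)$ in the second equation.
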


The linear ordinary differential operators  $\left(\beta D-S(R_+)\right)$ and $\left(\beta D+S(R_-)\right)$ with polynomial coefficients
that annihilate $\Psi^+_0(x)$, $\Psi^-_0(x)$  in   (\ref{eq:quantumCurve}), (\ref{eq:quantumCurveDual})
will be referred to as the  {\it quantum spectral curve} and its dual.

\subsection{Classical spectral curve}
\label{classical_sc}

The classical spectral curve is obtained by replacing the Euler operator appearing
in (\ref{eq:quantumCurve}), (\ref{eq:quantumCurveDual}) and (\ref{Rdef}) by the 
product $\beta^{-1} x y$, or by $-\beta^{-1} x y$ in the dual case.
Here $y$ is viewed as the classical variable canonically conjugate to $x$
(and $\beta$ is identified with Planck's constant $\hbar$). This gives
\be
x y = S(\gamma x G(x y))
\ee
as the classical spectral curve. This is a rational curve 
realized as an $LM$-sheeted branched cover of $\Cb \Pb^1$.
The significance of this is that  $y \, dx$, viewed as a 
meromorphic form on this Riemann surface, is equal to
the  rational differential $\tilde{W}_{0, 1}(x) dx$, as shown in ref.~\cite{ACEH2}.
\begin{proposition} 
\label{W_tilde_0}
\be
\tilde{F}_{0, 1}(x) = \int y\, dx.
\ee
\end{proposition}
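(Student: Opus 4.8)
The plan is to route the identity through the Baker function $\Psi^+_0(x)$, which carries two complementary descriptions: via the Sato formula it generates the connected correlators $\tilde W_n$, while via the quantum spectral curve (\ref{eq:quantumCurve}) its leading WKB term is governed by the classical curve. Matching the two descriptions at order $\beta^{-1}$ will produce the claim.

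First I would express $\tilde F_{0,1}$ through $\Psi^+_0$. Evaluating the Sato formula at ${\bf t}={\bf 0}$ gives $\Psi^+_0(x)=\tau^{(G, \beta, \gamma)}([x],\beta^{-1}{\bf s})\big/\tau^{(G, \beta, \gamma)}({\bf 0},\beta^{-1}{\bf s})$. Since the shift by $[x]$ is generated by $\int_0^x\nabla(x')\,dx'=\sum_{i\ge 1}\frac{x^i}{i}\partial_{t_i}$, and $\nabla(x'_1)\cdots\nabla(x'_k)\ln\tau^{(G, \beta, \gamma)}\big|_{{\bf t}={\bf 0}}=\tilde W_k(x'_1,\dots,x'_k)$, the logarithm admits the cumulant expansion
\[
\ln\Psi^+_0(x)=\sum_{k\ge 1}\frac{1}{k!}\int_0^x\cdots\int_0^x \tilde W_k(x'_1,\dots,x'_k)\,dx'_1\cdots dx'_k.
\]
By the Riemann-Hurwitz grading (\ref{Riem_Hurw_g_d}) one has $\tilde W_k=\sum_{g\ge 0}\beta^{2g-2+k}\tilde W_{g,k}$, so the $k$-th summand starts at order $\beta^{k-2}$ and only the $k=1$, $g=0$ term contributes at order $\beta^{-1}$. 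Using (\ref{tilde_W_g_n_F_n}) with $n=1$, together with the vanishing of the constant term of $\tilde F_{0,1}$ (every monomial $m_{(\mu_1)}(x)=x^{\mu_1}$ has $\mu_1\ge 1$), this yields
\[
[\beta^{-1}]\,\ln\Psi^+_0(x)=\int_0^x\tilde W_{0,1}(x')\,dx'=\tilde F_{0,1}(x).
\]

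Second I would identify this coefficient from the quantum spectral curve. Writing the WKB ansatz $\Psi^+_0(x)=\exp\big(\beta^{-1}\sum_{k\ge 0}\beta^k S_k(x)\big)$, the previous step shows $S_0=\tilde F_{0,1}$. Since $\beta D\,\Psi^+_0=(x\,\partial_x\Phi)\,\Psi^+_0$ with $\Phi=\sum_k\beta^k S_k$, the operator $\beta D$ acts to leading order as multiplication by its symbol $x\,S_0'(x)$; consequently $G(\beta D)$, $R_+=\gamma x G(\beta D)$ and the polynomial $S(R_+)$ act to leading order as multiplication by $G(xy)$, $\gamma x G(xy)$ and $S(\gamma x G(xy))$ respectively, where $y:=S_0'(x)$. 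The order-$\beta^0$ part of $(\beta D-S(R_+))\Psi^+_0=0$ is therefore
\[
xy=S\big(\gamma x G(xy)\big),
\]
which is precisely the classical spectral curve; thus $y=S_0'=\tilde F_{0,1}'$ is its distinguished formal branch and $S_0(x)=\int y\,dx$. Combining the two steps gives $\tilde F_{0,1}(x)=S_0(x)=\int y\,dx$, as claimed.

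I expect the main obstacle to be the rigorous leading-symbol reduction in the second step: because $[\beta D,x]=\beta x$ is $O(\beta)$, one must check order by order that neither $G(\beta D)$ nor any power $R_+^k$ produces $\beta^0$ corrections to the symbols $G(xy)$ and $(\gamma x G(xy))^k$, so that the classical curve emerges cleanly from the $\beta^0$ coefficient. This formal WKB bookkeeping is the technical heart, carried out in detail in \cite{ACEH2}; by contrast the cumulant expansion, the genus grading, and the fixing of the integration constant are routine.
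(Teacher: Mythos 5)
Your argument is correct, and it takes a genuinely different route from the one the paper relies on. The paper itself gives no proof here but defers to the companion work \cite{ACEH2}, whose strategy is reflected in the machinery of Sections 4--6 of this overview: one writes $\tilde W_1(x)=\frac{1}{\beta x}\Tr\,\mat{M}(x)\mat{E}(x)$ (Proposition \ref{prop:WgntoM}, resting on the Christoffel--Darboux Theorem \ref{thm:CD} and the folded ODE system), and then extracts the $[\beta^{-1}]$ coefficient using the WKB expansion of the projector $\mat{M}$ (Theorem \ref{defthm:MWKB}), whose leading order $\mat{V}(z)\mat{F}\mat{V}(z)^{-1}$ makes that coefficient computable in closed form and yields $\tilde W_{0,1}(X(z))=Y(z)$. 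You instead argue scalarly: the cumulant/principal-specialization expansion $\ln\Psi^+_0(x)=\sum_{k\ge1}\frac{1}{k!}\int_0^x\cdots\int_0^x\tilde W_k$, combined with the Riemann--Hurwitz grading (\ref{Riem_Hurw_g_d}) and (\ref{tilde_W_g_n_F_n}), shows both that $\beta\ln\Psi^+_0$ has no negative powers of $\beta$ (so the WKB ansatz is legitimate) and that its leading coefficient is exactly $\tilde F_{0,1}$; then the quantum curve (\ref{eq:quantumCurve}) at leading WKB order forces $y=\partial_x\tilde F_{0,1}$ to satisfy $xy=S(\gamma x G(xy))$. Your route is more elementary and entirely self-contained within the results stated in this overview (it bypasses $\mat{A}$, $\mat{E}$, $\mat{M}$ altogether), and it makes transparent why connectedness is essential: for the disconnected $W_k$ the $\beta$-expansion is unbounded below and the ansatz would fail. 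What the paper's route buys is uniformity: the same matrix machinery simultaneously controls all $\tilde W_{g,n}$, their rationality and pole structure, which is needed for the full topological recursion in any case.

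Two points in your sketch should be completed, both routine. First, the symbol calculus you flag as the technical heart need not be outsourced to \cite{ACEH2}: from $(\beta D)\bigl(f e^{\beta^{-1}\Phi}\bigr)=\bigl(x\Phi' f+\beta D f\bigr)e^{\beta^{-1}\Phi}$, and the fact (from your first step) that $\Phi$ and all prefactors have non-negative $\beta$-order, a one-line induction gives $R_+^k e^{\beta^{-1}\Phi}=\bigl((\gamma x G(x\Phi'))^k+O(\beta)\bigr)e^{\beta^{-1}\Phi}$, so no order-$\beta^0$ corrections arise; since $G$ and $S$ are polynomials, $S(R_+)$ is handled by finitely many such terms. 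Second, branch identification: your argument shows only that $u:=x\,\partial_x\tilde F_{0,1}$ solves $u=S(\gamma x G(u))$. To match the branch parametrized by (\ref{zcurve}), observe that this fixed-point equation has a unique solution with $u=O(\gamma)$ in $\mathbb K[x,{\bf s}][[\gamma]]$, that $x\,\partial_x\tilde F_{0,1}=O(\gamma)$ by (\ref{tilde_F_g_n_expan}), and that $u=S(z)$ with $z=\gamma x G(S(z))$ is also such a solution; uniqueness then pins $y$ to the branch through $z=0$, completing the identification $\tilde F_{0,1}=\int y\,dx$.
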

The spectral curve  admits a rational parametrization by the  functions $X(z)$ and $Y(z)$ defined as:
\bea\label{zcurve}
	X(z)&:=&\frac{z}{\gamma G(S(z))} , \cr
	Y(z)&:=& \frac{S(z)}{z}\,\gamma G(S(z)).
\eea
Note  that we have
\be\
X(z) Y(z)=S(z).
\label{xyrel}
\ee

The ramification points under the projection to the Riemann sphere $(x,y) \ra x$  are given by 
\be
{d X \over dz} =0, \quad \text{with }\  {dY \over dz} \ne 0.
\ee
Equivalently, the ramification points $(X(z), Y(z))$ are given by $z$-values in the set $\mathcal{A}$ of
roots of the polynomial defined as follows
 \be
\sigma(z) :=G(S(z)) - z S'(z) G'(S(z)),
\ee
and the point $z=\infty$, which we do not include in $\mathcal{A}$.

Assuming  $x$ is {\it not} a branch point and picking a point $(X(z), Y(z))$
over $x$ with uniformizing parameter $z$, 
over  a neighbourhood of $x = X(z)$ that contains no branch
points, the equation $X(z)=x$, has $LM$ solutions, which we  label as follows:  
\be \label{eq:defZordering}
\{z^{(0)}(z):=z, \ z^{(1)}(z), \dots, \ z^{(LM-1)}(z)\}
\ee
so
\be 
X(z^{(i)}(z))=x, \quad i=0, \dots, LM-1,
\ee
The ordering for the roots $\{z^{(1)}(z), \dots, z^{(LM-1)}(z)\}$ is fixed arbitrarily, once and for all, within this neighbourhood;
and the functions $z^{(i)}(z)$ are analytic in $z$, as are $X(z)$, $Y(z)$ and $1/X'(z)$.

\begin{definition}
For $z$ as above let $\Vb(z)$ denote the Vandermonde matrix with elements
\be
\Vb(z)_{i j} = \left(z^{(j)}(z)\right)^{i-1}.
\ee
\end{definition}

\section{Linear first order system for  the adapted basis}
\label{lin_diff_system}

\subsection{The constant coefficient infinite system}
 \label{inf_system}
 
 Define two doubly infinite column vectors whose components are the functions $\Psi^+_k(x)$ and $\Psi^-_k(x)$.
\be
\vec{{\Psi}}^+_{\infty}:=\begin{pmatrix}
\vdots\\
{\Psi}^+_{-1}\\
{\Psi}^+_{0}\\
{\Psi}^+_{1}\\
\vdots\\
\end{pmatrix},\,\,\,\,\,\,\,\,\,\,\,\,\,
\vec{{\Psi}}^-_\infty:=\begin{pmatrix}
\vdots\\
{\Psi}^-_{-1}\\
{\Psi}^-_{0}\\
{\Psi}^-_{1}\\
\vdots\\
\end{pmatrix}.
\ee 

Here, and in what follows, the word \emph{constant} refers to a quantity that does not depend on $x$ or $x'$. 
Now define four doubly infinite constant matrices $ Q^{\pm}$ and $P^{\pm}$ 
by their generating functions
\bea
{\mathcal P}^{\pm}(t,r)=\sum_{i,j=-\infty}^\infty P_{ij}^{\pm}t^{i}r^{j}=\Delta_\pm(r)\sum_{i=-\infty}^{\infty} (tr)^i,\\
{\mathcal Q}^{\pm}(t,r)=\sum_{i,j=-\infty}^\infty Q_{ij}^{\pm}t^{i}r^{j}=V_\pm(r) r^{-1} \sum_{i=-\infty}^{\infty} (tr)^i.
\eea
 Note that $P^{\pm}$ are upper triangular, with diagonal entries $\PP^{\pm}_{kk}=\pm k \beta$ and upper triangular
 entries $\PP^{\pm}_{kj} = (j-k) s_{j-k}$, while $Q^{\pm}$ are finite band matrices, with just one band below the principal
 diagonal, and $LM-2$ above it.

As shown in refs.~\cite{ACEH1,  ACEH2}, the recursion relations (\ref{recursions_Psi_plus_k}), (\ref{recursions_Psi_minus_k}),  together with the 
commutator relation (\ref{Rcom}) can be expressed equivalently as the following linear, constant
coefficient system for $\Psi^+_k(x)$ and $\Psi^-_k(x)$  :
\begin{theorem} 
\bea
\label{eq:P+}
\beta D \vec{{\Psi}}^+_\infty= P^{+}\vec{{\Psi}}^+_\infty,
\\
\frac{1}{\gamma x}\vec{{\Psi}}^+_\infty= Q^{+}\vec{{\Psi}}^+_\infty,
\label{eq:Q+}
\eea
and
\bea\label{eq:P-}
-\beta D \vec{{\Psi}}^-_\infty= P^{-}\vec{{\Psi}}^-_\infty,
\\
\label{eq:Q-}\frac{1}{\gamma x}\vec{{\Psi}}^-_\infty= Q^{-}\vec{{\Psi}}^-_\infty.
\eea
\end{theorem}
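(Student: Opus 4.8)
The plan is to reduce all four matrix identities to the recursion relations (\ref{recursions_Psi_plus_k})--(\ref{recursions_Psi_minus_k}) together with elementary operator manipulations, after first putting the basis elements into closed operator form. The starting observation is the generating identity $\sum_{j\ge 0} h_j(\pm\beta^{-1}{\bf s})\,x^j = e^{\pm\beta^{-1}\xi(x)}$, which is just the definition of the complete symmetric functions evaluated at $t_i = \pm\beta^{-1}s_i$, combined with $\rho_{j+k-1} = e^{T(j+k-1)}$ and $\rho^{-1}_{-j-k} = e^{-T(-(j+k))}$ from (\ref{rho_j+})--(\ref{rho_j-}). Since $e^{T(D-1)}$ and $e^{-T(-D)}$ act diagonally on monomials (e.g. $e^{T(D-1)}x^n = e^{T(n-1)}x^n$), the series (\ref{Psi_plus_k_ser}), (\ref{Psi_minus_k_ser}) collapse to the closed forms
\be
\Psi^+_k(x) = \gamma\, e^{T(D-1)}\bigl(x^k e^{\beta^{-1}\xi(x)}\bigr), \qquad \Psi^-_k(x) = e^{-T(-D)}\bigl(x^k e^{-\beta^{-1}\xi(x)}\bigr),
\ee
which are consistent with (\ref{Tcom}) and the recursions but are cleanest to read off directly from the series.

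For the $P^{\pm}$ equations (\ref{eq:P+}), (\ref{eq:P-}) I would apply $\pm\beta D$ to these closed forms. Because $\pm\beta D$ commutes with the $D$-functions $e^{T(D-1)}$, $e^{-T(-D)}$, it suffices to compute $\pm\beta D(x^k e^{\pm\beta^{-1}\xi})$; using $x\tfrac{d}{dx}\xi = S$ this equals $(\pm k\beta + S(x))\,x^k e^{\pm\beta^{-1}\xi}$, and since $S(x)x^k = \sum_{l\ge 1} l s_l x^{k+l}$, reapplying the outer operator gives
\be
\pm\beta D\,\Psi^\pm_k = \pm k\beta\,\Psi^\pm_k + \sum_{l\ge 1} l s_l\,\Psi^\pm_{k+l}.
\ee
Reading off the entries from the generating functions ${\mathcal P}^\pm(t,r) = \Delta_\pm(r)\sum_i (tr)^i$, namely $P^\pm_{kk} = \pm k\beta$ and $P^\pm_{k,k+l} = l s_l$, this is exactly the row-$k$ component of (\ref{eq:P+}), (\ref{eq:P-}); the sums are finite because $S$ is a polynomial, so no convergence question arises.

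For the $Q^{\pm}$ equations (\ref{eq:Q+}), (\ref{eq:Q-}) I would start from the recursions in the form $\Psi^\pm_k = R_\pm\Psi^\pm_{k-1}$ with $R_\pm = \gamma x\,G(\pm\beta D)$, giving $\tfrac{1}{\gamma x}\Psi^\pm_k = G(\pm\beta D)\Psi^\pm_{k-1}$. Commuting $G(\pm\beta D)$ past the outer $D$-function and then invoking the conjugation identities behind (\ref{eq:defDeltapm}), (\ref{eq:defVpm}), namely $G(\pm\beta D)\,e^{\pm\beta^{-1}\xi} = e^{\pm\beta^{-1}\xi}\,G(\Delta_\pm) = e^{\pm\beta^{-1}\xi}\,V_\pm(x)$, reduces everything to the action of the differential operator $V_\pm(x)$ on a single monomial. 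Writing the finite expansion $V_\pm(x)\,x^{k-1} = \sum_{p=0}^{LM} v^{(\pm,k)}_p\,x^{k-1+p}$ and applying the outer operator termwise yields $\tfrac{1}{\gamma x}\Psi^\pm_k = \sum_{p} v^{(\pm,k)}_p\,\Psi^\pm_{k-1+p}$, which matches the entries $Q^\pm_{km} = v^{(\pm,k)}_{m-k+1}$ extracted from ${\mathcal Q}^\pm(t,r) = V_\pm(r)r^{-1}\sum_i (tr)^i$ and exhibits the advertised finite band structure (one subdiagonal, finitely many superdiagonals).

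The main obstacle is bookkeeping rather than anything conceptual: one must track the noncommutativity carefully, in particular the shift rule $x^{-1}f(D) = f(D+1)x^{-1}$, which together with $e^{T(D)} = \gamma G(\beta D)e^{T(D-1)}$ (the operator form of $e^{T(x)-T(x-1)}=\gamma G(\beta x)$) reconciles the recursion route with a direct computation of $\tfrac{1}{\gamma x}\Psi^\pm_k$; and one must check that the index conventions in the generating-function definitions of $P^\pm$, $Q^\pm$ line up exactly with the coefficients produced above, in particular that the factor $r^{-1}$ in ${\mathcal Q}^\pm$ is precisely what produces the single subdiagonal. Once the closed forms and the two conjugation identities are in hand, each of the four identities is a one-line termwise comparison.
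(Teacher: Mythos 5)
Your proposal is correct and follows essentially the route the paper indicates (and defers to \cite{ACEH1,ACEH2} for): everything is reduced to the recursion relations (\ref{recursions_Psi_plus_k})--(\ref{recursions_Psi_minus_k}) and the conjugation identities relating $R_\pm$, $\Delta_\pm$ and $V_\pm$, with the matrix entries of $P^\pm$, $Q^\pm$ read off from their generating functions. Your closed forms $\Psi^+_k = \gamma\, e^{T(D-1)}\bigl(x^k e^{\beta^{-1}\xi(x)}\bigr)$, $\Psi^-_k = e^{-T(-D)}\bigl(x^k e^{-\beta^{-1}\xi(x)}\bigr)$ are precisely the operator identities (\ref{Tcom}) applied to the series (\ref{Psi_plus_k_ser})--(\ref{Psi_minus_k_ser}), so your diagonal-action argument for the $P^\pm$ equations carries the same information as the commutation relation (\ref{Rcom}) that the paper cites as the key ingredient.
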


\subsection{Finite dimensional first order system: folding}
\label{dual_cov_deriv}

Using the mulitplicative recursion relations   (\ref{eq:Q+}), (\ref{eq:Q-}) and the finite band structure 
of the matrices $\QQ^{\pm}$, we can ``fold'' the constant coefficient terms on the RHS of 
eqs. (\ref{eq:P+}) and (\ref{eq:P-}) onto a  finite ``window'' consisting of the first
$LM$ components of the vectors $\vec{{\Psi}}_{\infty}^+$ and $\vec{{\Psi}}_{\infty}^-$,
at the cost of introducing a polynomial dependence upon $1/x$ in the coefficients.
In fact, due to the special structure of eqs.~(\ref{eq:Q+}), (\ref{eq:Q-}),
 this dependence turns out to just be linear in $1/x$.

Define the two column vectors $\vec{{\Psi}}^+(x)$ and $\vec{{\Psi}}^-(x)$ of dimension $LM$ by 
\be
\vec{{\Psi}}^+:=\begin{pmatrix}
{\Psi}^+_{0}\\
{\Psi}^+_{1}\\
{\Psi}^+_{2}\\
\dots\\
{\Psi}^+_{LM-1}\\
\end{pmatrix},\quad \quad 
\vec{{\Psi}}^-:=\begin{pmatrix}
{\Psi}^-_{0}\\
{\Psi}^-_{1}\\
{\Psi}^-_{2}\\
\dots\\
{\Psi}^-_{LM-1}\\
\end{pmatrix}.
\ee 
Let the elements $\tilde{\mat{E}}_{ij}(x)$ of the $ML \times ML$ matrix $\tilde{\mat{E}}(x)$ be defined by
the generating function expression
\be
\sum_{i,j=0}^{LM-1}\tilde{\mat{E}}(x)_{ij}r^{i}t^{j} := \left(\Delta_+(r)rV_-(t)-\Delta_-(t)tV_+(r)\right)\left(\frac{1}{r-t}\right) -\frac{1}{\gamma x} rt \frac{S(r)-S(t)}{r-t},
\label{tildeEE}
\ee
and define a further pair of $LM \times LM$ matrices $\mat{E}(x)$ and $\mat{E}'(x)$ by
\bea
\mat{E}(x) &\& := (\mat{A}^T)^{-1} \tilde{\mat{E}}^T
\label{def_EE}\\
\mat{E}'(x) &\& := \mat{A}^{-1} \tilde{\mat{E}}
\label{def_EE'}
\eea
where $\mat{A}$ is the $LM \times LM$ matrix entering in the Christoffel-Darboux relation in Theorem~\ref{thm:CD}.
The matrices $\mat{E}(x)$ and $\mat{E}'(x)$ thus satisfy the following duality relation:
\be
\label{maste}
\mat{A}{\mat{E}}(x)-{\mat{E}}'(x)^{T}\mat{A}=0.
\ee
The {\it folded} version of the linear relations (\ref{eq:P+}) and (\ref{eq:P-}) is then \cite{ACEH2}:
\begin{theorem}
\label{prop:finiteSystem}

The following finite dimensional differential systems follow from (\ref{eq:P+}) and (\ref{eq:P-})
\bea\label{Edef}
\beta D \vec{{\Psi}}^+&=&{\mat{E}}(x)\vec{{\Psi}}^+,\\
-\beta D \vec{{\Psi}}^-&=&{\mat{E}'}(x)\vec{{\Psi}}^-.
\eea
\end{theorem}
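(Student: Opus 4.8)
The plan is to start from the infinite constant-coefficient system of Theorem~\ref{prop:finiteSystem}'s predecessor, namely equations~(\ref{eq:P+}) and~(\ref{eq:P-}), and reduce the doubly infinite vectors to their finite ``window'' of the first $LM$ components by systematically using the multiplicative relations~(\ref{eq:Q+}) and~(\ref{eq:Q-}). First I would observe that, because $Q^\pm$ is a finite band matrix with exactly one subdiagonal band and $LM-2$ superdiagonal bands, the relation $\tfrac{1}{\gamma x}\vec{\Psi}^+_\infty = Q^+ \vec{\Psi}^+_\infty$ lets one express any component $\Psi^+_k$ with $k\ge LM$ as a linear combination, with coefficients polynomial in $1/x$, of the window components $\{\Psi^+_0,\dots,\Psi^+_{LM-1}\}$; iterating this gives every high-index component in terms of the window. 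The content of the recursion $R_+ = \gamma x G(\beta D)$ is precisely that $V_+(R_+) = G(\Delta_+)$ acts as multiplication, which is what the generating function $\mathcal Q^+(t,r) = V_+(r) r^{-1}\sum_i (tr)^i$ encodes.

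Next I would translate the generating-function identity~(\ref{tildeEE}) into the desired folding. The left-hand side of the system, $\beta D \vec\Psi^+ = P^+ \vec\Psi^+_\infty$, has a right-hand side that reaches outside the window; the strategy is to rewrite $P^+$ acting on the full vector as an operator acting only on the window, absorbing the ``overflow'' using the band structure of $Q^+$. Concretely, I would compute the bilinear generating function $\sum_{i,j} P^+_{ij} t^i r^j$, which by definition equals $\Delta_+(r)\sum_i (tr)^i$, and then use the $Q^+$ relation to replace the contributions of indices $j\ge LM$ by window contributions, weighted by factors of $1/(\gamma x)$. The combination $\bigl(\Delta_+(r) r V_-(t) - \Delta_-(t) t V_+(r)\bigr)/(r-t)$ appearing in~(\ref{tildeEE}) is exactly the ``symmetrized'' kernel that arises when one pairs the $P^+$ action against the dual $V_-$ resolvent and subtracts the diagonal singularity at $r=t$; the subtracted term $-\tfrac{1}{\gamma x} rt\,(S(r)-S(t))/(r-t)$ accounts for the linear-in-$1/x$ correction that the folding introduces, consistent with the remark preceding the theorem that this dependence is only linear. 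Reading off the coefficients of $r^i t^j$ for $0\le i,j < LM$ then defines $\tilde{\mat E}(x)$, and the change of basis by $\mat A$ in~(\ref{def_EE}),~(\ref{def_EE'}) produces $\mat E(x)$ and $\mat E'(x)$ in the normalization compatible with the Christoffel--Darboux matrix of Theorem~\ref{thm:CD}.

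For the dual system, the same computation with $P^-$, $Q^-$ and the sign reversal $-\beta D$ goes through verbatim, the interchange of roles of $V_+$ and $V_-$ (and of $\Delta_+$, $\Delta_-$) being exactly the transposition visible in~(\ref{tildeEE}); this is why the single matrix $\tilde{\mat E}$ and its transpose suffice to produce both $\mat E$ and $\mat E'$, and it is also the source of the duality relation~(\ref{maste}), which I would verify as a consistency check by transposing~(\ref{def_EE'}) and comparing with~(\ref{def_EE}). I would close by confirming that the folded relations hold as identities in $\mathbb K[x,x^{-1},{\bf s},\beta,\beta^{-1}]((\gamma))$, so that no analytic convergence is needed.

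\textbf{Main obstacle.} I expect the delicate step to be the bookkeeping in the folding itself: showing that when the overflow components $\Psi^+_k$, $k\ge LM$, are eliminated via $Q^+$, the resulting coefficients telescope into precisely the generating-function expression~(\ref{tildeEE}) with no higher powers of $1/x$ surviving. The commutation relation~(\ref{Rcom}), $[D,R_\pm]=R_\pm$, is what guarantees that $\beta D$ and the multiplication operator interact cleanly, but verifying that the $1/x$ dependence collapses to a single linear term, rather than the naively expected polynomial of degree up to the bandwidth, is the genuinely nontrivial cancellation and is the part I would check most carefully (and which the paper defers to~\cite{ACEH2}).
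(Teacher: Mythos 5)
Your proposal takes essentially the same approach as the paper: fold the infinite systems (\ref{eq:P+}), (\ref{eq:P-}) onto the window of the first $LM$ components by using the band structure of $Q^{\pm}$ together with the multiplicative relations (\ref{eq:Q+}), (\ref{eq:Q-}) to eliminate the overflow components, with the generating-function identity (\ref{tildeEE}) and the matrix $\mat{A}$ packaging the resulting coefficients and the $1/x$-dependence collapsing to a linear term. This is precisely the argument sketched in Section~\ref{dual_cov_deriv}, and, just as you do for the delicate cancellation, the paper defers the detailed bookkeeping to the companion paper \cite{ACEH2}.
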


\subsection{Adjoint differential system}

\begin{definition}
Let $\mat{M}(x)$ be the rank-$1$, $LM \times LM$ matrix  defined by
\be
\label{defM}
	\mat{M}(x) := \vec{\Psi}^-(x) \vec{\Psi}^+(x)^T \, \mat{A}.
\ee
with entries 
\be\label{eq:entryM}
	\mat{M}(x)_{ij} = \Psi^-_i(x) \sum_{k=0}^{LM-1} \Psi^+_k(x) \mat{A}_{k,j} 
\ee
viewed as are elements of $\mathbb K[x,s,\beta,\beta^{-1}][[\gamma]]$.
\end{definition}

The matrix $\mat{M}(x)$ has the following properties:
\begin{proposition}\label{prop:Mpositive}
The entries of $\mat{M}(x)$ are elements of $\mathbb K[x,s,\beta][[\gamma]]$, \textit{i.e.} they contain no negative power of $\beta$.
Moreover,  $\mat{M}(x)$ is a rank 1 projector:
\be
\mat{M}(x)^2=\mat{M}(x)
\quad , \quad
\Tr \,\mat{M}(x)=1.
\ee
and satisfies the adjoint differential system
\be\label{eq:adjODE}
\beta x \frac{d}{dx} \mat{M}(x) = [\mat{E}(x),\mat{M}(x)].
\ee
\end{proposition}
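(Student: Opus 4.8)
The plan is to deduce the projector and trace identities from one scalar normalization, to obtain the adjoint equation by differentiating the defining product, and to settle the $\beta$-positivity by a valuation argument. First I would recast the Christoffel--Darboux relation of Theorem~\ref{thm:CD} in matrix form. Since the recursions \eqref{recursions_Psi_plus_k}--\eqref{recursions_Psi_minus_k} give $R_+^i\Psi^+_0=\Psi^+_i$ and $(R'_-)^j\Psi^-_0=\Psi^-_j$, expanding $A(R_+,R'_-)=\sum_{i,j}\mat{A}_{ij}R_+^i(R'_-)^j$ turns \eqref{eq:CD} into $K(x,x')=\tfrac{1}{x-x'}(\vec{\Psi}^+(x))^T\mat{A}\,\vec{\Psi}^-(x')$. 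Multiplying by $(x-x')$ and letting $x'\to x$, the remainder in the expansion \eqref{Khook} carries an explicit factor $(x-x')$ and so vanishes on the diagonal, leaving the scalar normalization $(\vec{\Psi}^+(x))^T\mat{A}\,\vec{\Psi}^-(x)=1$. Writing $\mat{M}$ from \eqref{defM} as the outer product $\mat{M}=\vec{\Psi}^-\bigl((\vec{\Psi}^+)^T\mat{A}\bigr)=u\,v^T$ with $v^Tu=1$, both $\mat{M}^2=u(v^Tu)v^T=\mat{M}$ and $\Tr\mat{M}=v^Tu=1$ follow at once.

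For the adjoint equation I would differentiate $\mat{M}$ using the folded systems \eqref{Edef}, i.e.\ $\beta D\vec{\Psi}^+=\mat{E}\vec{\Psi}^+$ and $\beta D\vec{\Psi}^-=-\mat{E}'\vec{\Psi}^-$, together with the duality relation \eqref{maste} written as $\mat{A}\mat{E}=(\mat{E}')^T\mat{A}$. Since $\mat{A}$ is constant, the product rule gives $\beta D\mat{M}=-\mat{E}'\mat{M}+\vec{\Psi}^-(\vec{\Psi}^+)^T\mat{E}^T\mat{A}$, and rewriting $\mat{M}\mat{E}=\vec{\Psi}^-(\vec{\Psi}^+)^T(\mat{E}')^T\mat{A}$ via \eqref{maste} reduces the discrepancy to
\[
\beta D\mat{M}-[\mat{E},\mat{M}]
=\Bigl(-(\mat{E}+\mat{E}')\,\vec{\Psi}^-\,(\vec{\Psi}^+)^T
+\vec{\Psi}^-\,\bigl((\mat{E}+\mat{E}')\vec{\Psi}^+\bigr)^T\Bigr)\mat{A}.
\]
Thus the adjoint system \eqref{eq:adjODE} is \emph{equivalent} to the compatibility conditions $(\mat{E}+\mat{E}')\vec{\Psi}^+=0$ and $(\mat{E}+\mat{E}')\vec{\Psi}^-=0$; equivalently, both adapted vectors solve the single system $\beta D\vec{\Psi}=\mat{E}\vec{\Psi}$ while $(\vec{\Psi}^+)^T\mat{A}$ is a left solution of it, so that $\mat{M}$ is the expected rank-one spectral projector of the $\mat{E}$-flow. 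Establishing this compatibility is the step I expect to be the main obstacle: it is not formal from \eqref{Edef} and \eqref{maste} alone but must be read off from the explicit generating-function definitions \eqref{eq:defA}, \eqref{tildeEE} of $\mat{A}$ and $\tilde{\mat{E}}$ --- in particular from the $\Delta_\pm$ and $\tfrac{1}{\gamma x}$ pieces, which record precisely how the two infinite systems \eqref{eq:P+}, \eqref{eq:Q+}, \eqref{eq:Q-} are folded onto the common window. I would prove it by carrying the band relations \eqref{eq:Q+}, \eqref{eq:Q-} through the folding, exactly as in the derivation of \eqref{Edef} in \cite{ACEH2}.

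Finally, for the absence of negative powers of $\beta$ I would argue by valuation on the explicit series \eqref{Psi_plus_k_ser}--\eqref{Psi_minus_k_ser}. Although the factors $h_j(\pm\beta^{-1}{\bf s})$ in $\Psi^\pm_k$ and the entries of $\mat{A}$ separately carry negative powers of $\beta$, the combination $\mat{M}_{ij}=\Psi^-_i\sum_k\Psi^+_k\mat{A}_{kj}$ must cancel them; matching powers of $\gamma$ makes each coefficient a finite sum, so the cancellation can be verified order by order. Conceptually this positivity reflects the nonnegative grading $\beta^{\ell(\mu)+2g-2}$, $g\ge 0$, of the weighted Hurwitz generating function dictated by the Riemann--Hurwitz relation \eqref{Riem_Hurw_g_d}, which every coefficient of $\mat{M}$ inherits; the adjoint equation just proved, whose coefficient matrix $\mat{E}$ itself contains no negative power of $\beta$, is consistent with and can be used to propagate this bound.
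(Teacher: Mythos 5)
Your proof of the projector properties is correct and is the standard route: rewriting Theorem~\ref{thm:CD} as $K(x,x')=\tfrac{1}{x-x'}\,(\vec{\Psi}^+(x))^T\mat{A}\,\vec{\Psi}^-(x')$ and comparing with \eqref{Khook} on the diagonal gives the normalization $(\vec{\Psi}^+(x))^T\mat{A}\,\vec{\Psi}^-(x)=1$, from which $\mat{M}^2=\mat{M}$ and $\Tr\,\mat{M}=1$ follow at once.

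The treatment of \eqref{eq:adjODE}, however, contains a genuine error, not merely an unfinished step. Your reduction of $\beta D\mat{M}-[\mat{E},\mat{M}]$ to the displayed rank-one discrepancy is correct, but that discrepancy vanishes if and only if $(\mat{E}+\mat{E}')\vec{\Psi}^-=\mu\,\vec{\Psi}^-$ and $(\mat{E}+\mat{E}')\vec{\Psi}^+=\mu\,\vec{\Psi}^+$ for a \emph{common scalar} $\mu(x)$, i.e.\ if and only if $[\mat{E}+\mat{E}',\mat{M}]=0$; nothing forces $\mu=0$, and in fact $\mu\neq 0$. Test the simplest case $L=M=1$, $G(z)=1+cz$, $S(x)=sx$, so $LM=1$ and $\mat{A}=(1)$: folding the quantum curves $\beta D\Psi^+_0=s\Psi^+_1$, $\beta D\Psi^-_0=-s\Psi^-_1$ through $\Psi^\pm_1=\gamma x(1\pm c\beta D)\Psi^\pm_0$ gives $\mat{E}=\mat{E}'=s\gamma x(1-cs\gamma x)^{-1}\neq 0$. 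Here $\mat{M}=\Psi^-_0\Psi^+_0=1$, so \eqref{eq:adjODE} holds trivially, yet $(\mat{E}+\mat{E}')\Psi^\pm_0=2\mat{E}\,\Psi^\pm_0\neq 0$ and $\beta D\Psi^-_0=-\mat{E}\Psi^-_0\neq\mat{E}\Psi^-_0$; the common eigenvalue is $\mu=2\mat{E}=2\beta x\tilde W_1(x)$ (by Proposition~\ref{prop:WgntoM}), which is nonzero. So your ``equivalent'' compatibility conditions $(\mat{E}+\mat{E}')\vec{\Psi}^\pm=0$, and the claim that both windows solve the single system $\beta D\vec{\Psi}=\mat{E}\vec{\Psi}$, are false, and the program of establishing them ``by carrying the band relations through the folding'' cannot succeed. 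Note also that the definitions \eqref{def_EE}--\eqref{def_EE'} give $\mat{A}\mat{E}'=\mat{E}^T\mat{A}$, from which $\beta D\mat{M}=-[\mat{E}',\mat{M}]$ follows purely formally; hence the entire nontrivial content of the proposition is precisely the identity $[\mat{E}+\mat{E}',\mat{M}]=0$, which your argument both mischaracterizes and leaves unproved.

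Finally, the $\beta$-positivity paragraph is not a proof: asserting that the negative powers of $\beta$ in \eqref{eq:entryM} ``must cancel'' and ``can be verified order by order'' assumes the conclusion, and the Riemann--Hurwitz grading \eqref{Riem_Hurw_g_d} of the connected correlators is never actually connected to the individual entries of $\mat{M}$. A workable version of that idea would have to express those entries through quantities whose $\beta$-expansion is controlled (for instance via the fermionic representations of Section~\ref{fermionic}), and that step is missing.
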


\subsection{$\beta$-expansion of $M$}

Let $\mat{F}$ be the elementary  $LM\times LM$ matrix
\be
\mat{F}:={\rm diag}(1,0,\dots,0).
\ee
We then have \cite{ACEH2}:
\begin{theorem}\label{defthm:MWKB}~
	
	(I) There exists a unique $LM\times LM$ matrix $\mat{M}^{\rm WKB}(z)$ formal series expansion with the following properties: 
	$\mat{M}^{WKB}(z)$ is a rank 1 projector, its entries belong to $\mathbb K[\mathbf s][[z,\beta]]$, and $\mat{M}^{\rm WKB}(z)$ is a solution of the ODE
\be
\label{eq:adjODEz}
		\beta \frac{z G(S(z))}{\sigma(z)}  \frac{d}{dz} \mat{M}^{WKB}(z) = [\mat{E}(X(z)),\mat{M}^{WKB}(z)]
\ee
with leading order:
\be
	\mat{M}^{\rm WKB} (z) = \mat{V}(z) \mat{F} \mat{V}(z)^{-1} + O(\beta).
\ee
	Moreover, for each $k\geq 0$, all entries of the matrix $\mat{M}^{(k)}(z)=[\beta^k] \mat{M}^{\rm WKB}(z)$ 
	are rational functions of $z$, with poles only at the ramification points (elements of $\mathcal{A}$) or at $z=\infty$.
	
	(II) For each $k\geq 0$, the following equality holds as formal power series of $z$:
	\be\label{eq:MzEqualsMx}
	\mat{M}^{(k)}(z) = [\beta^k] \mat{M}(X(z)).
	\ee

\end{theorem}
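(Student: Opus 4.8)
The plan is to derive both parts from a single WKB (formal $\beta$-series) analysis of the matrix ODE, and then use the uniqueness of that solution to identify it with the genuine object $\mat{M}(X(z))$. The starting observation is that the equation in (I) is just the adjoint system (\ref{eq:adjODE}) rewritten in the uniformizing coordinate $z$: since $X(z)=z/(\gamma G(S(z)))$ gives $X/X'=zG(S(z))/\sigma(z)$, the substitution $x=X(z)$ converts $\beta x\frac{d}{dx}\mat{M}=[\mat{E},\mat{M}]$ into (\ref{eq:adjODEz}). Moreover the only $\gamma$-dependence of $\mat{E}(x)$ enters through the factor $1/(\gamma x)$ in (\ref{tildeEE}), and $\gamma X(z)=z/G(S(z))$ is $\gamma$-free; hence $\mat{E}(X(z))$, and any $z$-series solution of the ODE, are $\gamma$-independent, consistent with the claimed ring $\mathbb K[\mathbf s][[z,\beta]]$.

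For part (I) I would expand $\mat{M}^{\rm WKB}=\sum_{k\ge 0}\beta^k\mat{M}^{(k)}$ and $\mat{E}(X(z))=\sum_{m\ge 0}\beta^m\mat{E}_m(z)$ and solve order by order. At order $\beta^0$ the equation reads $[\mat{E}_0(z),\mat{M}^{(0)}]=0$, and together with $(\mat{M}^{(0)})^2=\mat{M}^{(0)}$, $\Tr\mat{M}^{(0)}=1$ this forces $\mat{M}^{(0)}$ to be a one-dimensional spectral projector of $\mat{E}_0$. The geometric input to supply here is that $\mat{E}_0(z)$ is diagonalized by the Vandermonde $\mat{V}(z)$ with eigenvalue $S(z^{(i)}(z))$ on the $i$-th sheet; this is the classical limit $\Delta_\pm\to S$, $V_\pm\to G(S)$ of (\ref{tildeEE}) and (\ref{eq:defA}) together with $X(z)Y(z)=S(z)$ from (\ref{xyrel}), and it singles out $\mat{M}^{(0)}=\mat{V}(z)\mat{F}\mat{V}(z)^{-1}$, the projector onto the sheet $z^{(0)}=z$. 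At order $\beta^k$ one gets the \emph{cohomological} equation $[\mat{E}_0,\mat{M}^{(k)}]=C_k$, with $C_k$ built from $\frac{d}{dz}\mat{M}^{(k-1)}$ and the commutators $[\mat{E}_m,\mat{M}^{(k-m)}]$, $m\ge 1$. Since the $S(z^{(i)})$ are pairwise distinct off the ramification set $\mathcal A$, the map $[\mat{E}_0,\cdot]$ is invertible on off-diagonal (in the $\mat{V}$-eigenbasis) matrices, so the off-diagonal part of $\mat{M}^{(k)}$ is determined, while its diagonal part is fixed by expanding $\mat{M}^2=\mat{M}$ and $\Tr\mat{M}=1$ to order $k$. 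Rationality with poles only in $\mathcal A\cup\{\infty\}$ then follows by induction: $\mat{M}^{(0)}$ is rational because the symmetric functions of $\{z^{(j)}\}_{j\ne 0}$ are polynomials in $z$ and in the symmetric functions of all roots of $\zeta-\gamma xG(S(\zeta))$, which are rational in $x=X(z)$; and each inversion of $[\mat{E}_0,\cdot]$ only divides by eigenvalue differences $S(z^{(i)})-S(z^{(j)})$, which vanish precisely on $\mathcal A$.

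For part (II) I would invoke this uniqueness. By Proposition~\ref{prop:Mpositive} the genuine $\mat{M}(x)$ is a rank-$1$ projector solving (\ref{eq:adjODE}) with no negative powers of $\beta$; and since each $\Psi^\pm_k$ is a power series in the $\gamma$-free combination $\gamma x$, the substitution $x=X(z)$ (so $\gamma x=z/G(S(z))\in z\,\mathbb K[\mathbf s][[z]]$) makes $\mat{M}(X(z))$ a $\gamma$-free element of $\mathbb K[\mathbf s][[z,\beta]]$ that solves (\ref{eq:adjODEz}) and remains a rank-$1$ projector. Thus, to conclude $\mat{M}(X(z))=\mat{M}^{\rm WKB}(z)$, and hence $[\beta^k]\mat{M}(X(z))=\mat{M}^{(k)}(z)$, it remains only to match the leading order. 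This I would read off from $\mat{M}=\vec\Psi^-\vec\Psi^{+T}\mat{A}$ by a saddle-point evaluation of the Baker vectors: applying the recursions $\Psi^\pm_{k+1}=R_\pm\Psi^\pm_k$ with $R_\pm\to\gamma xG(\pm\beta D)\to z$ on the distinguished sheet multiplies the leading amplitude by $z$ at each step, so both $\vec\Psi^+(X(z))$ and $\vec\Psi^-(X(z))$ are, to leading order, scalar multiples of the first column $(1,z,\dots,z^{LM-1})^T$ of $\mat{V}(z)$; the classical limit of $\mat{A}$ then converts the $\vec\Psi^+$ amplitude into the first row of $\mat{V}(z)^{-1}$, so that $\mat{M}$ reduces at leading order to $\mat{V}\mat{F}\mat{V}^{-1}$, as required.

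The step I expect to be the main obstacle is exactly this classical limit, entering both parts: identifying the eigenstructure of $\mat{E}_0$ with the Vandermonde/spectral-curve data, and carrying out the matching $\beta\to 0$ evaluation of $\vec\Psi^\pm(X(z))$, requires controlling the joint asymptotics of the content-product coefficients $\rho_j$ and of $h_j(\beta^{-1}\mathbf s)$ and showing that the resulting amplitudes assemble into the Vandermonde projector. A secondary point is checking that the solvability (vanishing-diagonal) condition for each $[\mat{E}_0,\mat{M}^{(k)}]=C_k$ holds automatically, which I would obtain by propagating the constraint $\mat{M}^2=\mat{M}$ through the recursion rather than by direct computation.
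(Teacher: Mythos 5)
You should first note that the paper itself gives no proof of this theorem: it is stated with a citation to the companion paper \cite{ACEH2}, so the comparison below is with the strategy used there. Your overall architecture is the intended one: the change of variables $x=X(z)$ (your identity $X/X'=zG(S(z))/\sigma(z)$ is correct) turns (\ref{eq:adjODE}) into (\ref{eq:adjODEz}); $\mat{E}(X(z))$ is indeed $\gamma$-free since $\mat{E}(x)$ depends on $(x,\gamma)$ only through $1/(\gamma x)=G(S(z))/z$; the key diagonalization $[\beta^0]\mat{E}(X(z))=\mat{V}(z)\,\mathrm{diag}\bigl(S(z^{(i)}(z))\bigr)\,\mat{V}(z)^{-1}$ really does follow from the classical limit $\Delta_\pm\to S$, $V_\pm\to G(S)$ of (\ref{tildeEE}), (\ref{eq:defA}) together with $G(S(z^{(i)}))/z^{(i)}=G(S(z))/z$; and deducing (II) from uniqueness in (I) plus Proposition~\ref{prop:Mpositive} is the right plan.

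The genuine gap is in your pole-structure argument, and it is not a technicality, because that statement is the hard core of part (I). Your claim that the differences $S(z^{(i)})-S(z^{(j)})$ ``vanish precisely on $\mathcal{A}$'' is false: they vanish on all of $X^{-1}(X(\mathcal{A}))$. Indeed, if $a\in\mathcal{A}$ and $z_0\neq a$ is another preimage of the critical value $X(a)$ (such $z_0$ exist whenever $LM\geq 3$, and generically $z_0\notin\mathcal{A}$), then the two sheets that collide at $a$ satisfy $S(z^{(i)}(z_0))=S(z^{(j)}(z_0))=S(a)$ with $i,j\neq 0$; worse, since $\mat{E}_0$ depends on $z$ only through $X(z)$, the matrix $\mat{E}_0$ at $z_0$ is the same (generically non-diagonalizable) matrix as at $a$, so $\mathrm{ad}_{\mat{E}_0}$ is not invertible on off-diagonal matrices there. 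Because from order $\beta^2$ on the source term $C_k$ has nonzero entries in all positions (e.g.\ through $[\mat{E}_1,\mat{M}^{(1)}]$), your induction only confines the poles of $\mat{M}^{(k)}$ to $X^{-1}(X(\mathcal{A}))\cup\{\infty\}$, not to $\mathcal{A}\cup\{\infty\}$; and regularity at $z=0$ (needed for membership in $\mathbb{K}[\mathbf{s}][[z]]$) is also unaddressed, since $\mat{E}(X(z))$ has a pole at $z=0$ through $G(S(z))/z$ and the non-distinguished sheets diverge there. A concrete repair is to reorganize the induction so that the dangerous divisions never occur: in the $\mat{M}^{(0)}$-eigenbasis, the relation $\mat{M}^2=\mat{M}$ determines \emph{all} entries of $\mat{M}^{(k)}$ except those in row $0$ and column $0$ (the corresponding coefficients $\delta_{i0}+\delta_{j0}-1$ are $\pm1$ there), so the ODE need only be inverted on the $0$-row and $0$-column, i.e.\ one divides only by $S(z)-S(z^{(j)})$. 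For this particular curve these differences vanish, over finite $x$, only when $z^{(j)}=z$: equal finite $X$ and equal $S$ force $r=\gamma X(r)G(S(r))=\gamma X(t)G(S(t))=t$. This shrinks the spurious locus to $\mathcal{A}$ (up to the loci $X=\infty$ and $z=0$, which still need separate cancellation arguments, and up to verifying the consistency of the now over-determined system); carrying this out is exactly the technical content of the proof in \cite{ACEH2}.

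Finally, for part (II) the ``main obstacle'' you flag --- the joint $\beta\to0$ asymptotics of $\rho_j$ and $h_j(\beta^{-1}\mathbf{s})$ --- can be bypassed entirely. Once Proposition~\ref{prop:Mpositive} gives that $\mat{M}(X(z))$ is a trace-one, rank-one projector solving (\ref{eq:adjODEz}) with no negative powers of $\beta$ (and no $\gamma$, by your own observation), the order-$\beta^0$ equation forces $[\beta^0]\mat{M}(X(z))$ to commute with $\mat{E}_0$, hence to equal one of the spectral projectors $\mat{V}\,\mathrm{diag}(\delta_{ij_0})\,\mat{V}^{-1}$; the sheet $j_0=0$ is then identified by evaluation at $z=0$, where $\mat{M}(0)$ visibly has image spanned by $(1,0,\dots,0)^T$ while the non-distinguished sheets degenerate. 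This replaces your saddle-point analysis of the Baker vectors by pure linear algebra, and is the kind of argument needed to make (II) rigorous.
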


\subsection{The current correlators $\tilde W_n$}

It follows  from the determinantal formula (\ref{prop:detConnected}) and the Christoffel-Darboux Theorem \ref{thm:CD} 
 that \cite{ACEH2}:
\begin{proposition} \label{prop:WgntoM}
\be
\tilde W_1(x) 
	= \frac{1}{\beta x} \Tr  \mat{M}(x) \mat{E}(x),
\ee
\be
\tilde W_2(x_1,x_2) 
= \frac{\Tr  \mat{M}(x_1)\mat{M}(x_2) }{(x_1-x_2)^2} - \frac{1}{(x_1-x_2)^2},
\ee
and for $n\geq 3$
\be
\tilde W_n(x_1,\dots,x_n) 
= \sum_{\sigma \in \mathfrak S_n^{\rm 1-cycle}}
(-1)^\sigma \frac{\Tr \left( \prod_{i} \mat{M}(x_{\sigma(i)}) \right) }{\prod_i (x_{\sigma(i)} - x_{\sigma(i+1))})},
\ee
\end{proposition}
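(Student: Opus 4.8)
The plan is to reduce all three formulas to the single finite bilinear expression for the pair correlator furnished by the Christoffel--Darboux Theorem~\ref{thm:CD}, and then to recognize products of such expressions as traces of products of the rank-one matrix $\mat{M}(x)$ of~(\ref{defM}). First I would rewrite~(\ref{eq:CD}) in terms of the adapted bases: expanding $A(r,t)=\sum_{i,j=0}^{LM-1}\mat{A}_{ij}r^{i}t^{j}$ and applying the multiplicative recursions $\Psi^+_i=R_+^i\Psi^+_0$ and $\Psi^-_j=R_-^j\Psi^-_0$ of~(\ref{recursions_Psi_plus_k})--(\ref{recursions_Psi_minus_k}), the operator $A(R_+,R_-')$ acting on $\Psi^+_0(x)\Psi^-_0(x')$ gives
\[
K(x,x')=\frac{1}{x-x'}\sum_{i,j=0}^{LM-1}\mat{A}_{ij}\,\Psi^+_i(x)\,\Psi^-_j(x')=\frac{\vec{\Psi}^+(x)^{T}\mat{A}\,\vec{\Psi}^-(x')}{x-x'}.
\]
Comparing with $\mat{M}(x)=\vec{\Psi}^-(x)\vec{\Psi}^+(x)^{T}\mat{A}$, the scalar numerator is precisely the ``off-diagonal'' trace $\vec{\Psi}^+(x)^{T}\mat{A}\,\vec{\Psi}^-(x')=\Tr\!\big(\vec{\Psi}^-(x')\vec{\Psi}^+(x)^{T}\mat{A}\big)$, which is the identification that drives everything.

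For $n\ge 2$ I would substitute this form into the determinantal formulas of Proposition~\ref{prop:detConnected}. Along a single cycle $\sigma$ with support $(a_1\,a_2\,\cdots\,a_n)$, the product $\prod_i K(x_i,x_{\sigma(i)})$ factors as $\big(\prod_i(x_i-x_{\sigma(i)})\big)^{-1}\prod_{k}\vec{\Psi}^+(x_{a_k})^{T}\mat{A}\,\vec{\Psi}^-(x_{a_{k+1}})$, and upon inserting the outer-product form of $\mat{M}$ the chain of scalar factors collapses cyclically, by cyclicity of the trace, into $\Tr\!\big(\mat{M}(x_{a_1})\cdots\mat{M}(x_{a_n})\big)$; the denominator matches $\prod_i(x_{\sigma(i)}-x_{\sigma(i+1)})$ under the same cyclic relabelling. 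This reproduces the $n=2$ and $n\ge 3$ formulas directly, the subtracted term $-(x_1-x_2)^{-2}$ being inherited verbatim from~(\ref{eq:detConnected2}). This part is pure algebra of outer products and formal series in $\mathbb{K}[x_1,\dots,x_n,{\bf s},\beta,\beta^{-1}][[\gamma]]$ and presents no real difficulty.

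The main obstacle is the $n=1$ case, where the differential structure must enter. Starting from~(\ref{eq:detConnected1}) and using the normalization $\Tr\mat{M}(x)=\vec{\Psi}^+(x)^{T}\mat{A}\,\vec{\Psi}^-(x)=1$ of Proposition~\ref{prop:Mpositive}, the singular part $ (x-x')^{-1}$ cancels exactly and the limit becomes the derivative $\tilde W_1(x)=-\vec{\Psi}^+(x)^{T}\mat{A}\,\tfrac{d}{dx}\vec{\Psi}^-(x)$. Substituting the folded first-order system $-\beta x\,\tfrac{d}{dx}\vec{\Psi}^-=\mat{E}'(x)\vec{\Psi}^-$ of Theorem~\ref{prop:finiteSystem} turns this into $\tfrac{1}{\beta x}\vec{\Psi}^+(x)^{T}\mat{A}\mat{E}'(x)\vec{\Psi}^-(x)=\tfrac{1}{\beta x}\Tr\!\big(\mat{M}(x)\mat{E}'(x)\big)$. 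The delicate step is then to pass from $\mat{E}'$ to $\mat{E}$ inside the trace so as to match the stated form $\tfrac{1}{\beta x}\Tr\mat{M}(x)\mat{E}(x)$: here I would use the duality relation~(\ref{maste}), $\mat{A}\mat{E}=\mat{E}'^{T}\mat{A}$, together with the constancy $\tfrac{d}{dx}\Tr\mat{M}=0$ (equivalently the ODE for $\vec{\Psi}^+$), to rewrite the contraction against the specific vectors $\vec{\Psi}^{\pm}$. Keeping careful track of which of $\mat{E}$, $\mat{E}'$ and their transposes acts against $\mat{A}$ is the only genuinely nontrivial bookkeeping in the argument; once the folded system and~(\ref{maste}) are in hand, the identity follows.
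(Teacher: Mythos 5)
Your overall route is exactly the paper's: the paper proves this proposition by combining the determinantal formulas of Proposition~\ref{prop:detConnected} with the Christoffel--Darboux Theorem~\ref{thm:CD}, deferring details to \cite{ACEH2}, and your identifications
$K(x,x')=\frac{1}{x-x'}\,\vec{\Psi}^+(x)^T\mat{A}\,\vec{\Psi}^-(x')$ and
$\mat{M}(x)=\vec{\Psi}^-(x)\vec{\Psi}^+(x)^T\mat{A}$ are the ones that make this work. Your treatment of $n\ge 2$ (rank-one outer products, cyclicity of the trace, matching of the denominators along the cycle) is complete and correct.

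The gap is in the $n=1$ case, at precisely the step you flag but then claim ``follows''. The two tools you invoke --- the duality relation \eqref{maste} and the constancy of $\Tr\mat{M}$ --- cannot convert $\Tr\left(\mat{M}(x)\mat{E}'(x)\right)$ into $\Tr\left(\mat{M}(x)\mat{E}(x)\right)$. Constancy of $\Tr\mat{M}(x)=\vec{\Psi}^+(x)^T\mat{A}\,\vec{\Psi}^-(x)$ only yields the tautology that differentiating either factor gives the same scalar, namely
\be
\vec{\Psi}^+(x)^T\mat{E}(x)^T\mat{A}\,\vec{\Psi}^-(x)=\vec{\Psi}^+(x)^T\mat{A}\,\mat{E}'(x)\vec{\Psi}^-(x),
\ee
which is already implied by the definitions \eqref{def_EE}--\eqref{def_EE'} (both sides equal $\vec{\Psi}^{+T}\tilde{\mat{E}}\,\vec{\Psi}^-$); it is not new information. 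On the other hand, applying \eqref{maste} to the target expression gives
$\Tr\left(\mat{M}\mat{E}\right)=\vec{\Psi}^{+T}\mat{A}\mat{E}\,\vec{\Psi}^- = \left(\mat{E}'\vec{\Psi}^+\right)^T\mat{A}\,\vec{\Psi}^-$,
so what you actually need is
\be
\left(\mat{E}'(x)\vec{\Psi}^+(x)\right)^T\mat{A}\,\vec{\Psi}^-(x)
=\vec{\Psi}^+(x)^T\mat{A}\left(\mat{E}'(x)\vec{\Psi}^-(x)\right),
\ee
i.e.\ self-adjointness of $\mat{E}'$ with respect to the bilinear form defined by $\mat{A}$, evaluated on the pair $(\vec{\Psi}^+,\vec{\Psi}^-)$. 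This is a genuine additional identity: demanded as a matrix identity it would force $\mat{E}=\mat{E}'$ (combine it with \eqref{maste}), so it can only hold because of the special structure of the solution vectors and of $\tilde{\mat{E}}$ in \eqref{tildeEE}, and that structural fact is exactly the content deferred to \cite{ACEH2}; the folded system tells you what $\mat{E}\vec{\Psi}^+$ and $\mat{E}'\vec{\Psi}^-$ are, but says nothing about $\mat{E}'\vec{\Psi}^+$. In fairness, the transpose bookkeeping cannot be settled from this overview alone: its own conventions are inconsistent, since \eqref{def_EE}--\eqref{def_EE'} imply $\mat{A}^T\mat{E}=\mat{E}'^T\mat{A}^T$ while \eqref{maste} asserts $\mat{A}\mat{E}=\mat{E}'^T\mat{A}$, which agree only if $\mat{A}$ were symmetric. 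Everything else in your $n=1$ derivation --- the limit, the cancellation of the singular part via $\Tr\mat{M}=1$, and the substitution of the folded first-order system --- is correct and is the route taken in the companion paper; only this last exchange of $\mat{E}'$ for $\mat{E}$ inside the trace remains unproved.
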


It also follows that

\begin{proposition}\label{prop:WgntoMWKB}
For $n\ge 0$, let $z_1,z_2,\dots z_n$ be $n$ generic distinct points on the Riemann sphere, the generating functions $\tilde W_{g,n}$ evaluated at $x_i=X(z_i)$,  are rational fractions of $z_1,\dots,z_n$:
\be
\sum_g \beta^{2g-2+n} \tilde W_{g,n}(X(z_1),\dots,X(z_n)) 
= \sum_{\sigma \in \mathfrak S_n^{\rm 1-cycle}}
	(-1)^\sigma \frac{\Tr \left( \prod_{i} \mat{M}^{\rm WKB}(z_{\sigma(i)}) \right) }{\prod_i (X(z_{\sigma(i)}) - X(z_{\sigma(i+1))}))},
\ee
\end{proposition}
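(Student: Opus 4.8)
The plan is to combine the two results cited in the introduction to this subsection, namely Proposition~\ref{prop:WgntoM} and Theorem~\ref{defthm:MWKB}, and then extract the genus-graded components. First I would start from the connected correlator formula in Proposition~\ref{prop:WgntoM}, which for every $n\ge 1$ expresses $\tilde W_n(x_1,\dots,x_n)$ as a signed sum over single-cycle permutations of traces of products of the projectors $\mat{M}(x_i)$, divided by the appropriate product of differences $(x_{\sigma(i)}-x_{\sigma(i+1)})$. The key input is part (II) of Theorem~\ref{defthm:MWKB}, which asserts the identity of formal power series $\mat{M}^{(k)}(z)=[\beta^k]\mat{M}(X(z))$; summing against $\beta^k$ this says precisely that $\mat{M}(X(z))=\mat{M}^{\rm WKB}(z)$ as formal series in $\beta$ once we substitute $x=X(z)$. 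Thus the plan is to substitute $x_i=X(z_i)$ throughout the formulas of Proposition~\ref{prop:WgntoM} and replace each $\mat{M}(X(z_i))$ by $\mat{M}^{\rm WKB}(z_i)$.

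After this substitution the right-hand side of each formula in Proposition~\ref{prop:WgntoM} becomes exactly the single-cycle sum
\be
\sum_{\sigma \in \mathfrak S_n^{\rm 1-cycle}}
(-1)^\sigma \frac{\Tr \left( \prod_{i} \mat{M}^{\rm WKB}(z_{\sigma(i)}) \right) }{\prod_i (X(z_{\sigma(i)}) - X(z_{\sigma(i+1))}))}
\ee
appearing in the statement, at least for $n\ge 3$ where the formula is uniform; for $n=1,2$ I would check separately that the lower cases of Proposition~\ref{prop:WgntoM} (involving $\tfrac{1}{\beta x}\Tr\mat{M}\mat{E}$ and the subtracted pole terms) reorganize into the same single-cycle trace expression, the subtracted $(x_1-x_2)^{-2}$ and similar non-connected pieces being absent from $\tilde W$ by construction. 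The final step is to match the left-hand side: by the definition $\tilde W_{g,n}:=[\beta^{2g-2+n}]\tilde W_n$ immediately preceding the multicurrent subsection, the full connected correlator is recovered as $\tilde W_n=\sum_g \beta^{2g-2+n}\tilde W_{g,n}$, which is the left-hand side of the claimed identity once evaluated at $x_i=X(z_i)$.

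The rationality assertion — that the evaluated generating function is a rational fraction in $z_1,\dots,z_n$ — follows because part (I) of Theorem~\ref{defthm:MWKB} guarantees each coefficient matrix $\mat{M}^{(k)}(z)=[\beta^k]\mat{M}^{\rm WKB}(z)$ has entries that are rational in $z$ with poles confined to $\mathcal A\cup\{\infty\}$, while $X(z)$ is itself rational by \eqref{zcurve}; finite traces, products and sums of rational functions remain rational. I would therefore conclude that, order by order in $\beta$, both sides are rational functions of the $z_i$ that agree as formal power series, hence agree as rational functions.

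The main obstacle I anticipate is the legitimacy of the substitution $x=X(z)$ inside the formal-series identity of part (II): Proposition~\ref{prop:WgntoM} is an identity in $\mathbb K[x,\dots][[\gamma]]$ (or the corresponding field of fractions in the $x_i$), whereas Theorem~\ref{defthm:MWKB}(II) is a statement about formal power series in $z$. The care required is to confirm that composing the power-series expansion of $\mat{M}(x)$ with the substitution $x\mapsto X(z)$ commutes with taking the $[\beta^k]$ coefficient and with forming the traces and quotients — in other words, that the two gradings (by $\beta$ and the expansion in $z$) are compatible and that no issue arises from the denominators $X(z_{\sigma(i)})-X(z_{\sigma(i+1)})$ when the $z_i$ are taken generic and distinct. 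Once this compatibility is established, the remaining manipulations are purely formal and the result drops out by equating coefficients of $\beta^{2g-2+n}$ on both sides.
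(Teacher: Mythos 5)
Your proposal coincides with the paper's own derivation: the paper presents this proposition as a direct consequence of Proposition~\ref{prop:WgntoM} combined with Theorem~\ref{defthm:MWKB}(II) (which gives $\mat{M}(X(z))=\mat{M}^{\rm WKB}(z)$ order by order in $\beta$, justifying the substitution $x_i=X(z_i)$), with rationality supplied by Theorem~\ref{defthm:MWKB}(I) and the rationality of $X(z)$ --- exactly your steps, with the full details deferred to the companion paper. One caution: your expectation that for $n=2$ the subtracted term $-1/(x_1-x_2)^2$ is ``absent by construction'' is not right --- it is genuinely present in Proposition~\ref{prop:WgntoM} and is only absorbed later, in the definition of $\tilde\omega_{0,2}$, so the single-cycle formula holds verbatim only for $n\ge 3$; but this imprecision is inherited from the paper's own statement (``for $n\ge 0$'') rather than being a defect of your argument.
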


\section{Fermionic representations}
\label{fermionic}

In the following, we use the standard free fermion representation, in which the creation and annihilation operators
$(\psi_i, \psi^*_i)_{i\in \Zb}$ satisfy the anticommutation relations
\be
[\psi_i, \psi_j]_+ =0, \quad [\psi^*_i, \psi^*_j]_+ =0, \quad [\psi_i, \psi^*_j]_+ =\delta_{ij}, 
\ee
and the vacuum state $|0\rangle$ is annihilated by the positive index annihilation operators
and the negative creation operators
\be
\psi_{-i}|0 \rangle = 0, \quad  i= 1, \dots ,\infty, \quad \psi^*_{i}|0 \rangle = 0, \quad  i= 0, \dots ,\infty.
\ee
The Fermi field and its adjoint are denoted
\be 
\psi(z) = \sum_{i=-\infty}^\infty \psi_i  z^i, \quad \psi^*(z) = \sum_{i=-\infty}^\infty \psi^*_i  z^{-i-1}, 
\ee
and the current operator components are denoted
\be
J_i := \sum_{j= -\infty}^{\infty} \no{\psi_j \psi^*_{i+j}}  \quad i\in \Zb.
\ee
The abelian groups $\Gamma_{\pm}$ of {\it shift flows}, are defined as
\be
\Gamma_+ := \{ \hat{\gamma}_+({\bf t})\}, \quad \Gamma_- := \{ \hat{\gamma}_-({\bf s})\}
\ee
where
\be
\hat{\gamma}_+({\bf t}) := e^{\sum_{i=1}^\infty t_i J_i}, \quad \hat{\gamma}_-({\bf s}) := e^{\sum_{i=1}^\infty s_i J_{-i}}.
\ee

Another infinite abelian group that enters in the fermionic representation of our $\tau$-functions
is the group ${\bf C}$ of (generalized) convolution flows \cite{HO}, consisting of diagonal elements
\be
{\bf C} := \{ \hat{C}_\rho := e^{\sum_{i=-\infty}^\infty T_i \no{\psi_i \psi^*_i }} \},
\ee
where
\be
\rho_i = e^{T_i},
\ee
may be viewed as the Fourier coefficients of a function (or distribution) on the unit circle $\{z : |z|=1\}$,
\be
\rho(z) = \sum_{i=-\infty}^\infty \rho_i z^{-i-1}.
\ee

The next three subsections \ref{tau_fermionic_sec}, \ref{pair_correl_fermionic} and \ref{current_correl_fermionic} give
fermionic representations of all the quantities appearing above: the $\tau$-function, the Baker function, the adapted bases, the pair correlators, the multicurrent correlator. For proofs and further details of the relations between them, the reader should consult the companion paper \cite{ACEH1}.
\subsection{The $\tau$-function,  Baker function and adapted bases }
\label{tau_fermionic_sec}
The $\tau$-function (\ref{tau_G_beta_gamma}) has the following representation as a 
fermonic vacuum state expectation value (VEV)
\be
\tau^{(G, \beta, \gamma)} ({\bf t}, {\bf s}) = \langle 0 | \hat{\gamma}_+({\bf t}) \hat{C}_\rho \hat{\gamma}_-({\bf s}) | 0 \rangle,
\label{tau_fermionic}
\ee
where the coefficients $\rho_i$ are given by (\ref{rho_j+}), (\ref{rho_j-}).

Viewing this as a parametric family of KP $\tau$-functions (where  ${\bf s}=(s_1, s_2, \dots)$ are just viewed as additional parameters), the corresponding Baker function and its dual have the fermionic representation  \cite{ACEH1}
\bea
\Psi^-_{(G,\beta,\gamma)}(\zeta, {\bf t}, {\bf s}) =  { \langle 0 | \psi^*_0 \hat{\gamma}_+({\bf t}) \psi(\zeta) \hat{C}_\rho \hat{\gamma}_-({\bf s})| 0 \rangle
\over \tau^{(G, \beta, \gamma)} ({\bf t}, {\bf s}) },\\
\Psi^+_{(G,\beta,\gamma)}(\zeta, {\bf t}, {\bf s}) = { \langle 0 | \psi_{-1} \hat{\gamma}_+({\bf t}) \psi^*(\zeta) \hat{C}_\rho \hat{\gamma}_-({\bf s})| 0 \rangle
\over \tau^{(G, \beta, \gamma)} ({\bf t}, {\bf s}) }.
\eea

	The adapted basis  defined by \eqref{Psi_plus_k_ser}  has the  fermionic representation  \cite{ACEH1}
	\be
\Psi_k^+(x)
  :=\begin{cases} \gamma \langle 0 |\psi^*(1/x) \hat{C}_\rho \hat{\gamma}_-(\beta^{-1}{\bf s}) \psi_{k-1} | 0\rangle  \quad \text{if} \ k \ge 1, \cr
\gamma \langle 0 | \psi_{k-1} \hat{\gamma}^{-1}_-(\beta^{-1}{\bf s})\ \hat{C}^{-1}_\rho \psi^*(1/x) \hat{C}_\rho \hat{\gamma}_-(\beta^{-1}{\bf s})| 0\rangle \quad  \text{if} \ k\le 0,
\end{cases} 
\label{wk_rho_s}
\ee
and the dual basis  defined by \eqref{Psi_minus_k_ser}  is given  \cite{ACEH1} by
\be
\Psi_k^-(x):=  \begin{cases} \langle 0 | \psi(1/x) \hat{C}_\rho \hat{\gamma}_-(\beta^{-1}{\bf s})\psi^*_{-k} |0 \rangle  \quad  \text{if} \ k \ge 1 \cr
\langle 0 |\psi^*_{-k}\hat{\gamma}^{-1}_-(\beta^{-1}{\bf s})\ \hat{C}^{-1}_\rho \psi(1/x)  \hat{C}_\rho
 \hat{\gamma}_-(\beta^{-1}{\bf s})  |0 \rangle \quad  \text{if} \ k\le 0 .
\end{cases} \\
\label{wk_rho_s*}
\ee

\subsection{The $n$-pair correlator $K_n$ }
\label{pair_correl_fermionic}
The $n$-pair correlator (\ref{K_n_tau_def}) has the following fermionic representation \cite{ACEH1}
\be
K_n(x_1,\dots,x_n; x'_1,\dots,x'_n) = 
{1\over \prod_{i=1}^n x_i x'_i}
 \langle 0 | \prod_{i=1}^n \psi({1/ x'_i}) \psi^*({1/ x_i}) \hat{C}_\rho \hat{\gamma}_-(\beta^{-1}{\bf s}) | 0 \rangle.
\ee

\subsection{The multicurrent  correlator $W_n$}
\label{current_correl_fermionic}

It follows from the fermonic representation (\ref{tau_fermionic}) of the $\tau$-function $\tau^{(G, \beta, \gamma)} ({\bf t}, {\bf s})$
and the definition (\ref{W_G_def}) of the multicurrent correlator  $W_n(x_1, \dots, x_n)$ that
this may be expressed as the following fermionic vacuum expectation value \cite{ACEH1}
\be
W_n(x_1, \dots, x_n) = {1\over \prod_{i=1}^n x_i}\langle 0 | \prod_{i=1}^n J_+(x_i) \hat{C}_\rho \hat{\gamma}_-(\beta^{-1} {\bf s}) | 0 \rangle,
\ee
where
\be
J_+(x) := \sum_{i=1}^\infty x^i J_i.
\ee

\section{Topological recursion}
\label{top_rec}

\subsection{Definitions of $\tilde{\omega}_{g,n}$ and ${\mathcal W}_{g,n}$}
\label{omega_gn}

The key invariants computed by  topological recursion in this case are the forms $\tilde\omega_{g,n}$ defined in terms of the correlators $\tilde W_{g,n}$ as follows:
\begin{definition}
Let
\begin{multline}
\tilde \omega_{g,n}(z_1,\dots,z_n) := \tilde W_{g,n}(X(z_1),\dots,X(z_n)) \,X'(z_1)\dots X'(z_n) dz_1 \dots dz_n \cr
+ \delta_{g,0}\delta_{n,2} \ \frac{X'(z_1) X'(z_2) \ dz_1 dz_2}{(X(z_1)-X(z_2))^2} \ .
\end{multline}
\end{definition}
It follows \cite{ACEH2} that $\tilde \omega_{g,n}$ for the stable cases are rational functions of all the $z_i$'s, with poles only at elements of $\mathcal{A}$ (ramification
 points of the spectral curve) or at infinity. In fact we have the stronger property (see \cite{ACEH2} for a detailed proof):
\begin{proposition}
When $2g-2+n>0$, $\tilde\omega_{g,n}(z_1,\dots,z_n)$ has poles only at ramification points, and no poles at $z_i=\infty$.
\end{proposition}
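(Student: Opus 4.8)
The plan is to combine two inputs: the trace expression for the stable correlators coming from Proposition~\ref{prop:WgntoMWKB}, and the fact that $z=\infty$ lies over $x=X(\infty)=0$, which is precisely the point at which the $x$-space correlators are regular. Since the preceding discussion already confines the poles of $\tilde\omega_{g,n}$ to $\mathcal{A}\cup\{\infty\}$, all that remains is to rule out a pole along each divisor $z_i=\infty$; by symmetry it suffices to treat $z_1\to\infty$ with $z_2,\dots,z_n$ generic.

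First I would isolate the possible sources of a pole at $z_1=\infty$ in
$$\tilde\omega_{g,n}=[\beta^{2g-2+n}]\Bigg(\sum_{\sigma\in\mathfrak S_n^{\rm 1-cycle}}(-1)^\sigma\frac{\Tr\big(\prod_i\mat{M}^{\rm WKB}(z_{\sigma(i)})\big)}{\prod_i\big(X(z_{\sigma(i)})-X(z_{\sigma(i+1)})\big)}\Bigg)\prod_j X'(z_j)\,dz_j .$$
As $z_1\to\infty$ one has $X(z_1)\to0$, so every denominator factor involving $z_1$ tends to the finite nonzero value $-X(z_j)$ and the denominators are harmless. Writing $w=1/z_1$ and using $X(z_1)\sim c\,w^{LM-1}$ from \eqref{zcurve}, the factor $X'(z_1)\,dz_1=dX(z_1)\sim c(LM-1)w^{LM-2}dw$ vanishes to order $LM-2$. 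Hence the statement reduces to showing that the entries of $\mat{M}^{\rm WKB}(z_1)$, after the trace contraction against the fixed factors $\mat{M}^{\rm WKB}(z_j)$, grow no faster than $w^{-(LM-2)}$; that is, the genuine pole of $\mat{M}^{\rm WKB}$ at $z=\infty$ allowed by Theorem~\ref{defthm:MWKB}(I) must be absorbed by $dX$.

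The guiding heuristic for why this holds is that $z=\infty$ sits over $x=0$: by \eqref{tilde_F_g_n_expan} the $x$-space correlator $\tilde W_{g,n}=\partial^n\tilde F_{g,n}$ is, at each order in $\gamma$, a finite sum of $\partial^n m_\mu$ and hence a polynomial in the $x_i$, regular at $x_i=0$ (this is also visible from Proposition~\ref{prop:WgntoM} and the $\beta$-positivity of Proposition~\ref{prop:Mpositive}). Pulling back the holomorphic form $\tilde W_{g,n}(x)\,dx_1\cdots dx_n$ under the holomorphic map $X$, which sends $z=\infty\mapsto x=0$, produces a form holomorphic along the sheet through $z=0$. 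The step I expect to be the main obstacle is that $z=\infty$ is the \emph{other}, $(LM-1)$-fold ramified preimage of $x=0$: the identity $\mat{M}^{\rm WKB}(z)=\mat{M}(X(z))$ of Theorem~\ref{defthm:MWKB}(II) holds only as a power series about $z=0$, and its continuation onto the ramified sheet at $z=\infty$ does carry poles that only cancel inside the trace. So the regularity of the $x$-space object does not transfer term by term, and a genuine estimate at $z=\infty$ is required.

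To make this rigorous I would control $\mat{M}^{\rm WKB}$ near $z=\infty$ directly from the adjoint ODE \eqref{eq:adjODEz}. Computing the leading asymptotics of its coefficient, $zG(S(z))/\sigma(z)\sim z/(1-LM)$, and of the $LM$ sheets $z^{(i)}(z)$ (one tending to $0$, the other $LM-1$ behaving like the $(LM-1)$-st roots of unity times $z$), I would first bound the pole order at $\infty$ of the leading projector $\mat{M}^{(0)}(z)=\mat{V}(z)\mat{F}\mat{V}(z)^{-1}$ and then propagate the bound to every $\mat{M}^{(k)}(z)$ by solving \eqref{eq:adjODEz} recursively in $\beta$. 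The crux is the cancellation: using the rank-one projector relations $\mat{M}^2=\mat{M}$, $\Tr\mat{M}=1$ from Proposition~\ref{prop:Mpositive} and the permutation symmetry of the branches $z^{(i)}$ at the ramification point $z=\infty$, I would show that the leading singular parts of the individual factors cancel in $\Tr\big(\prod_i\mat{M}^{\rm WKB}(z_{\sigma(i)})\big)$, reducing its effective order in $w$ to at most $LM-2$ and hence rendering $\tilde\omega_{g,n}$ holomorphic. Finally I would dispose of the two special cases: $g=0,n=2$ is excluded by $2g-2+n>0$, so the anomalous second term of $\tilde\omega_{g,n}$ never enters, while for $n=1$ (necessarily $g\ge1$) I would use $\tilde W_1(x)=\beta^{-1}x^{-1}\Tr\mat{M}(x)\mat{E}(x)$ together with $\mat{M}(0)=\vec\Psi^-(0)\vec\Psi^+(0)^T\mat{A}$ to verify that the polar part at $x=0$ generated by $x^{-1}$ and by the $x^{-1}$-term of $\mat{E}(x)$ sits entirely in the unstable order $\beta^{-1}$, leaving $\tilde W_{g,1}$ regular at $x=0$ for $g\ge1$.
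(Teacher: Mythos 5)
The paper itself contains no proof of this proposition---it explicitly defers to the companion work \cite{ACEH2}---so your sketch can only be judged on its own terms, and on its own terms it has a genuine gap exactly where the mathematical content lies. Your setup is correct and well-informed: reducing to $z_1\to\infty$ with the other $z_j$ generic, observing that $X(z_1)\to 0$ so the denominators $X(z_{\sigma(i)})-X(z_{\sigma(i+1)})$ stay finite and nonzero, computing that $X'(z_1)\,dz_1$ vanishes to order $LM-2$ in $w=1/z_1$, and---most importantly---recognizing that regularity of the $x$-space correlators at $x_i=0$ cannot simply be pulled back, because $z=\infty$ is the \emph{ramified} preimage of $x=0$ and the identity $\mat{M}^{\rm WKB}(z)=\mat{M}(X(z))$ of Theorem~\ref{defthm:MWKB}(II) is only an identity of formal series at $z=0$. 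But the step that actually proves the proposition---showing that the coefficient of $\beta^{2g-2+n}$ in $\sum_\sigma(-1)^\sigma\Tr\bigl(\prod_i\mat{M}^{\rm WKB}(z_{\sigma(i)})\bigr)$ grows no faster than $w^{-(LM-2)}$---is only announced, never carried out. At leading order in $\beta$ an entrywise estimate on $\mat{M}^{(0)}=\mat{V}\mat{F}\mat{V}^{-1}$ already suffices (its entries are $O(z^{LM-2})$, with no cancellation needed), but for $k\geq 1$ the recursive solution of \eqref{eq:adjODEz} can raise the pole order of $\mat{M}^{(k)}$ at $z=\infty$ as $k$ grows, and you supply no induction hypothesis, no mechanism, and no computation establishing the claimed cancellation of ``leading singular parts''; you do not even specify whether it is supposed to occur within a single trace, in the signed sum over $\sigma$, or only after combining the different $\beta$-orders that contribute to a fixed $(g,n)$. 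Since that cancellation \emph{is} the proposition, what you have is a credible research plan, not a proof.

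There is also a concrete error in your disposal of the case $n=1$. You propose to use $\tilde W_1(x)=\frac{1}{\beta x}\Tr\mat{M}(x)\mat{E}(x)$ and the $\beta$-positivity of $\mat{M}$ to check that $\tilde W_{g,1}$ is ``regular at $x=0$'' for $g\geq 1$. But regularity of the formal series $\tilde W_{g,1}(x)$ at $x=0$ controls the rational function $\tilde W_{g,1}(X(z))$ only near the \emph{unramified} preimage $z=0$; the proposition concerns its behavior at the other preimage $z=\infty$, where the $(LM-1)$-fold ramification sits and where the formal-series information does not apply. This is precisely the trap you yourself flag two paragraphs earlier (``the regularity of the $x$-space object does not transfer term by term''), so the $n=1$ case is not merely incomplete---it is argued at the wrong point, and would need the same $z=\infty$ analysis as the $n\geq 2$ cases, applied to $\Tr\,\mat{M}^{\rm WKB}(z)\mat{E}(X(z))$.
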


In order to state our topological recursion in a more compact manner, it is convenient to introduce the following auxiliary quantity.
\begin{definition}
\begin{multline}
{\mathcal W}_{g,n}(z,z',z_2,\dots,z_n)
:= \tilde \omega_{g-1,n+1}(z,z',z_2,\dots,z_n) \\
+ \sum'_{g_1+g_2=g,\, I_1\uplus I_2=\{z_2,\dots,z_n\}}
\tilde \omega_{g_1,1+|I_1|}(z,I_1)\tilde \omega_{g_2,1+|I_2|}(z',I_2), 
\end{multline}
where $\sum'$ means that we exclude the two terms $(g_1,I_1)=(0,\emptyset)$ and $(g_2,I_2)=(0,\emptyset)$.
\end{definition}

\subsection{Topological recursion for the $\tilde \omega_{g,n}$'s}
\label{top_rec_hurwitz}

As shown in \cite{ACEH2},  the $\tilde W_n$'s satisfy a set of equations called ``loop equations''  \cite{BEM, BBE}.
The unique formal $\beta$-power series solution of the loop equations, having poles only at the ramification points, is the one given by the topological recursion  relations (Theorem 4.5. of \cite{EO1}; see\cite{BEO} for another proof). 
\begin{theorem}\label{th:toprec}
The $\tilde\omega_{g,n}$'s satisfy the topological recursion relations:
	If all ramification points $a\in \mathcal{A}$ are simple, with local Galois involution $z\mapsto \sigma_a(z)$, we have for $(g,n)\neq (0,1),(0,2)$:
\begin{multline}
\tilde\omega_{g,n}(z_1,\dots,z_n) 
	= -\,\sum_{a\in \mathcal{A}} \Res_{z\to a} \Big[ \frac{dz_1}{z-z_1} - \frac{dz_1}{\sigma_a(z)-z_1}  \Big] \,\frac{{\mathcal W}_{g,n}(z,\sigma_a(z),z_2,\dots,z_n)\ }{2(Y(z)-Y(\sigma_a(z)))\,dX(z)} .
\end{multline}

If some ramification points have higher order, then the $\tilde\omega_{g,n}$s satisfy the higher order version of the topological recursion relation
defined in \cite{BouchE}. 

For $(g,n)=(0,2)$ we have
\begin{equation}
\tilde \omega_{0,2}(z_1,z_2) = \frac{dz_1 dz_2}{(z_1-z_2)^2} \ .
\end{equation}

\end{theorem}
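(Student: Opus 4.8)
The plan is to follow the standard two-step strategy of the Eynard--Orantin formalism: first establish that the forms $\tilde\omega_{g,n}$ satisfy the \emph{abstract loop equations} together with the correct pole structure, and then invoke the uniqueness theorem (Theorem 4.5 of \cite{EO1}, reproved in \cite{BEO}), which identifies the unique such solution with the output of the topological recursion. All the analytic input is packaged in the matrix $\mat{M}$: by Proposition \ref{prop:WgntoMWKB} the series $\sum_g \beta^{2g-2+n}\tilde W_{g,n}(X(z_1),\dots,X(z_n))$ is a cyclic sum of traces of products $\prod_i \mat{M}^{\rm WKB}(z_{\sigma(i)})$ divided by the differences $X(z_{\sigma(i)})-X(z_{\sigma(i+1)})$, and multiplying by $\prod_i X'(z_i)\,dz_i$ turns this into $\tilde\omega_{g,n}$. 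The two facts I would use throughout are that $\mat{M}^{\rm WKB}$ is a rank-$1$ projector ($\mat{M}^2=\mat{M}$, $\Tr\mat{M}=1$) and that it obeys the Lax-type adjoint ODE \eqref{eq:adjODEz}, $\beta\,\tfrac{zG(S(z))}{\sigma(z)}\,\tfrac{d}{dz}\mat{M}^{\rm WKB}=[\mat{E}(X(z)),\mat{M}^{\rm WKB}]$, whose leading term is the spectral projector $\mat{V}(z)\mat{F}\mat{V}(z)^{-1}$.

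First I would derive the \emph{linear loop equation}: for each simple ramification point $a\in\mathcal{A}$ with local Galois involution $\sigma_a$, the symmetrized combination $\tilde\omega_{g,n}(z,\cdot)+\tilde\omega_{g,n}(\sigma_a(z),\cdot)$ is holomorphic at $z=a$. The natural way to see this is to push the trace formula forward along $X$, i.e. sum over all $LM$ preimages $z^{(i)}(z)$ of a fixed $x=X(z)$ from \eqref{eq:defZordering}; because the $x$-space correlator $\tilde W_{g,n}$ of Proposition \ref{prop:WgntoM} is rational with poles only along the coincidence loci $x_i=x_j$ (the entries of $\mat{M}(x)$ being polynomial in $x$), this pushforward is holomorphic at every branch value. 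Localizing at a single simple branch point then yields the two-term cancellation. Next I would derive the \emph{quadratic loop equation}: the quadratic differential built from ${\mathcal W}_{g,n}(z,\sigma_a(z),\cdot)$ has at least a double zero at $a$ relative to $dX$, equivalently ${\mathcal W}_{g,n}(z,\sigma_a(z),\cdot)/\big((Y(z)-Y(\sigma_a(z)))\,dX(z)\big)$ is regular at $a$ except for the prescribed kernel singularity. Here the projector structure is essential: the product around a single cycle of the rank-$1$ matrices $\mat{M}(z^{(i)})$, combined with $\mat{M}^2=\mat{M}$ and the ODE, forces the leading Laurent coefficients at $a$ to assemble into the quadratic identity. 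I expect this to be the main obstacle, since controlling the $O(\beta^0)$ and $O(\beta)$ behaviour of $\Tr(\prod\mat{M}^{\rm WKB})$ as $z'\to\sigma_a(z)$ requires a careful local analysis of the eigenvector frame $\mat{V}(z)$ near the ramification point, where two eigenvalues of $\Delta_\pm$, hence two sheets, collide.

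With the loop equations in hand, I would invoke the pole structure already recorded just above the theorem: for $2g-2+n>0$, $\tilde\omega_{g,n}$ has poles only at the ramification points and no pole at $z_i=\infty$. This is exactly the hypothesis of the uniqueness theorem. The recursion formula then emerges from a residue computation: writing $\tilde\omega_{g,n}$ through the Cauchy kernel as a sum of local contributions at its poles (only the points of $\mathcal{A}$) and using the linear and quadratic loop equations to rewrite each contribution, one obtains precisely $-\sum_{a}\Res_{z\to a}[\cdots]\,{\mathcal W}_{g,n}(z,\sigma_a(z),\cdot)/(2(Y(z)-Y(\sigma_a(z)))\,dX(z))$, with the kernel $\big[\tfrac{dz_1}{z-z_1}-\tfrac{dz_1}{\sigma_a(z)-z_1}\big]$ arising as the antiderivative dictated by $\tilde\omega_{0,2}$. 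For higher-order ramification points one substitutes the generalized local recursion of \cite{BouchE} for the simple two-point Galois average.

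Finally, the base case $(g,n)=(0,2)$ is treated directly rather than recursively: pulling back $\tilde W_2(x_1,x_2)=-K(x_1,x_2)K(x_2,x_1)-(x_1-x_2)^{-2}$ from \eqref{eq:detConnected2} through $X(z)$ and adding the correction term in the definition of $\tilde\omega_{0,2}$, the double pole reorganizes into the Bergman kernel $dz_1\,dz_2/(z_1-z_2)^2$ of the rational (genus zero) spectral curve. This fixes the recursion kernel used in the general step and closes the argument.
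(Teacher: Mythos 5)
Your proposal takes essentially the same route as the paper: the paper's proof (with details deferred to the companion work \cite{ACEH2}) consists precisely of showing that the $\tilde W_n$'s satisfy loop equations derived from the differential-system/matrix $\mat{M}$ structure, combining this with the pole structure (poles only at ramification points, none at infinity), and invoking the uniqueness theorem (Theorem 4.5 of \cite{EO1}, reproved in \cite{BEO}) to identify the solution with the topological recursion, with \cite{BouchE} covering higher-order ramification points. Your sketch matches this strategy point for point, including the direct treatment of the $(0,2)$ base case.
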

A detailed proof of this main result is provided in ref. \cite{ACEH2}.


\section{Some consequences of topological recursion}

The first consequence of Theorem \ref{th:toprec} is that $\tilde \omega_{g,n}(z_1,\dots, z_n)$ can be computed explicitly for each $g$ and $n$, which in turn gives an explicit rational parametrization of the function $\tilde W_{g,n}(x_1,\dots,x_n)$.

Theorem \ref{th:toprec} has many other deep implications, of which a short glimpse is the following.

\begin{itemize}

\item{\bf Structure properties:}
When $2g-2+n>0$, the $\tilde F^{(G,\gamma)}_{g,n}(x_1,\dots,x_n)$ are not only formal series in $\gamma$, but are in fact algebraic functions of $\gamma$ and the $x_i$'s, and of all parameters ${\bf c}=(c_1, c_2, \dots)$ defining $G$, as well as the parameters ${\bf s}=\{s_1, s_2, \dots\}$.
The radius of convergence in $\gamma$ is given by the  ramification point nearest  the origin:
\begin{equation}
|\gamma x_i| < R
\quad , \quad
 R= \min_j \{ |X(a_j)|\,\,\, |\,\,X'(a_j)=0\}.
\end{equation}
The $\tilde F_{g,n}(X(z_1),\dots,X(z_n))$, with $2g-2+n>0$, are rational functions of the arguments $z_i$,  with poles only at ramification points, of order $\leq 3(2g-2+n)$. For $(g,n)=(0,1)$ explicit expressions are known involving logarithms, with the same radius of convergence, (See~\cite{EO1}.)

\item The first few $\tilde F_{g,n}$ are easy to compute using topological recursion. For instance 
\begin{eqnarray}
\tilde F_{0,3} 
&=& \frac{X(z_1)}{X'(z_1)S'(z_1) (z_1-z_2)(z_1-z_3)} +{\rm sym}\,(z_1\leftrightarrow z_2,z_3)\cr
	&& - \sum_{b,\,S'(b)=0} \frac{X(b)}{X'(b)S''(b)}\,\frac{1}{(z_1-b)(z_2-b)(z_3-b)}.
	\label{eq:F03}
\end{eqnarray}

(For the case $L=1$ and $M$ arbitrary, ~\eqref{eq:F03},  was conjectured by John Irving (private communication to GC) in 2013. He  also obtained the expression of $\tilde{W}_{0,2}$ via elementary combinatorial manipulations, and conjectured that all the $\tilde{W}_{0,n}$ are rational functions of the parameters $z_i$. We refer the reader to a paper in preparation \cite{Irving} for these conjectures (which are here proved, and in \cite{ACEH2}, in greater generality) and for the combinatorial approach to the computation of $\tilde{W}_{0,2}$. Other combinatorial approaches to the computation of generating functions for constellations
may be found in ~\cite{Ch} and the thesis~\cite{Fa}, which contains in particular an analog of the cut-and-join equation\cite[Theorem 4.2]{Fa}.)

\item Topological recursion implies a form-cycle duality formula for all deformations, for example
\be
\frac{\partial}{\partial s_k} \tilde F_{g,n}(x_1,\dots, x_n)
= \operatorname{Res}_{z'\to \infty} \, \tilde F_{g,n+1}(x_1,\dots, x_n,X(z')) k z'^{k-1}dz',
\ee
\begin{eqnarray}
\frac{\partial}{\partial g_k} \tilde F_{g,n}(x_1,\dots, x_n)
&=& \operatorname{Res}_{z'\to \infty}  \tilde F_{g,n+1}(x_1,\dots, x_n,X(z')) \, d\left(\frac{S'(z')}{G(z')}\right),\cr
\frac{\partial}{\partial c_i} \tilde F_{g,n}(x_1,\dots, x_n)
&=& \operatorname{Res}_{z'\to \infty}  \tilde F_{g,n+1} (x_1,\dots, x_n,X(z')) \, d\left(\frac{S'(z')z'}{1+c_i z'}\right) .
\end{eqnarray}

\item It has many other deep consequences, which are well beyond this short summary. 
 For example, topological recursion implies some ELSV--like formulas as in \cite{EO3}. 
 These will be developed in the  longer version of this  work \cite{ACEH2}.

\end{itemize}

 \bigskip
\noindent 
\small{ {\it Acknowledgements.} 
The work of A. Alexandrov was supported by IBS-R003-D1, by RFBR grants 14-01-00547 and 15-52-50041YaF, and by the European Research Council 
(QUASIFT grant agreement 677368). 
G.Chapuy acknowledges support from the European Research Council, grant ERC-2016-STG 716083 ``CombiTop'', from the Agence Nationale de la Recherche, grant ANR 12-JS02-001-01 ``Cartaplus'' and from the City of Paris, grant ``\'Emergences 2013, Combinatoire \`a Paris''. 
B. Eynard was supported by the ERC Starting Grant no. 335739 ``Quantum fields and knot homologies'' funded by the European Research Council under the European Union's 
Seventh Framework Programme,  and also partly supported by the ANR grant Quantact : ANR-16-CE40-0017.
The work of J. Harnad was partially supported by the Natural Sciences and Engineering Research Council of Canada (NSERC) and the Fonds de recherche du Qu\'ebec, 
Nature et technologies (FRQNT).  

Much of this work was  completed while G.C. was affiliated to the Centre de Recherches Math\'ematiques de Montr\'eal as part of the CNRS UMI program;
he wishes to thank the CRM for the very good working conditions provided.   B.E. also wishes to thank the Centre de recherches math\'ematiques, 
Montr\'eal, for the Aisenstadt Chair grant,  and the FQRNT grant from the Qu\'ebec government that partially supported this joint project.
A.A. and J.H. wish to thank the Institut des Hautes \'Etudes Scientifiques for their kind hospitality for extended periods in 2017 - 2018,
 when much of this work was completed. The authors would also like to thank the organizers of the ``Moduli spaces, integrable systems, and topological recursions''  
thematic semester program (June 2015 - Jan. 2016) at the CRM in Montr\'eal, where part of this work was realized 
and the organizers of the January - March, 2017  thematic semester ``Combinatorics and interactions''  
at the Institut Henri Poincar\'e, where they were participants during the completion of this work. 

G.C. also thanks John Irving for sharing his results and Wenjie Fang, whose PhD thesis contains several insights that have helped to inspire this work. 
 B.E. and A.A. thank Piotr Su\l kowski and S. Shadrin for useful discussions.}
 
 Work supported by the Natural Sciences and Engineering Research Council of Canada (NSERC), the 
Fonds de recherche du Qu\'ebec - Nature et technologies (FRQNT),  the European Research Council (ERC), 
and the Agence Nationale de la Recherche (ANR).


\newcommand{\arxiv}[1]{\href{http://arxiv.org/abs/#1}{arXiv:{#1}}}

\bigskip
\noindent

\end{document}